\newtheorem{thm}{\protect\theoremname}
\newtheorem{lem}[thm]{\protect\lemmaname}
\providecommand{\pgfsyspdfmark}[3]{}
\renewcommand{\lg}{\log}
\renewcommand{\epsilon}{\varepsilon}
\newcommand{\OoEps}{\frac{1}{\varepsilon}}
\newcommand{\polylg}{\operatorname{polylog}}
\newcommand{\poly}{\mathrm{poly}}
\newcommand{\polylog}{\polylg}
\newcommand{\updateweightedintervals}{(1/\varepsilon)^{O(1/\varepsilon)}\lg^{2}n \lg^{5}N \lg W}
\providecommand{\corollaryname}{Corollary}
\providecommand{\lemmaname}{Lemma}
\providecommand{\theoremname}{Theorem}
\DeclareRobustCommand{\alg}{%
	\ifmmode
		\mathcal{A}
	\else
		\text{$\mathcal{A}$}\xspace
	\fi
}
\global\long\def\ch{\mathrm{ch}}%
\global\long\def\DD{\mathcal{D}}%
\global\long\def\D{D}%
\global\long\def\S{\mathcal{S}}%
\global\long\def\T{\mathcal{T}}%
\global\long\def\C{\mathcal{C}}%
\global\long\def\L{\mathcal{L}}%
\global\long\def\G{\mathcal{G}}%
\global\long\def\K{\mathcal{K}}%
\global\long\def\R{\mathbb{R}}%
\global\long\def\Z{\mathbb{Z}}%
\global\long\def\N{\mathbb{N}}%
\global\long\def\OPT{\mathrm{OPT}}%
\global\long\def\ALG{\mathrm{DP}}%
\global\long\def\SOL{\mathrm{SOL}}%
\global\long\def\SUB{\mathrm{SUB}}%
\global\long\def\off{\mathrm{off}}%
\newcommand{\Tfine}{\T_{\text{fine}}}
\newcommand{\Zfine}{Z_{\text{fine}}}
\newcommand{\sparse}{\text{sparse}}
\newcommand{\dense}{\text{dense}}
\providecommand{\lemmaname}{Lemma}
\providecommand{\theoremname}{Theorem}
\title{Dynamic Approximate Maximum Independent Set of Intervals, Hypercubes and Hyperrectangles}
\titlerunning{Dynamic Independent Set of Intervals, Hypercubes and Hyperrectangles}
\author{Monika Henzinger}{Faculty of Computer Science, University of Vienna, Vienna, Austria}{monika.henzinger@univie.ac.at}{https://orcid.org/0000-0002-5008-6530}{The research leading to these results has received funding from the European Research Council under the European Community's Seventh Framework Programme (FP7/2007-2013) / ERC grant agreement No.~340506.}
\author{Stefan Neumann}{Faculty of Computer Science, University of Vienna, Vienna, Austria}{stefan.neumann@univie.ac.at}{}{Part of this work was done while visiting Brown University. Stefan Neumann gratefully acknowledges the financial support from the Doctoral Programme ``Vienna Graduate School on Computational Optimization'' which is funded by the Austrian Science Fund (FWF, project no.~W1260-N35). The research leading to these results has received funding from the European Research Council under the European Community's Seventh Framework Programme (FP7/2007-2013) / ERC grant agreement No.~340506.}
\author{Andreas Wiese}{Department of Industrial Engineering, Universidad de Chile, Santiago, Chile}{awiese@dii.uchile.cl}{}{Andreas Wiese was supported by the grant Fondecyt Regular 1170223.}
\authorrunning{M.~Henzinger, S.~Neumann and A.~Wiese}
\keywords{Dynamic algorithms, independent set, approximation algorithms, interval graphs, geometric intersection graphs}
\begin{document}

\maketitle

\begin{abstract}
Independent set is a fundamental problem in combinatorial optimization.  While
in general graphs the problem is essentially inapproximable, for many important
graph classes there are approximation algorithms known in the offline setting.
These graph classes include interval graphs and geometric intersection graphs,
where vertices correspond to intervals/geometric objects and an edge indicates
that the two corresponding objects intersect.  

We present \emph{dynamic} approximation algorithms for independent set of
intervals, hypercubes and hyperrectangles in $d$ dimensions.  They work in the
fully dynamic model where each update inserts or deletes a geometric object.
All our algorithms are deterministic and have worst-case update times that
are polylogarithmic for constant $d$ and $\epsilon>0$,
assuming that the coordinates of all input objects are in $[0, N]^d$
and each of their edges has length at least 1. We obtain the following results:
\begin{itemize}
	\item For weighted intervals, we maintain a $(1+\epsilon)$-approximate
		solution.
	\item For $d$-dimensional hypercubes we maintain a $(1+\epsilon)2^{d}$-approximate
		solution in the unweighted case and a $O(2^{d})$-approximate solution
		in the weighted case. Also, we show that for maintaining an unweighted
		$(1+\epsilon)$-approximate solution one needs polynomial update
		time for $d\ge2$ if the ETH holds.
	\item For weighted $d$-dimensional hyperrectangles we present a dynamic
		algorithm with approximation ratio $(1+\epsilon)\log^{d-1}N$.
\end{itemize}
\end{abstract}

\section{Introduction}

A fundamental problem in combinatorial optimization is the independent
set (IS) problem. Given an undirected graph $G=(V,E)$ with $n$ vertices
and $m$ edges, the goal is to select a set of nodes $V'\subseteq V$
of maximum cardinality such that no two vertices $u,v\in V'$ are
connected by an edge in $E$. In general graphs, IS cannot be approximated
within a factor of $n^{1-\epsilon}$ for any $\epsilon>0$, unless
$\mathsf{P=NP}$~\cite{zuck07}. However, there are many approximation
algorithms known for special cases of IS where much better approximation
ratios are possible or the problem is even polynomial-time solvable.
These cases include interval graphs and, more generally, geometric
intersection graphs.

In interval graphs each vertex corresponds to an interval on the real line and
there is an edge between two vertices if their corresponding intervals
intersect. Thus, an IS corresponds to a set of non-intersecting intervals on the
real line; the optimal solution can be computed in time
$O(n+m)$~\cite{frank75some} when the input is presented as an interval graph and
in time $O(n \lg n)$~\cite[Chapter~6.1]{kleinberg06algorithm} when the intervals
themselves form the input (but not their corresponding graph). Both algorithms work even in the weighted
case where each interval has a weight and the objective is to maximize the total
weight of the selected intervals.

When generalizing this problem to higher dimensions, the input consists
of axis-parallel $d$-dimensional hypercubes or hyperrectangles and
the goal is to find a set of non-intersecting hypercubes or hyperrectangles
of maximum cardinality or weight. This is equivalent to solving
IS in the \emph{geometric intersection graph} of these objects which
has one (weighted) vertex for each input object and two vertices are
adjacent if their corresponding objects intersect. This problem is
$\mathsf{NP}$-hard already for unweighted unit squares~\cite{FOWLER1981},
but if all input objects are weighted hypercubes then it admits a
PTAS for any constant dimension~$d$~\cite{chan03polynomial,erlebach2005polynomial}.
For hyperrectangles there is a $O((\log n)^{d-2}\log\log n)$-approximation
algorithm in the unweighted case~\cite{ChalermsookC09} and a $O((\log n)^{d-1}/\log\log n)$-approximation
algorithm in the weighted case~\cite{Chan2012,ChalermsookC09}. IS
of (hyper-)cubes and (hyper-)rectangles has many applications, e.g.,
in map labelling~\cite{agarwal98label,map-labelling-ESA}, chip manufacturing~\cite{hochbaum1985approximation},
or data mining~\cite{khanna1998approximating}. Therefore, approximation
algorithms for these problems have been extensively studied, e.g.,
\cite{Chan2012,ChalermsookC09,agarwal98label,JACM-ISSP,chan2004note,Chuzhoy2016}.

All previously mentioned algorithms work in the static offline setting.
However, it is a natural question to study IS in the \emph{dynamic} setting,
i.e., where (hyper-)rectangles appear or disappear, and
one seeks to maintain a good IS while spending only little time after
each change of the graph. The algorithms above are not suitable for
this purpose since they are based on dynamic programs in which $n^{\Omega(1/\epsilon)}$
many sub-solutions might change after an update or they solve linear programs
for the entire input. %
For general graphs, there are several results for maintaining a \emph{maximal}
IS dynamically~\cite{AssadiOSS18,assadi2019fully,DBLP:journals/corr/abs-1804-01823,DBLP:journals/corr/abs-1804-08908,monemizadeh19dynamic,chechik19fully,behnezhad19fully},
i.e., a set $V'\subseteq V$ such that $V'\cup\{v\}$ is not an IS
for any $v\in V\setminus V'$. However, these algorithms do not imply
good approximation ratios for the geometric setting we study: Already
in unweighted interval graphs, a maximal IS can be by a factor $\Omega(n)$
smaller than the maximum IS.
For dynamic IS of intervals, Gavruskin et al.~\cite{gavruskin15dynamic} showed
how to maintain an exact maximum IS with polylogarithmic update time in the
special case when no interval is fully contained in any another interval.

\textbf{Our contributions.} In this paper, we present dynamic
algorithms that maintain an approximate IS in the geometric intersection
graph for three different types of geometric objects: intervals,
hypercubes and hyperrectangles. We assume throughout the paper that
the given objects are axis-parallel and contained in the space $[0,N]^{d}$,
that we are given the value $N$ in advance, and that each edge of
an input object has length at least 1 and at most~$N$. We study the fully dynamic
setting where in each update an input object is inserted or deleted.
Note that this corresponds to inserting and deleting \emph{vertices}
of the corresponding intersection graph. In particular, when a vertex
is inserted/deleted then potentially $\Omega(n)$ edges might be
inserted/deleted, i.e.,
there might be more edge changes per operation than in the standard
dynamic graph model in which each update can only insert or delete
a single edge.

(1)~For independent set in weighted interval graphs we present a dynamic
$(1+\epsilon)$-approximation algorithm. For weighted $d$-dimensional hypercubes
our dynamic algorithm maintains a $(4+\epsilon)2^{d}$-approximate solution; in
the case of unweighted $d$-dimensional hypercubes we obtain an approximation
ratio of $(1+\epsilon)2^{d}$.  Thus, for constant $d$ we achieve a constant
approximation ratio. Furthermore, for weighted $d$-dimensional hyperrectangles
we obtain a dynamic algorithm with approximation ratio of
$(1+\varepsilon)\lg^{d-1}N$.

Our algorithms are deterministic with worst-case update times that
are polylogarithmic in $n$, $N$, and $W$, where $W$ is the maximum
weight of any interval or hypercube, for constant $d$ and $\varepsilon$;
we also show how to obtain faster update times using randomized algorithms
that compute good solutions with high probability. In each studied
setting our algorithms can return the computed IS $I$ in time $O_{d,\varepsilon}(|I|\cdot\mathrm{poly}(\log n,\log N))$,
where $|I|$ denotes the cardinality of~$I$ and the $O_{d,\varepsilon}(\cdot)$
notation hides
factors which only depend on $d$ and $\varepsilon$. \emph{Up to a $(1+\epsilon)$-factor
our approximation ratios match those of the best known near-linear
time offline approximation algorithms for the respective cases} (with
ratios of $2^{d}$ and $O(2^{d})$ via greedy algorithms for unweighted
and weighted hypercubes and $\lg^{d-1}N$ for hyperrectangles~\cite{agarwal98label}).
See Table~\ref{table:results} for a summary of our algorithms.

(2)~Apart from the comparison with the static algorithm we show two lower
bounds: We prove that one cannot maintain a $(1+\epsilon)$-approximate
IS of unweighted hypercubes in $d\ge2$ dimensions with update time
$n^{O((1/\epsilon)^{1-\delta})}$ for any $\delta>0$ (so even with
polynomial instead of polylogarithmic update time), unless the Exponential
Time Hypothesis fails. Also, we show that maintaining a maximum weight
IS in an interval graph requires $\Omega(\log N/\lg\lg N)$ amortized
update time.

\begin{table}
	\begin{centering}
	\begin{tabular}{|l|c|c|}
	\cline{2-3} \cline{3-3} 
	\multicolumn{1}{l|}{} & \multirow{1}{*}{Approximation} & Worst-case update time\tabularnewline
	\multicolumn{1}{l|}{} & \multirow{1}{*}{ratio} & \tabularnewline
	\hline 
	Unweighted intervals  & $1+\epsilon$  & $O_{\epsilon}(1)\log^{2}n\log^{2}N$\tabularnewline
	\cline{2-3}
	 & $1$  & $\Omega(\lg N/\lg\lg N)$\tabularnewline
	\hline 
	Weighted intervals  & $1+\epsilon$  & $O_{\epsilon}(1)\lg^{2}n\lg^{5}N\lg W$ \tabularnewline
	\hline 
	Unweighted $d$-dimensional hypercubes  & $(1+\epsilon)2^{d}$  & $O_{d,\epsilon}(1)\log^{2d+1}n\log^{2d+1}N$\tabularnewline
	\cline{2-3} 
	 & $1+\epsilon$  & $n^{(1/\varepsilon)^{\Omega(1)}}$\tabularnewline
	\hline 
	Weighted $d$-dimensional hypercubes  & $(4+\epsilon)2^{d}$  & $O_{d,\epsilon}(1)\log^{2d-1}n\log^{2d+1}N\log W$\tabularnewline
	\hline 
	Weighted $d$-dimensional hyperrectangles  & $(1+\epsilon)\log^{d-1}N$  & $O_{d,\epsilon}(1)\lg^{2}n\log^{5}N\lg W$ \tabularnewline
	\hline 
	\end{tabular}
	\par\end{centering}
	\caption{
		Summary of our dynamic approximation algorithms and lower bounds. All
		algorithms are deterministic and work in the fully dynamic setting where
		in each update one interval/hypercube is inserted or deleted. We assume
		that all input objects are contained in $[0,N]^{d}$ and have weights in
		$[1,W]$; we do \emph{not} assume that the input objects have
		integer-coordinates.  Here, we write $O_{\varepsilon}(1)$ and
		$O_{d,\varepsilon}(1)$ to hide terms which only depend on $d$ and
		$\varepsilon$.}
	\label{table:results}
\end{table}

\textbf{Techniques.} Our main obstacle is that the maximum IS is a \emph{global}
property, i.e., when the input changes slightly, e.g., a single interval is
inserted or deleted, then it can cause a change of the optimal IS which
propagates through the entire instance (see Figure~\ref{fig:toy-example-1}).
Even worse, there are instances in which \emph{any }solution \emph{with a
non-trivial approximation guarantee} requires $\Omega(n)$ changes after an
update (see Figure~\ref{fig:toy-example-3}).

\begin{figure*}[t!]
  \includegraphics[width=14cm]{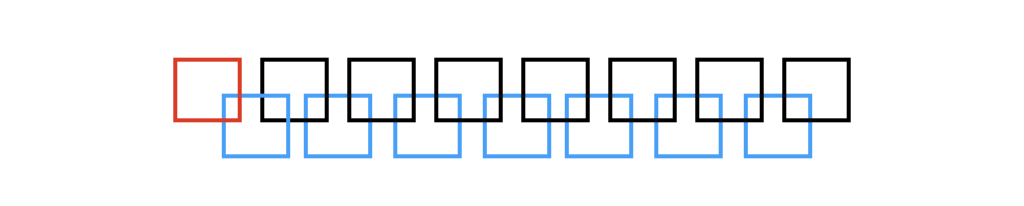}
  \caption{An instance of unweighted IS of intervals, i.e., all intervals have
		weight $1$. Observe that before inserting the red interval at the left,
		the solution consisting of the blue intervals in the bottom row is optimal. 
		However, after inserting the red interval, the optimal solution is
		unique and consists of the red interval together with all black
		intervals in the top row.}
  \label{fig:toy-example-1}
\end{figure*}

\begin{figure*}[t!]
  \includegraphics[width=14cm]{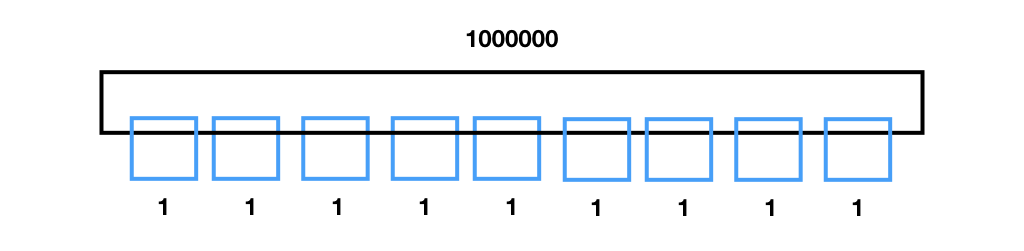}
  \caption{An instance of weighted IS of intervals. Note that when the
		large black interval with weight $1000000$ is present, any solution with
		non-trivial approximation ratio must contain the black interval.
		However, when the black interval is not present, the solution must
		contain many small blue intervals. Thus, inserting or deleting the black
		interval requires $\Omega(n)$ changes to the solution.}
  \label{fig:toy-example-3}
\end{figure*}

To limit the propagation effects, our algorithms for intervals
and hypercubes use a hierarchical grid decomposition. We partition
the space $[0,N]^{d}$ recursively into equally-sized grid cells with
$\log N$ levels, halving the edge length in each dimension of each cell when going
from one level to the next (similar to the quad-tree decomposition
in~\cite{arora1996polynomial}). Thus, each grid cell $Q$ has $2^{d}$
\emph{children }cells which are the cells of the respective next level
that are contained in $Q$. Also, each input object $C$ (i.e., interval
or hypercube) is contained in at most $\log N$ grid cells and it is \emph{assigned} to the grid level $\ell(C)$ in which the size
of the cells is ``comparable'' to the size of $C$. When an object $C$ is
inserted or deleted, we recompute the solution for each of the $\log N$
grid cells containing $C$, in a bottom-up manner. More precisely,
for each such cell $Q$ we decide which of the hypercubes assigned
to it we add to our solution, based on the solutions of the
children of $Q$. Thus, a change of the input does \emph{not
}propagate through our entire solution but only affects $\log N$
grid cells and the hypercubes assigned to them.

Also, we do not store the computed solution explicitly as this might require
$\Omega(n)$ changes after each update.
Instead, we store it \emph{implicitly.} In particular, in each
grid cell $Q$ we store a solution only consisting of objects assigned to $Q$ and
pointers to the solutions of children cells. Finally, at query time we output only
those objects that are contained in a solution of a cell $Q$ and
which do not overlap with an object in the solution of a cell of higher
level. In this way, if a long interval with large weight appears or
disappears, only the cell corresponding to the interval needs to be
updated, the other changes are done implicitly.

Another challenge is to design an algorithm that, given a cell
$Q$ and the solutions for the children cells of $Q$, computes an
approximate solution for $Q$ in time $\mathrm{poly}(\log n,\log N)$.
In such a small running time, we cannot afford to iterate over
all input objects assigned to $Q$. We now explain in more detail how our
algorithm overcome this obstacle.

\emph{Weighted hypercubes.} Let us first consider our $(4+\epsilon)2^{d}$-approximation
algorithm for weighted hypercubes. Intuitively, we consider the hypercubes
ordered non-decreasingly by size and add a hypercube $C$ to the IS
if the weight of $C$ is at least twice the total weight of all hypercubes
in the current IS overlapping with $C$. We then remove all hypercubes
in the solution that overlap with $C$.

To implement this algorithm in polylogarithmic time, we need to make
multiple adjustments. First, for each cell $Q$ we maintain a range
counting data structure $P(Q)$ which contains the (weighted) vertices
of all hypercubes that were previously selected in the IS solutions of children cells $Q'\subseteq Q$.
We will use $P(Q)$ to estimate the weight of hypercubes that
a considered hypercube $C$ overlaps with. Second, we use $P(Q)$
to construct an {\em auxiliary grid} within $Q$. The auxiliary
grid is defined such that in each dimension the grid contains $O_{d,\varepsilon}(\polylg N)$
grid slices; thus, there are $(\log N)^{O_{d,\varepsilon}(1)}$ subcells of $Q$
induced by the auxiliary grid. Third, we cannot afford to iterate over all hypercubes
contained in $Q$ to find the smallest hypercube $C$ that has at
least twice the weight $w'$ of the hypercubes in the current solution
that overlap with $C$. Instead, we iterate over all subcells $S\subseteq Q$
which are induced by the auxiliary grid and look for a hypercube
$C$ of large weight within $S$; we show that the total weight of
the points in $S\cap P(Q)$ is a sufficiently good approximation of $w'$.
If we find a hypercube $C$ with these properties, we add $C$ to
the current solution for $Q$, add the vertices of $C$ to $P(Q)$
and adjust the auxiliary grid accordingly. In this way, we need to
check only $(\log N)^{O_{d,\epsilon}(1)}$ subcells of $Q$ which we
can do in polylogarithmic time, rather than iterating over all hypercubes
assigned to $Q$. We ensure that for each cell $Q$ we need to repeat
this process only a polylogarithmic number of times. To show the approximation
bound we use a novel charging argument based on the points in $P(Q)$.
We show that the total weight of the points stored in $P(Q)$ estimates
the weight of the optimal solution for $Q$ up to a constant factor.
We use this to show that our computed solution is a $(4+\epsilon)2^{d}$-approximation.

\emph{Weighted intervals.} Next, we sketch our dynamic $(1+\varepsilon)$-approximation algorithm
for weighted IS of intervals. A greedy approach would be to build the solution
such that the intervals are considered in increasing order of their
lengths and then for each interval to decide whether we want to select
it and whether we want to remove some previously selected intervals
to make space for it. However, this cannot yield a $(1+\epsilon)$-approximate
solution. There are examples in which one can choose only one out
of multiple overlapping short intervals and the wrong choice implies that
one cannot obtain a $(1+\varepsilon)$-approximation together with
the long intervals that are considered later (see Figure~\ref{fig:toy-example}). However,
in these examples the optimal solution (say for a cell $Q$) consists
of only $O_{\epsilon}(1)$ intervals. Therefore, we show that in this
case we can compute a $(1+\epsilon)$-approximate solution in time
$O_{\epsilon}(\log^{2}n)$ by guessing the rounded weights of the intervals
in the optimal solution, guessing the order
of the intervals with these weights, and then selecting the intervals
greedily according to this order. On the other hand, if the optimal
solution for a cell $Q$ contains $\Omega_{\epsilon}(1)$ many intervals
with similar weights then we can take the union of the previously
computed solutions for the two children cells of $Q$. This sacrifies
at most one interval in the optimal solution for $Q$ that overlaps
with both children cells of $Q$ and we can charge this interval to
the $\Omega_{\epsilon}(1)$ intervals in the solutions for the children
cells of $Q$.

\begin{figure*}[t!]
  \includegraphics[width=14cm]{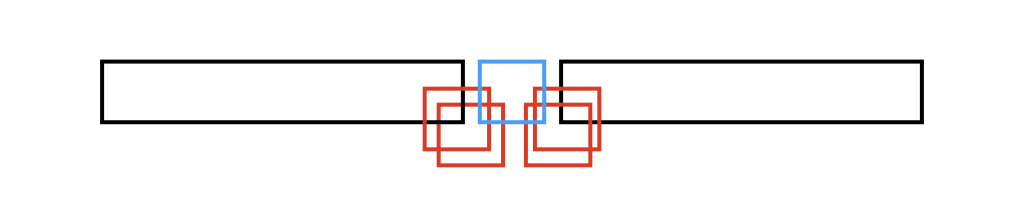}
  \caption{An instance of unweighted IS of intervals, i.e., all intervals have
		weight $1$. Observe that the optimal solution has size $3$ and consists of
		the small blue interval and the two large black intervals at the top. If
		an algorithm decides to pick any of the small red intervals, its
		solution can have size at most $2$.}
  \label{fig:toy-example}
\end{figure*}

Our algorithm interpolates between these two extreme cases. To this
end, we run the previously described $O(1)$-approximation algorithm
for hypercubes as a subroutine and use it to split each cell $Q$
into \emph{segments}, guided by the set $P(Q)$ above. Then we use that
for each set $S\subseteq Q$ the weight of $S\cap P(Q)$ approximates
the weight of the optimal solutions of intervals contained in $S$ within a
constant factor.
This is crucial for some of our charging arguments in which we show that 
the intervals contained in some sets $S$ can be ignored.
We show that for each cell $Q$ there is a $(1+\epsilon)$-approximate
solution in which $Q$ is partitioned into segments such that each
of them is either \emph{dense }or \emph{sparse. }Each dense segment
contains many intervals of the optimal solution and it is contained
in one of the children cells of $Q$. Therefore, we can
copy the previously computed solution for the respective child of
$Q$. Each sparse segment only contains $O_{\epsilon}(1)$ intervals
and hence we can compute its solution directly using guesses as described above.
In each level, this incurs an error and we use several involved charging
arguments to ensure that this error does not accumulate over the $\log N$
levels, but that instead it stays bounded by $1+\epsilon$.

\textbf{Other related work.}
Emek et al.~\cite{emek16space}, Cabello and
P{é}rez{-}Lantero~\cite{cabello17interval} and Bakshi et
al.~\cite{bakshi19weighted} study IS of intervals in the streaming model and
obtain algorithms with sublinear space usage.
In~\cite{emek16space,cabello17interval} insertion-only streams of unweighted
intervals are studied. They present algorithms which are
$(3/2+\varepsilon)$-approximate for unit length intervals and
$(2+\varepsilon)$-approximate for arbitrary-length intervals; they also provide
matching lower bounds.  Bakshi et al.~\cite{bakshi19weighted}~study turnstile
streams in which intervals can be inserted and deleted. They obtain algorithms
which are $(2+\varepsilon)$-approximate for weighted unit length intervals and
a $O(\lg n)$-approximate for unweighted arbitrary length intervals; they also
prove matching lower bounds.

In two dimensions, Agarwal et al.~\cite{agarwal98label} presented a static
algorithm which computes $O(\log n)$~approximation of the maximum IS of $n$
arbitrary axis-parallel rectangles in time $O(n\log n)$. They also show how to
compute a $(1+1/k)$-approximation of unit-height rectangles in time
$O(n\log n+n^{2k-1})$ for any integer $k\ge1$.

\textbf{Problem definition and notation.}
We assume that we obtain a set $\C=\{C_{1},\dots,C_{n}\}$ of $d$-dimensional
hyperrectangles in the space $[0,N]^{d}$ for some global value $N\in\R$.
Each hyperrectangle $C_{i}\in\C$ is characterized by coordinates
$x_{i}^{(1)},y_{i}^{(1)},\dots,x_{i}^{(d)},y_{i}^{(d)}\in[0,N]$ such
that $C_{i}:=(x_{i}^{(1)},y_{i}^{(1)})\times\dots\times(x_{i}^{(d)},y_{i}^{(d)})$
and a weight $w_{i}\in[1,W]$ for some global value $W$; we do \emph{not} assume that
the coordinates of the input objects are integer-valued.
We assume
that $1\leq y_{i}^{(j)}-x_{i}^{(j)}\leq N$ for each $j\in[d]$. If $C_{i}$
is a hypercube then we define $s_{i}$ such that $s_{i}=y_{i}^{(j)}-x_{i}^{(j)}$
for each dimension $j$. Two hypercubes $C_{i},C_{i'}\in\C$ with
$i\ne i'$ are \emph{independent} if $C_{i}\cap C_{i'}=\emptyset$.
Note that we defined the hypercubes as open sets and, hence, two dependent hypercubes
cannot overlap in only a single point.
A set of hyperrectangles $\C'\subseteq\C$ is an \emph{independent
set (IS)} if each pair of hypercubes in $\C'$ is independent. The \emph{maximum
IS} problem is to find an IS $\C'\subseteq \C$, that maximizes
$w(\C'):=\sum_{C_{i}\in\C'}w_{i}$.

Due to space constraints we present some of the results in the appendix; missing
proofs can be found in Appendix~\ref{sec:omitted}.
Appendix~\ref{sec:overview-appendix} gives an overview over the contents of
the appendix.

\section{Hierarchical Grid Decomposition}
\label{sec:framework}

We describe a hierarchical grid decomposition that we use for all
our algorithms for hypercubes (for any $d$), that is similar to \cite{arora1996polynomial}. It is based on a hierarchical grid $\G$ over the space $[0,N]^{d}$
where we assume w.l.o.g.~that $N$ is a power of 2 and $N$ is an
upper bound on the coordinates of every object in each dimension.
The grid $\G$ has $\log N$ \emph{levels. }In each level, the space
$[0,N]^{d}$ is divided into \emph{cells}; the union of the cells
from each level spans the whole space. There is one grid cell of level
0 that equals to the whole space $[0,N]^{d}$. Essentially, each grid
cell of a level $\ell<\log N$ contains $2^{d}$ grid cells of level
$\ell+1$. We \emph{assign} the input hypercubes to the grid cells.
In particular, for a grid cell $Q\in\G$ we assign a set $\C'(Q)\subseteq\C$ to
$Q$ which are all input hypercubes that are contained in $Q$ and
whose side length is a $\Theta(\epsilon/d)$-fraction of the side
length of $Q$ (we will make this formal later). This ensures the helpful
property that any
IS consisting only of hypercubes in $\C'(Q)$ has size at most $O\left((\frac{d}{\epsilon})^{d}\right)$.
For each cell $Q$ we define $\C(Q):=\bigcup_{Q':Q'\subseteq Q}\C'(Q)$
which are all hypercubes contained in $Q$. One subtlety
is that there can be input hypercubes that are not assigned to any
grid cell, e.g., hypercubes that are very small but overlap more than
one very large grid cell. Therefore, we shift the grid by some offset
$a\in[0,N]$ in each dimension which ensures that those hypercubes
are negligible.

Formally, let $\epsilon>0$ such that $1/\epsilon$ is an integer
and a power of 2. For each $\ell\in\{0,\dots,\log N\}$ let $\G_{\ell}$
denote the set of grid cells of level $\ell$ defined as $Q_{\ell,k}:=[0,N]^{d}\cap\prod_{j=1}^{d}[a+k^{(j)}\cdot N/2^{\ell-1},a+(k^{(j)}+1)\cdot N/2^{\ell-1}]$
for each $k=(k^{(1)},\dots,k^{(d)})\in\Z^{d}$. Then $\G_{0}$ consists
of only one cell $[0,N]^{d}=:Q^{*}$.We define $\G:=\bigcup_{\ell=0}^{\log N}\G_{\ell}$.
For a grid cell $Q\in\G$, we let $\ell(Q)$ denote the level of $Q$
in $\G$. Note that for each cell $Q$ of level $\ell(Q)<\lg N$,
there are at most $2^{d}$ grid cells $Q_{i}$ of level $\ell(Q_{i})=\ell(Q)+1$
and that are contained in $Q$, i.e., such that $Q_{i}\subseteq Q$.
We call the latter cells the \emph{children} of $Q$ and denote them
by $\ch(Q)$. Informally, a hypercube $C_{i}$ has \emph{level $\ell$} if 
$s_{i}$ is within a $\Theta(\epsilon/d)$-fraction of the side length of
grid cells of level $\ell$; formally, $C_{i}$ has level \emph{$\ell$}
if $s_{i}\in[\epsilon N/(d2^{\ell-1}),2\cdot\epsilon N/(d2^{\ell-1}))$
for $\ell=1,\dots,\lg N$ and $s_{i}\in[\epsilon N/d,N]$ for $\ell=0$.
For each $C\in\C$ denote by $\ell(C)$ the level of $C$. We \emph{assign}
a hypercube $C$ to a cell $Q$ if $C_{i}\subseteq Q$ and $\ell(C)=\ell(Q)$;
the set of all these hypercubes for a cell $C$ is defined by $\C'(Q):=\{C_{i}\in\C|C_{i}\subseteq Q\wedge\ell(C)=\ell(Q)\}$.
For each grid cell $Q$ we define $\C(Q)$ to be the set of all hypercubes
contained in $Q$ that are assigned to $Q$ or to grid cells contained
in $Q$, i.e., $\C(Q):=\bigcup_{Q':Q'\subseteq Q}\C'(Q)$.

For each cell $Q$, we partition the hypercubes in $\C(Q)$ and $\C'(Q)$
based on their weights in powers of $1+\varepsilon$. For each $k\in\Z$
we define $\C_{k}:=\{C_{i}\in\C:w_{i}\in[(1+\epsilon)^{k},(1+\epsilon)^{k+1})\}$
and for each grid cell $Q$ we define $\C_{k}(Q):=\C_{k}\cap\C(Q)$
and $\C'_{k}(Q):=\C{}_{k}\cap\C'(Q)$. Note that $\C_{k}=\emptyset$
if $k<0$ or $k>\log_{1+\epsilon}W$.

In the next lemma we prove that there is a value for the offset $a$
such that there is a $(1+\epsilon)$-approximate solution $\OPT'$
that is \emph{grid-aligned, }i.e., for each $C\in\OPT'$ there is
a grid cell $Q$ in the resulting grid for $a$ such that $C\in\C'(Q)$.

\global\long\def\off{\mathrm{off}}%

\begin{lem}
\label{lem:offset} In time $(d/\epsilon)^{O(1/\epsilon)}\log N$
we can compute a set $\off(\epsilon)$ with $\left|\off(\epsilon)\right|\le(d/\epsilon)^{O(1/\epsilon)}$
that is independent of the input objects $\C$ and that contains an
offset $a\in\off(\epsilon)$ for the grid for which the optimal grid-aligned
solution $\OPT'$ satisfies that $w(\OPT')\ge(1-O(\epsilon))w(\OPT)$.
If we draw the offset $a$ uniformly at random from $\off(\epsilon)$, then
$\mathbb{E}[w(\OPT')]\ge(1-O(\epsilon))w(\OPT)$ and
$w(\OPT')\ge(1-O(\epsilon))w(\OPT)$ with constant
probability.
\end{lem}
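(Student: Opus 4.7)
The approach is a standard hierarchical random-shifting argument in the spirit of Arora's PTAS, followed by derandomization to obtain a fixed finite set of offsets.

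First, I would analyze the continuous case: let $a$ be drawn uniformly from $[0,N]$ and fix any optimal solution $\OPT$. For $C \in \OPT$ at level $\ell = \ell(C)$, the level-$\ell$ grid has period $P_\ell := N/2^{\ell-1}$ in each dimension, and since $P_\ell$ divides $N$, the value $a \bmod P_\ell$ is uniform on $[0,P_\ell)$. In each coordinate $j$, the projection $(x_i^{(j)}, y_i^{(j)})$ has length at most $2\epsilon N/(d 2^{\ell-1}) = (2\epsilon/d)\, P_\ell$, so the probability that a level-$\ell$ grid line crosses this projection is at most $2\epsilon/d$. A union bound over $d$ dimensions shows that $C$ fails to be contained in any level-$\ell$ cell (and thus fails to be grid-aligned) with probability at most $2\epsilon$.

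Second, let $\OPT'$ denote the grid-aligned subset of $\OPT$, namely those $C \in \OPT$ with $C \in \C'(Q)$ for some $Q \in \G_{\ell(C)}$. By linearity of expectation, the expected weight lost is at most $2\epsilon\, w(\OPT)$, so $\mathbb{E}[w(\OPT')] \ge (1-2\epsilon) w(\OPT) = (1-O(\epsilon))\,w(\OPT)$. Applying Markov's inequality to the nonnegative random variable $w(\OPT \setminus \OPT')$ yields that $w(\OPT') \ge (1 - O(\epsilon))\,w(\OPT)$ with constant probability.

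Third, I would derandomize by replacing the continuous distribution on $[0,N]$ with a uniform distribution over a finite set $\off(\epsilon)$. The construction must ensure that, for every level $\ell$ simultaneously, the distribution of $a \bmod P_\ell$ induced by $\off(\epsilon)$ is sufficiently close to uniform on $[0,P_\ell)$ that the per-level crossing probability remains $O(\epsilon/d)$. A concrete choice is to let $\off(\epsilon)$ consist of the $O(1/\epsilon)$-fold sums of offsets, where each summand is chosen from a grid of $O(d/\epsilon)$ positions at a different dyadic scale; this yields $|\off(\epsilon)| \le (d/\epsilon)^{O(1/\epsilon)}$. Enumerating the set and writing it down takes time $(d/\epsilon)^{O(1/\epsilon)} \log N$, where the $\log N$ factor accounts for representing each offset. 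The existence of a single deterministic $a \in \off(\epsilon)$ achieving loss $\le O(\epsilon)\,w(\OPT)$ then follows by averaging.

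The main obstacle is the derandomization step: designing $\off(\epsilon)$ so that the same discrete offset simultaneously behaves nearly uniformly modulo $P_\ell$ at all $\log N$ scales of the hierarchy while keeping $|\off(\epsilon)|$ independent of $N$; the multi-scale construction above is the natural way to decouple the scales and make the analysis go through. Everything else reduces to the union bound over dimensions, linearity of expectation, and Markov's inequality.
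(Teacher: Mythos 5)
Your continuous-offset analysis is correct and matches the paper's underlying reasoning: each hypercube at level $\ell$ has side length $s_i < (2\epsilon/d)\,P_\ell$ where $P_\ell = N/2^{\ell-1}$ is the period of the level-$\ell$ grid, so a uniformly random shift cuts it with probability at most $2\epsilon$ by a union bound over the $d$ dimensions, and linearity of expectation plus Markov then give the two probabilistic conclusions.

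The derandomization, however, contains a genuine gap, and you flag it yourself as ``the main obstacle'' without actually resolving it. Your sketch --- ``$O(1/\epsilon)$-fold sums of offsets, each summand from a grid at a different dyadic scale'' --- does not say which $O(1/\epsilon)$ of the $\Theta(\log N)$ dyadic scales to use, nor why a sum of only $O(1/\epsilon)$ terms behaves approximately uniformly modulo $P_\ell$ for \emph{every} level $\ell$. Naively one would need one summand per level, giving $|\off(\epsilon)| = (d/\epsilon)^{\Theta(\log N)}$, which depends on $N$ --- precisely what you must avoid.

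The paper closes this gap with an extra idea absent from your proposal: a shifting step on the \emph{levels} themselves, not only on the spatial offset. Set $K:=\log(2d/\epsilon)$; for an offset $a'\in\{0,\dots,1/\epsilon-1\}$, delete from $\OPT$ every hypercube whose level is congruent to one of $K$ consecutive residues (shifted by $a'K$) modulo $K/\epsilon$. Each hypercube is deleted for exactly one choice of $a'$, so by averaging some $a'$ discards weight at most $\epsilon\cdot w(\OPT)$. The surviving levels now fall into groups whose grid periods differ pairwise by a factor of at least $2^{K}=2d/\epsilon$. This separation is exactly what allows a \emph{single} arithmetic progression $\off(\epsilon)=\{a_r = r\cdot\sum_{j}\delta_j : r\in\{0,\dots,d2^{K/\epsilon-K+1}/\epsilon-1\}\}$, of size $(d/\epsilon)^{O(1/\epsilon)}$, to work at all surviving levels simultaneously: for a hypercube in group $j$, the terms $\delta_{j'}$ with $j'<j$ are integer multiples of its grid period $P_\ell$ (so they vanish modulo $P_\ell$), while the terms with $j'>j$ sum to at most $O(\epsilon/d)\,s_i$ (so they perturb the cut boundary negligibly). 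Without the level-deletion step, neither property holds --- contributions from ``nearby'' scales are neither exactly periodic nor negligible --- so the single free parameter $r$ cannot make the offset behave uniformly at all levels, and $|\off(\epsilon)|$ cannot be bounded independently of $N$.
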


For the deterministic results in this paper we run our algorithms
for each choice of $a\in\off(\epsilon)$ in parallel and at the end
we output the solution with maximum weight over all choices of $a$.
For our randomized results we choose $O(\log N)$ offsets $a\in\off(\epsilon)$
uniformly at random and hence there exists a grid-aligned solution
$\OPT'$ with $w(\OPT')\ge(1-O(\epsilon))w(\OPT)$ with high probability
(i.e., with probability at least $1-\left(1/N\right)^{O(1)}$).
\begin{lem}
\label{lem:properties-grid} Each grid cell $Q\in\G$ has a volume
of $(N/2^{\ell(Q)-1})^{d}$ and can contain at most $(d/\epsilon)^{d}$
independent hypercubes from $\C'(Q)$. Also, each hypercube $C_{i}\in\C$
is contained in $\C'(Q')$ for at most one grid cell $Q'$ and in
$\C(Q')$ for at most $\log N$ cells $Q'$. 
\end{lem}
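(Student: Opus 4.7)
All three claims are essentially definition-unpacking, so my plan is to dispatch each part by invoking the grid definitions directly; no charging, induction, or probabilistic argument is needed.

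For the volume claim, I would simply read off the definition $Q_{\ell,k}:=[0,N]^{d}\cap\prod_{j=1}^{d}[a+k^{(j)}\cdot N/2^{\ell-1},a+(k^{(j)}+1)\cdot N/2^{\ell-1}]$: each of the $d$ factors has length at most $N/2^{\ell(Q)-1}$, and multiplying gives the stated bound $(N/2^{\ell(Q)-1})^d$.

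For the packing bound on $\C'(Q)$, I would combine the level definition with a standard volume packing argument. Every $C_i\in\C'(Q)$ satisfies $\ell(C_i)=\ell(Q)$, so by the definition of level, $s_i\ge \epsilon N/(d\cdot 2^{\ell(Q)-1})$; equivalently, the side of $C_i$ is at least an $\epsilon/d$-fraction of the nominal side of $Q$. Since pairwise-independent hypercubes are pairwise-disjoint as open sets, their volumes sum to at most the volume of $Q$, and the first part immediately yields the bound $(d/\epsilon)^d$. The corner case $\ell(Q)=0$ would be handled separately using the alternate inequality $s_i\ge \epsilon N/d$ from the level-0 clause and the fact that $Q^{*}=[0,N]^{d}$ has side at most~$N$.

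For the assignment uniqueness claims, my plan proceeds in two steps. First, since the level intervals partition the allowed range of side lengths, each $C_i$ has a well-defined unique level $\ell(C_i)$; and for this fixed level the cells of $\G_{\ell(C_i)}$ tile $[0,N]^{d}$ with pairwise-disjoint interiors, so $C_i$ is contained in at most one such cell, giving at most one $Q'$ with $C_i\in\C'(Q')$. Second, to bound the number of $Q$ with $C_i\in\C(Q)$, I would note that this happens if and only if the unique cell $Q^{*}_i$ with $C_i\in\C'(Q^{*}_i)$ satisfies $Q^{*}_i\subseteq Q$; and the grid cells containing $Q^{*}_i$ form a nested chain of ancestors of length $\ell(Q^{*}_i)+1\le \log N+1$, which absorbs into the stated $\log N$ bound. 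I do not anticipate any genuine obstacle; the only mild point to be careful about is the $\ell=0$ boundary case mentioned in the previous paragraph.
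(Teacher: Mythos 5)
Your proposal is correct and follows essentially the same route as the paper's own proof: read off the volume from the cell definition, divide by the minimum volume $(\epsilon N/(d\,2^{\ell(Q)-1}))^d$ of a hypercube assigned to $Q$ to get the packing bound, and use disjointness of same-level cells plus the $\log N$-depth chain of ancestors for the two uniqueness claims. If anything you are slightly more careful than the paper, which omits the factor $d$ in the minimum side length when stating the volume of a level-$\ell(Q)$ hypercube.
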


\textbf{Data structures.} \label{sec:data-structure-cell} We define
a data structure which will allow us to access the hypercubes in the
sets $\C(Q)$, $\C'(Q)$, etc. for each cell $Q$ efficiently. Roughly
speaking, these data structures let us insert and delete hypercubes
and answer queries of the type: ``Given a hyperrectangle~$B$, return
a hypercube which is contained in $B$.'' They are constructed using
data structures for range counting/reporting~\cite{lee84computational,willard85adding,edelsbrunner1981note}.
\begin{lem}
\label{lem:data-structure} Let $d\in\N$. There is a data structure
that maintains a set $\C'$ of weighted hypercubes in $\R^{d}$ and
allows the following operations: 
\begin{enumerate}
\item Initialize the data structure, report whether $\C'=\emptyset$, both
in worst-case time $O(1)$. 
\item Insert or delete a hypercube into (from) $\C'$ in worst-case time
$O(\log^{2d+1}|\C'|)$. 
\item \label{item:cuboid-query} For a hyperrectangle $B\subseteq\R^{d}$,
check whether there is a hypercube $C_{i}\in\C'$ with $C_{i}\subseteq B$
in time $O(\log^{2d-1}|\C'|)$. If yes, return one such hypercube
in time $O(\log^{2d-1}|\C'|)$ and the smallest such hypercube (i.e.,
with smallest size $s_{i}$) in time $O(\log^{2d+1}|\C'|)$. 
\item If $d=1$, given a value $t\in\mathbb{R}$, return the element $C_{i}=(x_{i}^{(1)},y_{i}^{(1)})$
with minimum value $y_{i}^{(1)}$ among all elements $C_{i'}=(x_{i'}^{(1)},y_{i'}^{(1)})$
with $t\le x_{i'}^{(1)}$, in time $O(\lg^{2}n)$. 
\end{enumerate}
\end{lem}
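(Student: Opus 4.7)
The plan is to represent each hypercube $C_i \in \C'$ by the $2d$-dimensional point $p_i := (x_i^{(1)}, y_i^{(1)}, \ldots, x_i^{(d)}, y_i^{(d)})$. For a query hyperrectangle $B = (a_1, b_1) \times \cdots \times (a_d, b_d)$, containment $C_i \subseteq B$ is equivalent to the conjunction of $a_j \le x_i^{(j)}$ and $y_i^{(j)} \le b_j$ over all $j\in[d]$, i.e.\ to $p_i$ lying in the axis-parallel $2d$-dimensional range $R(B) := [a_1,\infty)\times(-\infty,b_1]\times \cdots \times [a_d,\infty)\times(-\infty,b_d]$. Hence every operation in the lemma reduces to an insertion, deletion, or orthogonal range query on a dynamic set of points in $\R^{2d}$.

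First I would build a dynamic $2d$-dimensional range tree over $\{p_i\}$: a weight-balanced BST on the first coordinate whose every node stores, as a secondary structure, a range tree on the next coordinate, and so on, with a balanced one-dimensional BST at the innermost level. Applying fractional cascading on the innermost levels, in the spirit of the classical range-searching literature cited in the lemma, yields range-emptiness and ``report one point'' queries in time $O(\lg^{2d-1}|\C'|)$, which establishes the existence test and the arbitrary-witness variant of the third claim. Maintaining these nested structures under insertions and deletions costs $O(\lg^{2d+1}|\C'|)$ per update---one logarithmic factor for each of the $2d$ levels, plus additional overhead from weight-balanced rebalancing of the secondary trees and from refreshing the cascading information---which gives the second claim; a trivial size counter furnishes the $O(1)$-time initialization and emptiness test of the first claim.

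To return the \emph{smallest} hypercube contained in $B$, I would additionally augment every internal node $v$ of every level with the minimum value of $s_i$ over all hypercubes represented in the subtree of $v$. Any range query decomposes $R(B) \cap \{p_i\}$ into $O(\lg^{2d-1}|\C'|)$ canonical subtrees; reading off their augmented minima, selecting the best, and descending once into the winning subtree recovers the optimal element in $O(\lg^{2d+1}|\C'|)$ time. The augmentation does not worsen the update cost asymptotically, since each insertion or deletion need only refresh the stored minima along the nested root-to-leaf paths that are already traversed. For the fourth claim, in the one-dimensional case I would maintain a single balanced BST ordered by $x_i^{(1)}$ whose nodes store the minimum $y_{i'}^{(1)}$ in their subtree; for a query value $t$, splitting at $t$ and taking the minimum over the $O(\lg n)$ canonical subtrees covering $\{i' : t \le x_{i'}^{(1)}\}$ returns the desired element well within the $O(\lg^2 n)$ bound.

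The main technical obstacle is ensuring that all the bookkeeping---in particular the fractional cascading pointers together with the augmented size-minimum fields---can be kept consistent under arbitrary insertions and deletions within the stated \emph{worst-case} $O(\lg^{2d+1}|\C'|)$ bound rather than merely amortized. I would handle this by using weight-balanced BSTs (e.g.\ BB$[\alpha]$-trees) at every level and rebuilding each affected secondary structure from scratch when a rebalancing operation occurs, which is a standard but tedious exercise. Once the base $2d$-dimensional dynamic range tree is in place, the extensions needed for the smallest-hypercube query and for the $d=1$ special case are routine augmentations.
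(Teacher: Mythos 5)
Your reduction of hypercube containment to a $2d$-dimensional orthogonal range query on the lifted points $p_i=(x_i^{(1)},y_i^{(1)},\ldots,x_i^{(d)},y_i^{(d)})$ is precisely the one the paper uses, and it delivers the first two claims and the emptiness/report-one variant of the third claim in essentially the same way. Where you diverge is on the \emph{smallest}-hypercube query and on the $d=1$ query. The paper handles both by an \emph{outer} binary search that treats the cited range-counting/reporting structure as a black box: for the smallest hypercube it maintains a second structure on the $(2d+1)$-dimensional points $(x_i^{(1)},y_i^{(1)},\ldots,x_i^{(d)},y_i^{(d)},s_i)$ together with a list of the hypercubes sorted by $s_i$, and binary-searches for the smallest $s$ such that $B'\times[0,s]$ is nonempty; for $d=1$ it binary-searches over a list sorted by $y_{i'}^{(1)}$. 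This costs $O(\lg n)$ emptiness queries at $O(\lg^{2d} n)$ each, landing on $O(\lg^{2d+1} n)$ exactly, and inherits the worst-case update guarantees directly from the cited data structures of Willard--Lueker and Edelsbrunner without touching their internals. You instead propose to augment the range tree with subtree minima of $s_i$ and read off the argmin from the canonical decomposition, which is a legitimate alternative idea and for $d=1$ would even give $O(\lg n)$.

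However, as written your route has a genuine gap in two places. First, you propose fractional cascading at the innermost level (to get the $O(\lg^{2d-1})$ emptiness bound), but fractional cascading replaces the innermost trees with cascaded sorted arrays, so there are no inner ``internal nodes'' carrying a subtree $s_i$-minimum to read; range-\emph{min} retrieval does not coexist with the cascaded arrays the way range \emph{counting} does, and you would need to either keep a separate non-cascaded inner structure for the min query (fine, since $O(\lg^{2d})\le O(\lg^{2d+1})$, but you never say this) or maintain something like a dynamic Cartesian-tree cascade. Second, and more importantly, the lemma demands \emph{worst-case} update time, whereas your BB$[\alpha]$-based ``rebuild the affected secondary structures from scratch on a rebalance'' scheme only gives amortized bounds; converting this to worst-case is exactly the nontrivial content of the cited Willard--Lueker construction, not a routine add-on, and the paper deliberately avoids having to reprove it by keeping all augmentation (the sorted size list, the extra coordinate $s_i$) outside the black box. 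So your plan is in the right spirit but would need a real de-amortization argument and a careful treatment of the inner level before it meets the stated worst-case guarantees.
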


Using Lemma~\ref{lem:data-structure} for each cell $Q$ we define
data structures $\D(Q)$, $\D'(Q)$, $\D_{k}(Q)$, and $\D'_{k}(Q)$
for maintaining the sets $\C(Q)$, $\C'(Q)$, $\C_{k}(Q)$, and $\C_{k}'(Q)$
for each $k=1,\dots,\lg_{1+\varepsilon}W$, respectively, where $W$
is an upper bound on the maximum weight of all hypercubes. The grid
$\G$ as defined above contains $\Omega(N^{d})$ cells in total. However,
there are only $O(n\lg N)$ cells in $\G$ such that $\C(Q)\ne\emptyset$
(by Lemma~\ref{lem:properties-grid}), denote them by $\G'$. We
use a data structure that maintains these cells $\G'$ such that in
worst-case time $O(\lg|\G'|)$ we can add and remove a cell, get pointers
to the data structures $\D(Q)$, $\D'(Q)$, $\D_{k}(Q)$, $\D'_{k}(Q)$
for a cell $Q$, and get and set pointers to a solution that we compute
for a cell $Q$. See Appendix~\ref{apx:data-struct-maintain-grid}
for details.

\textbf{Algorithmic framework.} Now we sketch the framework for implementing our
dynamic algorithms. Due to space constraints we postpone its formal definition
to Appendix~\ref{sec:framework-details:unweighted}.

For each cell $Q$ we maintain a solution $\ALG(Q)\subseteq\C(Q)$ that is
near-optimal, i.e., with $w(\OPT(Q))\le\alpha\cdot w(\ALG(Q))$
for the approximation ratio $\alpha$ of the respective setting. We ensure that
$\ALG(Q)$ depends only on $\C(Q)$ and not on hypercubes $C$ with
$C\notin\C(Q)$.

To implement update operations, we proceed as follows. When a hypercube $C$ is inserted or deleted,
we need to update only the solutions $\ALG(Q)$ for the at most $\log N$
cells $Q$ such that $C\in\C(Q)$. We will update the solutions $\ALG(Q)$
in a bottom-up manner, i.e., we order the cells $Q$ with $C\in\C(Q)$
decreasingly by level and update their respective solutions $\ALG(Q)$
in this order. To ensure a total update time of $(\log n+\log N)^{O_{d,\varepsilon}(1)}$,
we will define algorithms that update $\ALG(Q)$ for a cell $Q$ in
time $(\log n+\log N)^{O_{d,\varepsilon}(1)}$, given that we already updated the
solutions $\ALG(Q')$ for all cells $Q'\varsubsetneq Q$. In fact,
we will essentially re-compute the solution $\ALG(Q)$ for a cell $Q$
from scratch, using only the solutions $\left\{ \ALG(Q')\right\} _{Q'\in\ch(Q)}$
computed for the children of $Q$.

Finally, to implement query operations, i.e., to output an approximate solution for
the whole space $[0,N]^d$, we return the solution $\ALG(Q^{*})$ (recall that
		$Q^*$ is the grid cell at level $0$ which contains the whole space).
We will show in the respective sections how we can output the weight of
$\ALG(Q^*)$ in time $O_{d,\varepsilon}(1) \poly(\lg n, \lg N)$ $\ALG(Q^*)$ and
how to output all hypercubes in $\ALG(Q^*)$ in time
$O_{d,\varepsilon}(|\ALG(Q^*)| \poly(\lg n, \lg N))$.

\section{Weighted Hypercubes}
\label{sec:Weighted-hypercubes}

We study now the weighted case for which we present a dynamic
$(4+\epsilon)2^{d}$-approximation algorithm for $d$-dimensional hypercubes.  Our
strategy is to mimic a greedy algorithm that sorts the hypercubes by size
$s_{i}$ and adds a hypercube $C_{i}$ with weight $w_i$ if it does not overlap
with any previously selected hypercube or if the total weight of the previously
selected hypercube that $C_{i}$ overlaps with is at most $w_{i}/2$. Using a
charging argument one can show that this yields a $2^{d+2}$-approximate
solution. The challenge is to implement this approach such that we obtain
polylogarithmic update time.

From a high-level point of view, our algorithm works as follows. 
In each cell $Q$, we maintain a set of points $P(Q)$ containing the vertices of
all hypercubes which have been added to independent sets $\ALG(Q')$ for cells
$Q'\subset Q$.  The weight of each point is the weight of the corresponding
hypercube.  Based on the points in $P(Q)$, we construct an auxiliary grid inside
$Q$ which allows to perform the following operation efficiently: ``Given a set
of auxiliary grid cells $A$, find a hypercube $C\in\C'(Q)$ in $A$ whose weight is
at least twice the weight of all points in $P(Q) \cap A$.'' 
When we try to add a hypercube to $Q$ we do not iterate over all hypercubes
contained in $Q$ but instead enumerate a polylogarithmic number of sets $A$ and 
perform the mentioned query for each of them. Also, we 
do not maintain the current independent set explicitly
(which might change a lot after an update), but we 
update only the weight of the points in $P(Q) \cap A$, which can be done efficiently.
For each cell $Q$ we add
only a polylogarithmic number of hypercubes to $\ALG(Q)$. If a hypercube $C_i \in \ALG(Q)$
overlaps with a hypercube $C_{i'} \in \ALG(Q')$ for some cell $Q'\subset Q$ then we exclude $C_{i'}$
from the solution that we output, but do not delete $C_{i'}$ from $\ALG(Q')$. In this way, we obtain polylogarithmic
update time, even if our computed solution changes a lot.

Before we describe our algorithm in detail, let us first elaborate on how we
maintain the points $P(Q)$.  In the unweighted settings, for each cell $Q$ we
stored in $\ALG(Q)$ a set of hypercubes or pointers to such sets. Now, we define
each set $\ALG(Q)$ to be a pair $(\bar{\C}(Q),P(Q))$. Here,
$\bar{\C}(Q)\subseteq\C'(Q)$ is a set of hypercubes from $\C'(Q)$ that
we selected for the independent set (recall that $\C'(Q)$ contains the
hypercubes $C \subseteq Q$ with $\ell(C_i) = \ell(Q)$); and $P(Q)$ is the
data structure for the range counting/reporting problem according to
Lemma~\ref{lem:range-counting-data-structure}. We will often
identify $P(Q)$ with the set of points stored in $P(Q)$. 
\begin{lem}[\cite{lee84computational,willard85adding,edelsbrunner1981note}]
\label{lem:range-counting-data-structure}
	There exists a data structure
	that maintains a set of weighted points $P\subseteq\R^{d}$ and allows
	the following operations: 
	\begin{itemize}
		\item add or delete a point in $P$ in worst-case time $O(\log^{d}|P|)$, 
		\item report or change the weight of a point in $P$ in worst-case time
			$O(\log^{d}|P|)$, 
		\item given an open or closed hyperrectangle $B\subseteq\R^{d}$, report
			the total weight of the points $B\cap P$, in worst-case time $O(\log^{d-1}|P|)$. 
		\item given $d'\in[d]$ and an interval $I=[x,z]\subseteq\R$, in worst-case
			time $O(\log|P|)$ report a value $y$ such that at most
			$\Gamma:=\left|P\cap\left(\R^{d'-1}\times[x,z]\times\R^{d-d'}\right)\right|/2$
			points are contained in $\R^{d'-1}\times(x,y)\times\R^{d-d'}$ and at
			most $\Gamma$ points are contained in
			$\R^{d'-1}\times(y,z)\times\R^{d-d'}$. 
	\end{itemize}
\end{lem}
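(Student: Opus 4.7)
The plan is to build the structure as a classical $d$-dimensional dynamic range tree, augmented with $d$ auxiliary one-dimensional trees (one per coordinate axis) to handle the splitter query. The range tree is a $d$-level nested weight-balanced BST: the outer level is a BST on the first coordinate, and each internal node stores a $(d-1)$-dimensional range tree built on the points in its subtree; the recursion terminates at $d=1$ with a weight-balanced BST augmented with subtree weight sums. Using weight-balanced trees with partial rebuilding gives worst-case $O(\log |P|)$ time per level for rebalancing, and since each point appears in $O(\log |P|)$ secondary structures at each of the $d$ levels, insertions and deletions take worst-case $O(\log^d |P|)$ time. Weight-change operations proceed identically: I locate the point via pointers recorded at insertion time, then update the subtree weight-sums along the $O(\log |P|)$ ancestors within each of the $d$ nested trees.

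For the range-sum query, I would use the standard range-tree decomposition combined with fractional cascading at the innermost level. A query hyperrectangle $B$ is decomposed by the outer tree into $O(\log |P|)$ canonical subtrees, and recursing through the $d-1$ inner levels in the usual way produces $O(\log^{d-1}|P|)$ canonical subtrees whose stored subtree-weights sum to the answer. Applying fractional cascading at the innermost tree removes one logarithmic factor, matching the claimed $O(\log^{d-1}|P|)$ bound. The fractional-cascading auxiliary arrays are kept consistent with the weight-balanced-tree rebuilds using standard dynamization techniques, which fit within the $O(\log^d |P|)$ update budget.

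For the splitter query, I maintain an additional one-dimensional weight-balanced BST for each coordinate axis $d'\in[d]$, sorted by the $d'$-th coordinate and augmented with subtree point counts. Given $(d',[x,z])$, I first locate the split node $u$ for $[x,z]$ in the $d'$-th auxiliary tree, and descending towards $x$ on the left spine and towards $z$ on the right spine I collect the $O(\log|P|)$ canonical subtrees whose union, together with the spine nodes lying in the slab, exactly equals $P\cap(\R^{d'-1}\times[x,z]\times\R^{d-d'})$. Summing their subtree counts yields $k$, and a single prefix-sum scan over the canonical subtrees (which are naturally ordered by their $d'$-coordinate) identifies in $O(\log|P|)$ time the unique canonical subtree $T$ that contains the rank-$\lceil k/2\rceil$ point; a standard count-guided descent inside $T$ returns that point, whose $d'$-coordinate is the desired splitter $y$. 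The whole query therefore runs in $O(\log|P|)$ worst-case time, and each auxiliary 1D tree is updated in $O(\log|P|)$ per insertion or deletion, which is absorbed into the total update bound.

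The main obstacle is achieving all of the above \emph{simultaneously and in worst-case time}: combining weight-balanced trees (for worst-case updates), dynamic fractional cascading (for the one-logarithm saving on range sums), and consistent maintenance of the $d$ auxiliary 1D trees under every insertion, deletion, and weight change. Each individual ingredient is classical, and the references \cite{lee84computational,willard85adding,edelsbrunner1981note} establish their compatible combination; the splitter query is a light augmentation of the auxiliary 1D trees and inherits their dynamic-maintenance complexity.
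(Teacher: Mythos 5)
Your proposal takes essentially the same approach as the paper. The paper defers the first three bullets to the cited references (the range-tree/downpointer constructions of Willard--Lueker and Edelsbrunner, as surveyed by Lee--Preparata), which is also what your construction amounts to; and for the fourth bullet both you and the paper maintain a separate balanced BST per coordinate axis ordered by that coordinate, augmented with subtree counts, and answer the splitter query by a count-guided binary-search descent to locate a median element of the slab, relying on the open/closed distinction between $[x,z]$ and $(x,y)$, $(y,z)$ to make the median coordinate a valid splitter.
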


Now we describe our algorithm in detail.
Let again $x_{Q}^{(1)},\dots,x_{Q}^{(d)}$ and $y_{Q}^{(1)},\dots,y_{Q}^{(d)}$
be such that $Q=[x_{Q}^{(1)},y_{Q}^{(1)}]\times\dots\times[x_{Q}^{(d)},y_{Q}^{(d)}]$.
We construct a data structure $P(Q)$ according to Lemma~\ref{lem:range-counting-data-structure}
such that initially it contains the points $\bigcup_{Q'\in\ch(Q)}P(Q')$;
this will ensure that initially the points in $P(Q)$ are the vertices of all hypercubes in
$\bar{\C}(Q')$ for each $Q' \subset Q$.
Constructing $P(Q)$ might take more than polylogarithmic time since
the sets $P(Q')$ with $Q'\in\ch(Q)$ might contain more than polylogarithmically
many points. However, we show in Appendix~\ref{sec:adjustment} how to adjust our
hierarchical grid decomposition and the algorithm to obtain polylogarithmic update time.

We want to compute a set $\bar{\C}(Q)\subseteq\C'(Q)$ containing the
hypercubes from $\C'(Q)$ that we add to the independent set.
At the beginning, we initialize $\bar{\C}(Q):=\emptyset$.
We compute an auxiliary grid $(Z^{(1)},\dots,Z^{(d)})$ in order to search
for hypercubes to insert, similar to the unweighted case. To
define this auxiliary grid, we first compute the total weight $\tilde{W}=w(P(Q))$
of all points that are in $P(Q)$ at the beginning of the algorithm in time
$O(\log^{d-1} |P(Q)|)$, where we define $w(P):=\sum_{p\in P}w_{p}$ for any set of weighted points $P$.
Then we define the auxiliary grid within $Q$ such that \emph{in the interior}
of each \emph{grid slice} the points in $P(Q)$ have a
total weight at most $\epsilon^{d+2}\tilde{W}/(d^{d+1} \log N)$, where a grid slice is
a set of the form $\R^{d'-1}\times(x,y)\times\R^{d-d'}$ for some $d'\in[d]$. We emphasize here that
this property only holds for the \emph{interior} of the grid slices and that the
sets in the first point of Lemma~\ref{lem:construct-aux-grid} are open.
\begin{lem}
\label{lem:construct-aux-grid}
	Given a cell $Q$ and the data structure $P(Q)$, in 
	$O\left(\left(\frac{d}{\epsilon}\right)^{d+2} \cdot\log^{d}|P(Q)|\cdot\log N\right)$
	time we can compute sets $Z^{(1)},\dots,Z^{(d)}$ of coordinates with
	$Z^{(d')}=\{z_{1}^{(d')},z_{2}^{(d')},\dots\}$
	for each $d'$ such that 
	\begin{itemize}
		\item the total weight of the points in
			$\left(\R^{d'-1}\times(z_{j}^{(d')},z_{j+1}^{(d')})\times\R^{d-d'}\right)\cap P(Q)$
			is at most $\epsilon^{d+2}\tilde{W}/(d^{d+1} \log N)$ for each $d'\in[d]$
			and each $j\in\{1,\dots,|Z^{(d')}|-1\}$,
		\item $Q=\prod_{d'=1}^{d}[z_{1}^{(d')},z_{|Z^{(d')}|}^{(d')}]$, and 
		\item $|Z^{(d')}|\le d^{d+1} \log N/\epsilon^{d+2}+1$ for each $d'\in[d]$. 
	\end{itemize}
\end{lem}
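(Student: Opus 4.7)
The plan is to process each of the $d$ dimensions independently by a one-dimensional greedy left-to-right sweep that produces the cut list $Z^{(d')}$. First I compute $\tilde{W} := w(P(Q))$ with a single range-weight query on $Q$, costing $O(\log^{d-1}|P(Q)|)$, and set the per-slice threshold $\tau := \epsilon^{d+2}\tilde{W}/(d^{d+1}\log N)$. If $\tilde{W}=0$ I simply output $Z^{(d')} := \{x_Q^{(d')}, y_Q^{(d')}\}$ for every dimension.

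For each dimension $d'$ I initialize $z_1^{(d')} := x_Q^{(d')}$ and, having fixed $z_j^{(d')}$, define $z_{j+1}^{(d')}$ to be the \emph{largest} $y\in(z_j^{(d')},y_Q^{(d')}]$ such that the total weight of $P(Q)$ inside the open slice $\R^{d'-1}\times (z_j^{(d')},y)\times \R^{d-d'}$ is at most $\tau$; I stop once $z_{j+1}^{(d')} = y_Q^{(d')}$. The required $y$ is the $d'$-coordinate of the leftmost point of $P(Q)$ strictly past $z_j^{(d')}$ whose inclusion would push the open-slice weight above $\tau$ (or $y_Q^{(d')}$ if no such point exists), and I locate it by binary searching on the sorted $d'$-coordinates of the points of $P(Q)$, answering each probe with one weight-range query in $P(Q)$. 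Each probe costs $O(\log^{d-1}|P(Q)|)$ and the binary search has depth $O(\log|P(Q)|)$, so each cut is produced in time $O(\log^{d}|P(Q)|)$.

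The first two properties of the lemma follow immediately from the construction: within each open slice the weight is at most $\tau$ by definition, and the sweep stops exactly at $y_Q^{(d')}$, so $Q = \prod_{d'}[z_1^{(d')}, z_{|Z^{(d')}|}^{(d')}]$. For the cardinality bound, every cut $z_{j+1}^{(d')}$ with $z_{j+1}^{(d')} < y_Q^{(d')}$ is chosen maximally, so extending the open interval by the point of $P(Q)$ sitting exactly at $z_{j+1}^{(d')}$ would push the slice weight above $\tau$; hence the half-open slice $\R^{d'-1}\times(z_j^{(d')}, z_{j+1}^{(d')}]\times \R^{d-d'}$ carries weight strictly greater than $\tau$. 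These half-open slices are pairwise disjoint and contained in $Q$, so their total weight is at most $\tilde{W}$. The number of such ``interior'' cuts is therefore strictly less than $\tilde{W}/\tau = d^{d+1}\log N/\epsilon^{d+2}$, and adding the two boundary cuts $x_Q^{(d')}$ and $y_Q^{(d')}$ gives $|Z^{(d')}| \le d^{d+1}\log N/\epsilon^{d+2}+1$. Summing the per-cut cost over all $d$ dimensions yields the claimed $O((d/\epsilon)^{d+2}\log^{d}|P(Q)|\log N)$ running time.

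The main obstacle is the cardinality bound: it is crucial that the greedy rule is \emph{maximal}, which is why I do not reuse the count-based median query of Lemma~\ref{lem:range-counting-data-structure}. Maximality is exactly what lets me charge each new interior cut against a ``packet'' of weight strictly greater than $\tau$ living inside its associated half-open slice; because these packets are pairwise disjoint, their total is bounded by $\tilde{W}$, and one recovers the desired $\tilde{W}/\tau + O(1)$ bound on the number of cuts per dimension. A secondary subtlety is that the cut $z_{j+1}^{(d')}$ must sit at the coordinate of an actual point of $P(Q)$ rather than at an arbitrary real: only then is the open-slice weight $\le \tau$ preserved while the half-open-slice weight $>\tau$ gain is obtained, which is what drives the counting argument.
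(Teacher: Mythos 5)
Your proof is correct and follows essentially the same approach as the paper: a per-dimension greedy sweep that places each cut at the largest coordinate keeping the open-slice weight below $\tau$, located by binary search with one range-weight query per probe. The paper implements the binary search via the count-based median query of Lemma~\ref{lem:range-counting-data-structure} rather than via the sorted coordinate list directly, but these are equivalent, and your explicit disjoint-packets charging argument spells out the cardinality bound that the paper states more tersely.
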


To select hypercubes to add to $\bar{\C}(Q)$ our algorithm runs in iterations,
and in each iteration we add one hypercube to $\bar{\C}(Q)$. In each
iteration we enumerate all hyperrectangles $A\subseteq Q$ that are aligned
with $Z^{(1)},\dots,Z^{(d)}$; note that there are only $\left(d^{d+1} \log N/\epsilon^{d+2}+1\right)^{2d}$
such hyperrectangles (by the third point of Lemma~\ref{lem:construct-aux-grid}). For each
such hyperrectangle $A$ we use the data structures
$\left\{ \D'_{k}(Q) \right\} _{k\in\N}$ to determine whether there is
a hypercube $C_{i}\subseteq A$ contained in $\C_{k}'(Q)$ for some
$k$ such that $(1+\epsilon)^{k}\ge2w(P(Q)\cap A)$ (recall that
$\D'_{k}(Q)$ maintains the intervals in $\C'$ which have weights in the range
$[(1+\varepsilon)^k, (1+\varepsilon)^{k+1})$ and also recall that $\D'_{k}(Q)$
is contained in the input $\DD(Q)$ of the algorithm as discussed in
Section~\ref{sec:framework}).
We say that such
a hypercube $C_{i}$ is \emph{addible}. If there is no addible hypercube
$C_{i}$ then we stop and return $(\bar{\C}(Q),P(Q))$. Otherwise,
we determine the smallest addible hypercube $C_{i}$ (i.e., with minimum
value $s_{i}$) and we add $C_{i}$ to $\bar{\C}(Q)$.
We add to $P(Q)$ the $2^{d}$ vertices of $C_{i}$ with weight $w_{i}$;
if a vertex of $C_{i}$ has been in $P(Q)$ before then we increase its
weight by $w_{i}$. We remove from $\bar{\C}(Q)$ all hypercubes that
$C_{i}$ overlaps with. Intuitively, we remove also all other previously
selected hypercubes that $C_{i}$ intersects; however, we do not do
this explicitly since this might require $\Omega(n)$ time, but we
will ensure this implicitly via the query algorithm that we use to output the
solution and that we define below. Finally, we add the coordinates of $C_{i}$
to the coordinates of the grid $Z^{(1)},\dots,Z^{(d)}$, i.e., we make
the grid finer; formally, for each $d'\in[d]$ we add to $Z^{(d')}$
the coordinates $\{x_{i}^{(d')},y_{i}^{(d')}\}$. This completes one
iteration. 
\begin{lem}
\label{lem:iterations}
	The algorithm runs for at most $\left(\frac{d}{\epsilon}\right)^{d} \log W$
	iterations and computes $\bar{\C}(Q)$ in time
	$O\left(\left(\frac{d}{\epsilon}\right)^{2d^{2}+5d+1}\cdot\lg W\cdot\log^{2d-1}n\log^{2d}N\right)$. 
\end{lem}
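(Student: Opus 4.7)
The plan is to prove the iteration bound first via a per-weight-class counting argument, then bound the per-iteration cost by inspecting each data structure operation, and multiply the two.

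For the iteration bound, I would exploit two structural facts. First, $\bar{\C}(Q)$ is always an independent subset of $\C'(Q)$ since whenever $C_i$ is added we remove every currently selected hypercube that $C_i$ overlaps; combined with Lemma~\ref{lem:properties-grid} this forces $|\bar{\C}(Q)|\le(d/\epsilon)^{d}$ throughout. Second, no hypercube is ever inserted twice: the $2^{d}$ vertices of $C_i$ are permanently inserted into $P(Q)$ with weight $w_i$ and the coordinates of $C_i$ are inserted into $Z^{(1)},\ldots,Z^{(d)}$, so any future grid-aligned $A\supseteq C_i$ contains at least one vertex of $C_i$, giving $w(P(Q)\cap A)\ge w_i$ and violating the addibility condition. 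The crux is to refine this observation to individual weight classes: I would show that when a class-$k$ hypercube $C_{i'}$ is added, any previously-added class-$k$ hypercube $C_i$ that overlaps $C_{i'}$ contributes a vertex of weight $(1+\epsilon)^{k}$ to the smallest grid-aligned $A_{i'}\supseteq C_{i'}$ (using that $C_i$ and $C_{i'}$ have comparable size, both being of level $\ell(Q)$, and that $C_i$'s coordinates have populated $Z^{(d')}$); this gives $w(P(Q)\cap A_{i'})\ge(1+\epsilon)^{k}$ and forces $w_{i'}\ge 2(1+\epsilon)^{k}$, contradicting $w_{i'}<(1+\epsilon)^{k+1}$ for small $\epsilon$. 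Therefore the class-$k$ additions are pairwise non-overlapping, forming an independent subset of $\C'(Q)$ of size at most $(d/\epsilon)^{d}$ by Lemma~\ref{lem:properties-grid}. Summing over the $O(\log_{1+\epsilon}W)$ weight classes and absorbing $1/\epsilon$ into $(d/\epsilon)^{d}$ gives the claimed $(d/\epsilon)^{d}\log W$.

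For the per-iteration cost, each iteration (1)~enumerates every hyperrectangle $A\subseteq Q$ aligned with the current auxiliary grid; by the third bullet of Lemma~\ref{lem:construct-aux-grid} there are at most $\bigl(d^{d+1}\log N/\epsilon^{d+2}+1\bigr)^{2d}=(d/\epsilon)^{O(d^{2})}\log^{2d}N$ such $A$; (2)~for each $A$ computes $w(P(Q)\cap A)$ in $O(\log^{d-1}|P(Q)|)$ time via Lemma~\ref{lem:range-counting-data-structure}, determines the threshold class $k_{A}=\lceil\log_{1+\epsilon}(2w(P(Q)\cap A))\rceil$, and queries the appropriate $\D'_{k}(Q)$ via Lemma~\ref{lem:data-structure}(\ref{item:cuboid-query}) in $O(\log^{2d-1}n)$ time to find an addible hypercube in $A$; (3)~selects the overall smallest addible $C_i$, then updates $P(Q)$ with $C_i$'s $2^{d}$ vertices and refines $Z^{(1)},\ldots,Z^{(d)}$ with $C_i$'s $2d$ coordinates, each update costing $O(\log^{d}|P(Q)|)$. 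Multiplying the per-iteration cost by the iteration bound $(d/\epsilon)^{d}\log W$ and accounting for the weight-class enumeration (which contributes a $1/\epsilon$ factor since only classes $k\ge k_A$ are queried and later classes can be amortized against future iterations that inevitably lift $k_A$) gives the claimed $O\bigl((d/\epsilon)^{2d^{2}+5d+1}\log W\log^{2d-1}n\log^{2d}N\bigr)$.

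The main obstacle is the per-class counting argument: we must carefully certify that after the grid refinement the smallest grid-aligned $A_{i'}$ is forced to contain at least one vertex of any overlapping class-$k$ predecessor $C_i$. Because vertices lie on the boundary of their hypercube, the argument requires the open/closed flexibility of the range counting queries in Lemma~\ref{lem:range-counting-data-structure} together with a careful geometric analysis using that $C_i$ and $C_{i'}$ are of comparable size (as both are level-$\ell(Q)$ hypercubes, their side lengths differ by at most a factor of $2$).
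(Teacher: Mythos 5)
Your proposal follows the paper's proof closely in both structure and substance: the iteration bound via the per-weight-class non-overlap argument (tracking vertices permanently left in $P(Q)$ and using Lemma~\ref{lem:properties-grid} to cap each class at $(d/\epsilon)^d$ picks), and the per-iteration cost via the product of aligned-hyperrectangle count, range-counting, and $\D'_k$ queries. Two small remarks. First, your intermediate deduction ``$w(P(Q)\cap A_{i'})\ge(1+\epsilon)^k$ forces $w_{i'}\ge 2(1+\epsilon)^k$'' is a slight misreading of the addibility test: the condition the algorithm checks is $(1+\epsilon)^k\ge 2w(P(Q)\cap A)$ (not $w_{i'}\ge 2w(P(Q)\cap A)$), so once a vertex of $C_i$ of weight $\ge(1+\epsilon)^k$ lands in $A$ the contradiction is immediate, $(1+\epsilon)^k\ge 2(1+\epsilon)^k$, with no need to invoke $w_{i'}<(1+\epsilon)^{k+1}$. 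Second, the amortization argument you sketch for the weight-class enumeration (``later classes amortized against future iterations lifting $k_A$'') is not the paper's route and doesn't obviously go through; the paper's running-time calculation simply multiplies the iteration count, the number of aligned hyperrectangles $A$, and a single $O(\log^{2d-1}n)$ per-$A$ query cost, absorbing the remaining polynomial-in-$1/\epsilon$ factors into the leading $(d/\epsilon)^{O(d^2)}$ term rather than attempting an amortized accounting. These are cosmetic differences; the core argument is the same as the paper's.
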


After the computation above, we define that our solution $\SOL(Q)$ for $Q$ contains
all hypercubes in a set $\bar{\C}(Q')$ for some cell $Q'\subseteq Q$ that are not
overlapped by a hypercube in a set $\bar{\C}(Q'')$ for some cell $Q''\supset Q'$.
So if two hypercubes $C_{i'}\in\bar{\C}(Q')$,
$C_{i''}\in\bar{\C}(Q'')$ overlap and $\ell(Q')<\ell(Q'')$,
then we select $C_{i'}$ but not $C_{i''}$. We can output $\SOL(Q)$ in time $O_{d,\epsilon}(|\SOL(Q)|\log^{d+1}N)$, 
see Lemma~\ref{lem:output-routine-weighted-hypercubes} in the appendix.
If we only want return the approximate weight of $\SOL(Q)$, we can return
$w(P(Q))$ which is a $O(2^d)$-approximation by
Lemma~\ref{lem:apx-ratio-weighted-hypercubes} below. 

Finally, we bound our approximation ratio. Whenever we add a hypercube
$C_{i}$ to a set $\bar{\C}(Q)$ for some cell $Q$, then we explicitly
or implicitly remove from our solution all hypercubes $C_{i'}$ with
$C_{i}\cap C_{i'}\ne\emptyset$ such that $C_{i'}\in\bar{\C}(Q')$
for a cell $Q'\subseteq Q$ . However, the total weight of these removed
hypercubes is bounded by $w_{i}/2$ since in the iteration in which
we selected a hypercube $C_{i}\in\C_{k}$ there was a set $A\supseteq C_{i}$
with $(1+\epsilon)^{k}\ge2w(p(Q)\cap A)$ and by definition of $\C_k$, 
$w_i \geq (1+\varepsilon)^k$. Therefore, we can bound
our approximation ratio using a charging argument. 

\begin{lem}
\label{lem:apx-ratio-weighted-hypercubes}
	For each cell $Q\in\G'$,
	we have that
	\begin{align*}
		w(\SOL(Q))\le w(\OPT(Q))\le(2+O(\epsilon))w(P(Q))\le(4+O(\epsilon))2^{d}w(\SOL(Q))). 
	\end{align*}
\end{lem}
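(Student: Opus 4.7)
The plan is to establish the three inequalities separately. The first, $w(\SOL(Q)) \le w(\OPT(Q))$, is immediate since $\SOL(Q)$ is an independent set of hypercubes in $\C(Q)$ (as verified from the definition of $\SOL$) and $\OPT(Q)$ is the maximum-weight such set.

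For the third inequality, the plan is a charging argument. Since each hypercube added to a set $\bar{\C}(Q'')$ contributes its $2^{d}$ vertices with weight $w(C)$ to $P(Q'')$, and vertices are never removed from $P$, one has $w(P(Q)) = 2^{d}\cdot w(\bar{\C}_{\mathrm{all}}(Q))$, where $\bar{\C}_{\mathrm{all}}(Q)$ denotes the union of all hypercubes ever placed in any $\bar{\C}(Q'')$ with $Q''\subseteq Q$. So it suffices to show $w(\bar{\C}_{\mathrm{all}}(Q)) \le 2\,w(\SOL(Q))$. Every $C \in \bar{\C}_{\mathrm{all}}(Q)\setminus\SOL(Q)$ has a blocker $C'$: either a hypercube added later to the same cell that triggered $C$'s removal from $\bar{\C}$, or a hypercube in some strict ancestor cell's $\bar{\C}$ overlapping $C$. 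Charge $C$ to $C'$. When $C'$ was added, the addibility condition gives $w(C')\ge 2\,w(P(Q''_{C'})\cap A)$ for some grid-aligned $A\supseteq C'$. Because the algorithm always picks the smallest addible hypercube and addibility is monotone in $w(P)$, the side length of $C$ is at most that of $C'$; combined with $C\cap C'\ne\emptyset$, a coordinate-wise geometric argument shows some vertex of $C$ lies in $\overline{C'}\subseteq A$, contributing weight $\ge w(C)$ to $P(Q''_{C'})\cap A$. Summing gives total charge to $C'$ at most $w(C')/2$, hence $w(\bar{\C}_{\mathrm{all}}(Q)\setminus\SOL(Q))\le w(\bar{\C}_{\mathrm{all}}(Q))/2$, yielding $w(\bar{\C}_{\mathrm{all}}(Q))\le 2\,w(\SOL(Q))$ and therefore $w(P(Q))\le 2^{d+1}w(\SOL(Q))$. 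Combining with the middle inequality then gives the final bound $(4+O(\epsilon))2^{d}w(\SOL(Q))$.

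For the middle inequality, I plan to partition $\OPT(Q)=\bigsqcup_{Q''\subseteq Q}\OPT_{(Q'')}$ with $\OPT_{(Q'')}:=\OPT(Q)\cap\C'(Q'')$, and for each $C^{*}\in\OPT_{(Q'')}$ let $A(C^{*})$ denote the smallest hyperrectangle of $Q''$'s auxiliary grid containing $C^{*}$. The key pointwise bound is $w(C^{*})\le 2(1+\epsilon)\,w(P(Q'')\cap A(C^{*}))$: if $C^{*}$ was ever added to $\bar{\C}(Q'')$, its $2^{d}$ vertices lie in $P(Q'')\cap A(C^{*})$ (as $A(C^{*})\supseteq\overline{C^{*}}$) with total weight at least $2^{d}w(C^{*})$; otherwise $C^{*}$ is not addible at termination, and the stopping condition with $A=A(C^{*})$ gives $(1+\epsilon)^{k}<2\,w(P(Q'')\cap A(C^{*}))$ for the weight class $k$ of $C^{*}$, which yields the bound. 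Summing and decomposing $w(P(Q'')\cap A(C^{*}))=w(P(Q'')\cap C^{*})+w(P(Q'')\cap(A(C^{*})\setminus C^{*}))$ produces a main term $\sum_{C^{*}\in\OPT(Q)}w(P(Q)\cap C^{*})\le w(P(Q))$ by pairwise disjointness of $\OPT(Q)$, plus a boundary term.

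The main obstacle is bounding the boundary contribution by $O(\epsilon)\,w(P(Q))$. For each $C^{*}$, the set $A(C^{*})\setminus C^{*}$ decomposes into at most $2d$ slabs, each contained in a slice interior of $Q''$'s auxiliary grid whose weight is at most $\epsilon^{d+2}\tilde{W}(Q'')/(d^{d+1}\log N)$ by Lemma~\ref{lem:construct-aux-grid}. Since $|\OPT_{(Q'')}|\le(d/\epsilon)^{d}$ by Lemma~\ref{lem:properties-grid}, the total boundary contribution per cell is $O(\epsilon^{2}\tilde{W}(Q'')/\log N)$. I would then use the telescoping estimate $\sum_{Q''\subseteq Q}\tilde{W}(Q'')\le\log N\cdot w(P(Q))$ (because each $p\in P(Q)$ lies in $P(Q'')$ for at most $\log N$ strict ancestors of its assigned cell) to cancel the $\log N$ factor, giving total boundary contribution $O(\epsilon^{2})w(P(Q))$. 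Combining with the main term yields $w(\OPT(Q))\le 2(1+\epsilon)(1+O(\epsilon))\,w(P(Q))=(2+O(\epsilon))\,w(P(Q))$.
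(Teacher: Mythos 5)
Your decomposition into the three inequalities matches the paper's, the first inequality is the same, and your blocker-charging argument for $w(P(Q))\le 2^{d+1}w(\SOL(Q))$ is in the same spirit as the paper's budget-moving argument (Lemma~\ref{lem:point-sol}); you are more explicit than the paper about same-cell blockers, but the step asserting that the side length of a blocked hypercube is at most that of its same-cell blocker is delicate (the auxiliary grid gets refined between iterations, so a later iteration can in principle find a smaller addible hypercube) and would need its own justification.

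The genuine gap is in the middle inequality. Your plan charges each $C^{*}\in\OPT(Q)$ to $w(P(Q'')\cap A(C^{*}))$, then splits into an interior term and a boundary term and bounds the boundary term via the slice-interior guarantee from Lemma~\ref{lem:construct-aux-grid}. This breaks exactly for a $C^{*}$ that was itself ever added to some $\bar{\C}(Q'')$. After a hypercube is added, its coordinates are inserted into the auxiliary grid $Z$, so the smallest aligned hyperrectangle $A(C^{*})$ becomes $\overline{C^{*}}$ itself. Then $A(C^{*})\setminus C^{*}=\partial C^{*}$, which is precisely where the $2^{d}$ vertices of $C^{*}$ sit in $P(Q'')$ carrying total weight at least $2^{d}w(C^{*})$ — this is not contained in any \emph{open} slice interior, so the bound $O(\epsilon^{d+2}\tilde{W}/\log N)$ per slab is off by an arbitrarily large factor. (If you instead take $A(C^{*})$ open, then $A(C^{*})=C^{*}$ and the vertices drop out of $P(Q'')\cap A(C^{*})$ entirely, so your pointwise bound fails instead.) There is also a double-counting issue: two disjoint $C^{*},C'^{*}\in\OPT(Q)$ sharing a facet would both be charged the boundary weight on that facet.

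The paper's proof (Lemmas~\ref{lem:light}, \ref{lem:heavy}) is built specifically around this obstacle. It splits $\OPT(Q)$ into \emph{light} hypercubes, handled by a cardinality bound and the same telescoping you use, and \emph{heavy} hypercubes, handled by introducing the perturbed \emph{charge points} $\tilde{P}(Q)$: each vertex is moved slightly toward the center of its own hypercube, with weight scaled by $\beta=2+O(\epsilon)$. The charge points of $C^{*}$ land strictly inside $C^{*}$, so the charging $\sum_{C^{*}}w(\tilde{P}(Q)\cap C^{*})\le w(\tilde{P}(Q))=\beta\,w(P(Q))$ has no boundary double-counting and needs no telescoping for the heavy part; the heaviness condition absorbs the grid-slice error into $O(\epsilon)w_{i}$ per hypercube. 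Your unified pointwise approach is a genuinely different route, but without some device playing the role of $\tilde{P}$ (or a separate accounting for hypercubes of $\OPT(Q)$ that were ever added), the summing step does not go through, and even with ad hoc patches (e.g., bounding the added hypercubes directly by $w(P(Q))/2^{d}$) the resulting constant is weaker than $2+O(\epsilon)$.
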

Before we prove Lemma~\ref{lem:apx-ratio-weighted-hypercubes}, we prove
three intermediate results. To this end, 
recall that $\SOL(Q)$ consists of hypercubes in sets $\bar{\C}(Q')$
for cells $Q'\subseteq Q$. First, we show via a token argument that the
total weight of \emph{all} hypercubes of the latter type is at most
$2w(\SOL(Q))$, using that when we inserted a new hypercube in our solution
then it overlapped with previously selected hypercubes of weight at
most $w_{i}/2$.
\begin{lem}
\label{lem:point-sol}
	We have that $w(P(Q)) \leq 2^{d+1} w(\SOL(Q))$.
\end{lem}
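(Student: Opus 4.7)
The plan is to first turn $w(P(Q))$ into a sum of hypercube weights, and then to charge each such weight to the weight of a hypercube in $\SOL(Q)$ via a forest constructed from the algorithm's addibility condition. Let $S := \bigcup_{Q' \subseteq Q} \bar{\C}(Q')$ denote the set of all hypercubes that the algorithm ever places into some $\bar{\C}(Q')$ with $Q' \subseteq Q$. Since each such $C_j$ contributes its $2^d$ vertices to $P(Q_j)$ each with weight $w_j$ (and this set of weighted vertices propagates upward through the initialization $P(Q) := \bigcup_{Q' \in \ch(Q)} P(Q')$), we obtain the identity $w(P(Q)) = 2^d \sum_{C_j \in S} w_j$. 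Hence it suffices to prove $\sum_{C_j \in S} w_j \le 2 \, w(\SOL(Q))$.

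The key geometric ingredient is: if $C_j, C_k$ are axis-parallel open hypercubes with $s_j < s_k$ and $C_j \cap C_k \ne \emptyset$, then at least one vertex of $C_j$ lies in $C_k$. This follows by choosing, independently in each coordinate, the endpoint of $C_j$'s interval that lies strictly inside $C_k$'s interval; the strict inequality $s_j < s_k$ ensures such an endpoint always exists. Now consider any $C_k \in S$ and suppose it was added to $\bar{\C}(Q_k)$ with witness $A_k \supseteq C_k$. The algorithm's addibility rule, together with $w_k \ge (1+\varepsilon)^\kappa$ for the index $\kappa$ with $C_k \in \C'_\kappa(Q_k)$, gives $w(P(Q_k) \cap A_k) \le w_k/2$. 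By the time $C_k$ is inserted, $P(Q_k)$ already contains the vertices of every $C_j \in \bar{\C}(Q_j)$ with $Q_j \subsetneq Q_k$, since those cells are processed beforehand. For each such $C_j$ the level-to-size relation of the grid forces $s_j < s_k$, so the geometric claim produces a vertex of $C_j$ inside $C_k \subseteq A_k$ contributing at least $w_j$ to $w(P(Q_k) \cap A_k)$. We thus obtain, for every $C_k \in S$,
\[
\sum_{\substack{C_j \in S \\ Q_j \subsetneq Q_k,\ C_j \cap C_k \ne \emptyset}} w_j \;\le\; w(P(Q_k) \cap A_k) \;\le\; w_k/2.
\]

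To finish, I build a forest on $S$ as follows: for each $C_j \in S \setminus \SOL(Q)$, the definition of $\SOL(Q)$ provides some $C_k \in \bar{\C}(Q_k) \subseteq S$ with $Q_k \supsetneq Q_j$ and $C_j \cap C_k \ne \emptyset$; declare any such $C_k$ to be the parent of $C_j$. The roots of this forest are exactly the elements of $\SOL(Q)$, and at every internal node $C_k$ the children satisfy the hypotheses of the displayed inequality, so their total weight is at most $w_k/2$. An easy induction from the leaves gives $T_k \le 2 w_k$ for the subtree weight $T_k$ rooted at $C_k$, via $T_k = w_k + \sum_{\text{child } C_j} T_j \le w_k + 2 \cdot (w_k/2) = 2 w_k$. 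Summing over the roots yields $\sum_{C_j \in S} w_j = \sum_{C_k \in \SOL(Q)} T_k \le 2 \, w(\SOL(Q))$, and multiplying by $2^d$ proves the lemma. The main step is establishing the geometric vertex-containment claim and checking that all the relevant vertices are already present in $P(Q_k)$ when $C_k$ is inserted; the rest is a routine forest-charging computation.
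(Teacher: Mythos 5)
Your proof takes essentially the same approach as the paper's own argument. The paper phrases the charging as a budget-moving scheme: each $C_i\in\SOL(Q)$ starts with budget $2w_i$ and recursively passes $2w_{i''}$ down to each overlapping smaller hypercube $C_{i''}$, keeping $w_i$ for itself. You make the implicit parent-child structure explicit as a forest rooted at $\SOL(Q)$ and run a top-down induction $T_k\le 2w_k$; these are the same argument in two presentations, both built on the addibility condition $w(P(Q_k)\cap A_k)\le w_k/2$ and the vertex-containment observation for a smaller hypercube overlapping a larger one.

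One shared subtlety deserves to be named. You assert $w(P(Q)) = 2^d\sum_{C_j\in S}w_j$ with $S=\bigcup_{Q'\subseteq Q}\bar{\C}(Q')$, but $P(Q)$ also retains the vertices of hypercubes that were added to some $\bar{\C}(Q')$ and later removed from it during the same run — the paper's set $\tilde{\C}(Q')\supseteq\bar{\C}(Q')$; the proof of Lemma~\ref{lem:iterations} explicitly notes that such vertices are \emph{not} deleted from $P(Q)$. Thus the correct identity is $w(P(Q)) = 2^d\sum_{Q'\subseteq Q}\sum_{C_j\in\tilde{\C}(Q')}w_j$, and the forest must be extended to the removed hypercubes as well, which is easy: declare the (larger, still-addible) hypercube whose insertion evicted $C_j$ from $\bar{\C}(Q_j)$ to be its parent; since that hypercube's witness set $A$ contained a vertex of $C_j$, the child-weight bound $\le w_k/2$ is preserved. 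The paper's own proof has the identical omission, so your proposal faithfully reproduces the paper's reasoning, including this small gap.
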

\begin{proof}
	We assign to each hypercube $C_{i}\in\SOL(Q)$ a budget
	of $2w_{i}$. We define now an operation that moves these budgets.
	Assume that a hypercube $C_{i'}\in\bar{\C}(Q')$ for some cell $Q'$
	now has a budget of $2w_{i}$ units. For each hypercube $C_{i''}\in\bar{\C}(Q'')$
	for some cell $Q''\subsetneq Q'$ such that one of the vertices of
	$C_{i''}$ is overlapped by $C_{i'}$, we move $2w_{i''}$ units of
	the budget of $C_{i'}$ to the budget of $C_{i''}$. Note that a hypercube
	$C_{i''}\in\C'(Q'')$ in a cell $Q''\subsetneq Q'$ overlaps $C_{i'}$
	if and only if $C_{i'}$ overlaps a vertex of $C_{i''}$ since $s_{i''}<s_{i'}$.
	When we selected $C_{i'}\in\C_{k}(Q')$ then there was a corresponding
	set $A\subseteq Q'$ such that $w_{i}\ge(1+\epsilon)^{k}\ge2w(p(Q')\cap A)$.
	Therefore, when we move the budget of $C_{i'}$ as defined then $C_{i'}$
	keeps $w_{i'}$ units of its budget. After this operation, we say
	that $C_{i'}$ is \emph{processed}. We continue with this operation
	until each hypercube $C_{i'}$ with a positive budget is processed.
	At the end, each hypercube $C_{i'}$ such that $C_{i'}\in\bar{\C}(Q')$
	for some cell $Q'$ has a budget of $w_{i}$. Therefore, 
	$\sum_{Q'\subseteq Q}\sum_{C_{i}\in\bar{\C}(Q')}w_{i}\le2w(\SOL(Q))$.

	Given the previous inequality and since we insert $2^d$ points for each
	$C_i\in\bar{C}(Q')$, we obtain that
	$w(P(Q)) =2^{d}\cdot2\sum_{Q'\subseteq Q}\sum_{C_{i}\in\bar{\C}(Q')}w_{i} \le2^{d+1}w(\SOL(Q))$.
\end{proof}
We want to argue that $w(\OPT(Q))\le(4+O(\epsilon))\cdot2^{d}w(\SOL(Q))$.
To this end, we classify the hypercubes in $\OPT(Q)$. For each $C_{i}\in\OPT(Q)$
such that $C_{i}\in\C'(Q')$ for some grid cell $Q'\subseteq Q$ we
say that $C_{i}$ is \emph{light} if $w_{i}\le w(P(Q'))\epsilon^{d+1}/(d^d \log N)$
and \emph{heavy} otherwise (for the set $P(Q')$ when the algorithm
finishes).

Next, we show that the total weight of light hypercubes is
$\varepsilon w(P(Q))$. We do this by observing that since each cell
$Q'\subseteq Q$ contains at most $(d/\epsilon)^{d}$ light hypercubes in
$\OPT(Q)\cap\C'(Q')$ (by Lemma~\ref{lem:properties-grid}), we can charge their
weights to $w(P(Q))$.
\begin{lem}
\label{lem:light}
	The total weight of light hypercubes is at most
	$\varepsilon w(P(Q))$.
\end{lem}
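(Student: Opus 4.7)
\textbf{Proof plan for Lemma~\ref{lem:light}.} The plan is to bound the contribution of light hypercubes on a per-cell basis using Lemma~\ref{lem:properties-grid}, and then sum across cells by exploiting the fact that the weights $w(P(Q'))$ accumulate monotonically as we go up the hierarchical grid.

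First, I would fix an arbitrary cell $Q' \subseteq Q$ and bound the total weight of light hypercubes in $\OPT(Q) \cap \C'(Q')$. By Lemma~\ref{lem:properties-grid}, the set $\OPT(Q) \cap \C'(Q')$ consists of at most $(d/\epsilon)^d$ independent hypercubes. By the definition of a light hypercube, each of them has weight at most $w(P(Q'))\epsilon^{d+1}/(d^d \log N)$, so their total weight is at most
\[
	(d/\epsilon)^d \cdot \frac{w(P(Q'))\epsilon^{d+1}}{d^d \log N} = \frac{\epsilon \cdot w(P(Q'))}{\log N}.
\]

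Next, I would relate $\sum_{Q' \subseteq Q} w(P(Q'))$ to $w(P(Q))$. Recall that when our algorithm constructs $P(Q)$, it first initializes it to $\bigcup_{Q' \in \ch(Q)} P(Q')$ and only then adds new points. Since weights of coinciding points are summed rather than overwritten, this gives $w(P(Q)) \ge \sum_{Q' \in \ch(Q)} w(P(Q'))$. Iterating this inequality down the grid, for every level $\ell \in \{\ell(Q), \ldots, \log N\}$ the cells at level $\ell$ contained in $Q$ satisfy
\[
	\sum_{Q' \subseteq Q : \ell(Q') = \ell} w(P(Q')) \le w(P(Q)).
\]
Grouping the sum over all $Q' \subseteq Q$ by level and using that there are at most $\log N + 1$ levels yields $\sum_{Q' \subseteq Q} w(P(Q')) \le (\log N + 1) \cdot w(P(Q))$.

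Combining these two bounds, the total weight of all light hypercubes in $\OPT(Q)$ is at most
\[
	\sum_{Q' \subseteq Q} \frac{\epsilon \cdot w(P(Q'))}{\log N} \le \frac{\epsilon}{\log N} \cdot (\log N + 1) \cdot w(P(Q)) \le 2\epsilon \cdot w(P(Q)),
\]
which proves the lemma (adjusting $\epsilon$ by a constant factor if needed to match the stated constant). The only mildly subtle point is verifying the monotonicity $w(P(Q)) \ge \sum_{Q' \in \ch(Q)} w(P(Q'))$, which follows directly from the initialization step of the algorithm for $P(Q)$; no further combinatorial argument is needed.
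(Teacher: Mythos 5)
Your proof is correct and takes essentially the same approach as the paper's. Both bound the per-cell contribution at $\epsilon w(P(Q'))/\log N$ via Lemma~\ref{lem:properties-grid} and then sum over the $O(\log N)$ levels of the hierarchy; the paper phrases this second step as a per-point charging argument (each point lies in at most $\log N$ sets $P(Q')$), which is equivalent to the level-wise monotonicity $\sum_{Q'\in\ch(Q)}w(P(Q'))\le w(P(Q))$ that you use, with both yielding $\epsilon w(P(Q))$ up to a $(1+1/\log N)$ factor that is absorbed in the $O(\epsilon)$ terms elsewhere.
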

\begin{proof}
	Let $Q'\subseteq Q$. For each light hypercube
	$C_{i}\in\C'(Q')\cap\OPT(Q)$ we charge $w_{i} \leq w(P(Q'))\epsilon^{d+1}/(d^d \log N)$
	to the points $p\in P(Q')$, proportionally to their respective
	weight $w_p$. There are at most $(d/\epsilon)^{d}$ light hypercubes in
	$\C'(Q') \cap \OPT(Q)$ (by Lemma~\ref{lem:properties-grid}).
	Hence, the total charge is at most $w(P(Q')) \cdot \varepsilon / \lg N$ and
	each point $p\in P(Q')$ receives a total charge of at
	most $w_p \cdot \epsilon/\log N$ for $Q'$. Each point $p$ is contained in at most $\log N$
	sets in $\left\{ P(Q')\right\} _{Q'\subseteq Q}$. Therefore,
	the total weight of all light hypercubes in $\OPT(Q)$ is bounded by
	$\varepsilon w(P(Q))$.
\end{proof}

For the heavy hypercubes, we pretend that we increase the weight of
each point in $P(Q)$ by a factor $2+O(\epsilon)$. We show that then
each heavy hypercube $C_{i}\in\OPT(Q)$ contains points in $P(Q)$
whose total weight is at least $w_{i}$. Hence, after increasing the
weight, the weight of the points in $P(Q)$ ``pays'' for all heavy
hypercubes in $\OPT(Q)$.

Let $\beta:=1/(\frac{1}{2+2\epsilon}-2\epsilon)=2+O(\epsilon)$. For
each cell $Q'\subseteq Q$ and for each hypercube $C_{i}\in\bar{\C}(Q')$
we place a weight of $\beta w_{i}$ essentially on each vertex of
$C_{i}$. Since each hypercube $C_{i}$ is an open set, $C_{i}$
does not contain any of its vertices. Therefore, we place the weight
$\beta w_{i}$ not exactly on the vertices of $C_{i}$, but on the
vertices of $C_{i}$ slightly perturbed towards the center of $C_{i}$.
Then, the weight we placed for the vertices of $C_{i}$ contributes
towards ``paying'' for $C_{i}$. Formally, for a small value $\delta>0$
we place a weight of $\beta w_{i}$ on each point of the form $(x_{i}^{(1)}+k^{(1)}s_{i}+\delta-k^{(1)}\cdot2\delta,\dots,x_{i}^{(d)}+k^{(d)}s_{i}+\delta-k^{(d)}\cdot2\delta)$
with $k^{(d')}\in\{0,1\}$ for each $d'\in[d]$. We choose $\delta$
such that any input hypercube $C_{i'}$ overlaps each point $(x_{i}^{(1)}+k^{(1)}s_{i}+\delta-k^{(1)}\cdot2\delta+,\dots,x_{i}^{(d)}+k^{(d)}s_{i}+\delta-k^{(d)}\cdot2\delta)$
corresponding to $C_{i}$ if and only if $C_{i}\cap C_{i'}\neq\emptyset$.
We say that these points are the \emph{charge points }of $C_{i}$.
If on one of these points we already placed some weight then we increase
its weight by $\beta w_{i}$. Let $\tilde{P}(Q')$ denote the points
on which we placed a weight in the above procedure for $Q'$ or for
a cell $Q''\subseteq Q'$. For each point $p\in\tilde{P}(Q)$ let
$\tilde{w}_{p}$ denote the total weight that we placed on $p$ in
this procedure. Since for each point $p_{i}\in P(Q')$ with weight
$w_{i}$ we introduced a point $\tilde{p}_{i}\in\tilde{P}(Q')$ with
weight $\tilde{w}_{i}\geq\beta w_{i}\geq w_{i}$, we have that $\sum_{p\in P(Q')}w_{p}\le\sum_{p\in\tilde{P}(Q')}\tilde{w}_{p}$.%

\begin{lem}
\label{lem:heavy}
	The total weight of heavy hypercubes is at most
	$(2+O(\varepsilon)) w(P(Q))$.
\end{lem}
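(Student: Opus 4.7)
My plan is to prove Lemma~\ref{lem:heavy} by a disjointness-based charging argument: each heavy hypercube $C_i \in \OPT(Q)$ will be ``paid for'' by the charge points of $\tilde{P}(Q)$ lying strictly inside $C_i$. The argument has three parts: (i) a global identity $w(\tilde{P}(Q)) = \beta\, w(P(Q))$, (ii) a local lower bound $\tilde{w}(\tilde{P}(Q) \cap C_i) \geq w_i$ for each heavy $C_i$, and (iii) a summation exploiting that $\OPT(Q)$ is independent, so its (open) members are pairwise disjoint. Part (i) is immediate from the construction: each $C_{i''} \in \bar{\C}(Q'')$ with $Q'' \subseteq Q$ contributes $2^d$ vertices of weight $w_{i''}$ each to $P(Q)$ and $2^d$ charge points of weight $\beta w_{i''}$ each to $\tilde{P}(Q)$.

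For part (ii), I would fix a heavy $C_i \in \OPT(Q) \cap \C'(Q')$ and let $A_i$ be the smallest hyperrectangle aligned with the final (refined) auxiliary grid $(Z^{(1)}, \dots, Z^{(d)})$ that contains $C_i$. Since the algorithm terminated without selecting $C_i$, the hypercube $C_i$ is not addible in $A_i$; writing $w_i \in [(1+\varepsilon)^k, (1+\varepsilon)^{k+1})$, this gives $w(P(Q') \cap A_i) > w_i/(2+2\varepsilon)$. Next, I would bound the weight in the frame $A_i \setminus C_i$: it is covered by at most $2d$ open grid slices, and by Lemma~\ref{lem:construct-aux-grid} (combined with the observations that grid refinement only subdivides slices, and that coordinates of hypercubes selected later become grid coordinates and so contribute nothing to any slice interior), the $P(Q')$-weight in each slice interior remains at most $\varepsilon^{d+2}\tilde{W}/(d^{d+1}\log N)$. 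Combining with the heaviness bound $w_i \geq w(P(Q'))\varepsilon^{d+1}/(d^d\log N) \geq \tilde{W}\varepsilon^{d+1}/(d^d\log N)$ would give $w(P(Q') \cap (A_i \setminus C_i)) \leq 2\varepsilon\, w_i$, hence
\begin{align*}
w(P(Q') \cap C_i) \;\geq\; \tfrac{w_i}{2+2\varepsilon} - 2\varepsilon\, w_i \;=\; w_i/\beta.
\end{align*}
Because $C_i$ is open and $\delta$ is chosen small enough that the inward perturbation preserves containment in $C_i$, each vertex in $P(Q') \cap C_i$ contributes a charge point of $\beta$ times its weight to $\tilde{P}(Q') \cap C_i \subseteq \tilde{P}(Q) \cap C_i$, giving $\tilde{w}(\tilde{P}(Q) \cap C_i) \geq \beta\, w(P(Q') \cap C_i) \geq w_i$.

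Part (iii) is then straightforward aggregation: using pairwise disjointness of $\OPT(Q)$,
\begin{align*}
\sum_{C_i \text{ heavy}} w_i \;\leq\; \sum_{C_i \text{ heavy}} \tilde{w}(\tilde{P}(Q) \cap C_i) \;\leq\; w(\tilde{P}(Q)) \;=\; \beta\, w(P(Q)) \;=\; (2 + O(\varepsilon))\, w(P(Q)).
\end{align*}
The main obstacle I anticipate is cleanly justifying that the slice-weight bound of Lemma~\ref{lem:construct-aux-grid} is preserved throughout the execution, both under grid refinement and under insertion of new vertices into $P(Q')$; once that is established, the remainder follows from the addibility termination condition, the heaviness definition, and the scaling factor $\beta$. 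A secondary subtlety is ensuring that the perturbation $\delta$ is sufficiently small relative to all relevant distances so that ``$v \in C_i$'' and ``$\tilde{v} \in C_i$'' agree for every vertex $v$ of every hypercube in $\bar{\C}(Q'')$, $Q'' \subseteq Q'$, which is possible since only finitely many such pairs occur during the algorithm.
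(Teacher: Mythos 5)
Your proof is correct and follows essentially the same approach as the paper: charge each heavy hypercube $C_i$ to the perturbed charge points of $\tilde{P}(Q)$ lying inside it, use the non-addibility condition together with the slice-weight bound of Lemma~\ref{lem:construct-aux-grid} to show those charge points carry total weight at least $w_i$, then aggregate over the pairwise disjoint members of $\OPT(Q)$ and use $w(\tilde{P}(Q)) = \beta\, w(P(Q))$. Your only (valid) deviation is invoking non-addibility at the final iteration, whereas the paper isolates the first iteration in which a hypercube of size larger than $s_i$ is selected (or the final one if none is); you also make explicit the preservation of the slice-weight bound under grid refinement and under insertion of new on-grid vertices, which the paper relies on implicitly.
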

\begin{proof}
	Let $C_{i}\in\OPT(Q)$ be a heavy hypercube. We claim that for each heavy
	hypercube $C_{i}\in\OPT(Q)$ it holds that
	$\sum_{p\in \tilde{P}(Q) \cap C_{i}}\tilde{w}_{p}\ge w_{i}$.
	This implies the claim since
	$(2+O(\varepsilon)) \sum_{p\in P(Q) \cap C_i} w_p \geq \sum_{p\in \tilde{P}(Q) \cap C_i} \tilde{w}_p$.

   	Let $Q'$ denote the cell such that $C_{i}\in\C'(Q')$.
	Let $\tilde{\C}(Q')$ denote the hypercubes that are in the set $\bar{\C}(Q')$
	at some point while the algorithm processes the cell $Q'$. If $C_{i}\in\tilde{\C}(Q')$
	then the claim is true since we placed a weight of $\beta w_{i}$
	on essentially each of its vertices (slightly perturbed by $\delta$
	towards the center of $C_{i}$). Assume that $C_{i}\notin\tilde{\C}(Q')$.
	Let $k$ be such that $C_{i}\in\C_{k}(Q')$. 
	Consider the first iteration when we processed $Q'$ such that we added a
	hypercube $C_{i'}$ with size $s_{i'} > s_i$ or the final iteration if no
	hypercube with size larger $s_i$ is added. 
	Let $A\subseteq Q'$ denote the smallest
	set that is aligned with the auxiliary grid $Z^{(1)},\dots,Z^{(d)}$
	for the cell $Q'$ such that $C_{i}\subseteq A$.
	If $(1+\epsilon)^{k}\ge2w(P(Q')\cap A)$ then in this iteration we
	would have added $C_{i}$ instead of $C_{i'}$ which is a contradiction.
	If $(1+\epsilon)^{k}<2w(P(Q')\cap A)$ for the set $P(Q')$ at the beginning of
	this iteration then $w_{i}\le(1+\epsilon)^{k+1}<(1+\epsilon)2w(P(Q')\cap A)$
	and %
	\[
		\begin{aligned}
			\sum_{p\in \tilde{P}(Q) \cap C_{i}}\tilde{w}_{p}
			 & = \sum_{p\in \tilde{P}(Q) \cap C_{i}}\beta w_{p}\\
			 & \ge \beta\left(w(P(Q')\cap A)-2d\epsilon^{d+2}\tilde{W}/(d^{d+1}\log N)\right)\\
			 & \ge \beta\left(w(P(Q')\cap A)-2\epsilon^{d+2}w(P(Q'))/(d^d \log N)\right)\\
			 & \ge \beta\left(w(P(Q')\cap A)-2\epsilon w_{i}\right)\\
			 & \ge \beta\left(w_{i}/(2+2\epsilon)-2\epsilon w_{i}\right)\\
			 & = w_{i}.
		\end{aligned}
	\]
	To see that the first inequality holds, note that $w(P(Q')\cap A) \leq w(P(Q')\cap C_i) + Y$,
	where $Y$ is the weight of the points in the auxiliary grid slices of $A$ which $C_i$
	does not fully overlap. Since $A$ is the smallest aligned hyperrectangle containing
	$C_i$, in each dimension there are only two slices which are in $A$ 
	and which $C_i$ partially
	overlaps (and none which are in $A$ and do not overlap with $C_i$ at all).
	Thus, there are at most $2d$ such slices in total. Using the definition of
	the
	auxiliary grid (Lemma~\ref{lem:construct-aux-grid}), we obtain that
	$Y \leq 2d \cdot \epsilon^{d+2}\tilde{W}/(d^{d+1}\log N)$, where $\tilde{W} = w(P(Q'))$. This
	provides the first inequality. The third inequality holds because $C_i$ is
	heavy, the fourth inequality uses the above condition on $w_i$ and the last
	equality is simply the definition of $\beta$.
\end{proof}

\begin{proof}[Proof of Lemma~\ref{lem:apx-ratio-weighted-hypercubes}.]
	We can bound the weight of all light and heavy hypercubes by
   	$(2+O(\varepsilon))w(P(Q))$ by Lemmas~\ref{lem:light} and~\ref{lem:heavy}.
	Then applying Lemma~\ref{lem:point-sol} yields that
	\begin{align*}
		(2+O(\epsilon))\sum_{p\in P(Q)}w_{p}
		\le(2+O(\epsilon))\cdot2^{d+1}w(\SOL(Q)))
		=(4+O(\epsilon))2^{d}w(\SOL(Q))).
	\end{align*}

	We conclude that
	\begin{align*}
		w(\SOL(Q))\le w(\OPT(Q))\le(2+O(\epsilon))w(P(Q))\le(4+O(\epsilon))2^{d}w(\SOL(Q))).
		&\qedhere
	\end{align*}
\end{proof}

As noted before, in Appendix~\ref{sec:adjustment} we describe how
to adjust the hierarchical grid decomposition and the algorithm slightly such that
we obtain polylogarithmic update time overall. We output the solution
$\SOL:=\SOL(Q^{*})$. Using the data structure $P(Q^{*})$, in time
$O(1)$ we can also output $w(P(Q^{*}))$ which is an estimate for $w(\SOL)$
due to Lemma~\ref{lem:apx-ratio-weighted-hypercubes}.
\begin{thm}
\label{thm:weighted-hypercubes}
	For the weighted maximum independent set of hypercubes problem with weights in $[1,W]$ 
	there are fully dynamic algorithms
	that maintain $(4+O(\epsilon))2^{d}$-approximate
	solutions deterministically with worst-case update
	time~$(d/\epsilon)^{O(d^2+1/\epsilon)}\cdot\lg W\cdot\log^{2d-1}n\log^{2d+1}N$
	and with high probability with worst-case update time 
	$\left(\frac{d}{\epsilon}\right)^{O(d^{2})}\cdot\lg W\cdot\log^{2d-1}n\log^{2d+2}N$.
\end{thm}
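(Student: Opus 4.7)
The plan is to assemble the theorem from the building blocks already established in the section, namely the approximation guarantee of Lemma~\ref{lem:apx-ratio-weighted-hypercubes}, the per-cell running-time bound of Lemma~\ref{lem:iterations}, the grid-alignment guarantee of Lemma~\ref{lem:offset}, and the structural fact from Lemma~\ref{lem:properties-grid} that each hypercube belongs to $\C(Q')$ for at most $\log N$ cells.

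First I would fix an offset $a\in\off(\epsilon)$ for which the optimal grid-aligned solution $\OPT'$ satisfies $w(\OPT')\ge(1-O(\epsilon))w(\OPT)$. Applying Lemma~\ref{lem:apx-ratio-weighted-hypercubes} to the root cell $Q^{*}=[0,N]^{d}$ gives $w(\OPT')\le(4+O(\epsilon))2^{d}w(\SOL(Q^{*}))$, so $\SOL=\SOL(Q^{*})$ is a $(4+O(\epsilon))2^{d}$-approximation of $\OPT$ as desired. To obtain this guarantee deterministically, I would run one copy of the data structure in parallel for each of the $(d/\epsilon)^{O(1/\epsilon)}$ offsets in $\off(\epsilon)$ and return the solution of maximum weight, which multiplies the update time by a factor of $(d/\epsilon)^{O(1/\epsilon)}$; for the randomized variant I would draw $O(\log N)$ random offsets from $\off(\epsilon)$, so that a good offset is present with high probability, which instead multiplies the update time by $O(\log N)$.

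Next I would bound the update time for a fixed offset. When a hypercube $C$ is inserted or deleted, by Lemma~\ref{lem:properties-grid} there are at most $\log N$ cells $Q$ with $C\in\C(Q)$, and for each such cell we update the data structures $\D(Q),\D'(Q),\D_k(Q),\D'_k(Q)$ in time $O(\log^{2d+1}n)$ using Lemma~\ref{lem:data-structure} (and likewise the list of non-empty cells $\G'$ at negligible cost). We then recompute $\bar{\C}(Q)$ from scratch in bottom-up order of level; by Lemma~\ref{lem:iterations} this takes time $(d/\epsilon)^{O(d^{2})}\cdot\lg W\cdot\log^{2d-1}n\log^{2d}N$ per cell, and hence $(d/\epsilon)^{O(d^{2})}\cdot\lg W\cdot\log^{2d-1}n\log^{2d+1}N$ summed over the $\log N$ affected cells. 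Combining this with the $(d/\epsilon)^{O(1/\epsilon)}$ (resp.\ $O(\log N)$) overhead from handling all offsets gives the deterministic (resp.\ randomized) update times stated in the theorem.

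The main obstacle is that the algorithm, as described in the body of the section, initializes $P(Q)$ as the union $\bigcup_{Q'\in\ch(Q)}P(Q')$, which can contain more than polylogarithmically many points and would dominate the update time. I would address this by invoking the modification developed in Appendix~\ref{sec:adjustment}, which alters the hierarchical grid decomposition and the way $P(Q)$ is maintained so that each update touches only polylogarithmically much of $P(Q)$ while preserving the properties used in Lemmas~\ref{lem:construct-aux-grid}, \ref{lem:iterations} and~\ref{lem:apx-ratio-weighted-hypercubes}. Once this adjustment is in place, the per-cell cost is the one dictated by Lemma~\ref{lem:iterations} and the overall update times follow by the accounting above.
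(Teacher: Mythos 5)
Your proposal is correct and follows essentially the same route as the paper's proof: the paper simply combines Lemma~\ref{lem:framework-dynamic} and Lemma~\ref{lem:iterations} for the update time and Lemma~\ref{lem:apx-ratio-weighted-hypercubes} for the approximation ratio, while your argument unpacks Lemma~\ref{lem:framework-dynamic} into its constituent parts (Lemma~\ref{lem:offset} for the offset enumeration, Lemma~\ref{lem:properties-grid} for the $\log N$ affected cells, Lemma~\ref{lem:data-structure} for the data structure maintenance) and correctly flags the Appendix~\ref{sec:adjustment} modification needed to keep the $P(Q)$ maintenance polylogarithmic.
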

\begin{proof}
	This follows immediately from combining Lemmas~\ref{lem:framework-dynamic}
	and \ref{lem:iterations} for the running time and
	Lemma~\ref{lem:apx-ratio-weighted-hypercubes} for the approximation ratio.
\end{proof}

\bibliographystyle{abbrv}
\bibliography{dynamic-geo-IS}
\appendix

\section{Overview of the Appendix}
\label{sec:overview-appendix}
To help the reader navigate the appendix, we provide a brief overview of the
different sections:
\begin{itemize}
	\item Section~\ref{apx:data-struct-maintain-grid}: Data Structure for Maintaining Grid Cells
	\item Section~\ref{sec:framework-details}: Details of the Algorithmic Framework
	\item Section~\ref{sec:Intervals-unweighted}: Dynamic Independent Set of Unweighted Intervals
	\item Section~\ref{sec:Unweighted-hypercubes}: Dynamic Independent Set of Unweighted Hypercubes
	\item Section~\ref{apx:weighted-intervals}: Dynamic Independent Set of Weighted Intervals
	\item Section~\ref{sec:rectangles}: Dynamic Independent Set of Rectangles
		and Hyperrectangles
	\item Section~\ref{sec:lower-bounds}: Lower Bounds
	\item Section~\ref{sec:omitted}: Omitted Proofs
\end{itemize}

\section{Data Structure for Maintaining Grid Cells}
\label{apx:data-struct-maintain-grid}
In this section we describe our data structure for maintaining the grid cells $Q$ for which $\C(Q)\ne \emptyset$. We denote by $\G'\subseteq \G$ the
set of these grid cells and we maintain them using the data structure from the following lemma. Sometimes we identify $\G'$ with this data structure.

\begin{lem}
\label{lem:grid-cells}
	There exists a data structure which maintains
	a set of grid cells $\G'\subseteq\G$ and which offers the following
	operations: 
	\begin{enumerate}
		\item Given a cell $Q\notin\G'$ we can insert $Q$ into $\G'$ and initialize
			all data structures $\D(Q)$, $\D'(Q)$, $\D_{k}(Q)$, and $\D'_{k}(Q)$
			for all $k=1,\dots,\lg_{1+\varepsilon}N$ in worst-case time $O(\lg|\G'|)$. 
		\item Given a cell $Q\in\G'$ we can remove $Q$ from $\G'$ in worst-case
			time $O(\lg|\G'|)$. 
		\item Given a cell $Q\in\G'$, we can obtain the data structures $\D(Q)$,
			$\D'(Q)$, $\D_{k}(Q)$, and $\D'_{k}(Q)$, for all $k=1,\dots,\lg_{1+\varepsilon}W$
			in worst-case time $O(\lg|\G'|)$, where $W$ is an upper bounds on
			the weights of the hypercubes in $\C$. Also, we can query or change the
			pointer to $\ALG(Q)$ in time $O(\lg|\G'|)$. 
		\item Given a cell $Q$ of level $\ell<\lg N$, we can obtain all cells
			in $\ch(Q)\cap\G'$ in worst-case time $O(2^{d}\lg|\G'|)$, 
	\end{enumerate}
\end{lem}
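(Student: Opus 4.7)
The plan is to store the cells of $\G'$ in a balanced binary search tree (e.g., a red--black tree) keyed by a compact identifier for each cell. A grid cell $Q_{\ell,k}$ is uniquely determined by its level $\ell\in\{0,\dots,\log N\}$ and its coordinate vector $k=(k^{(1)},\dots,k^{(d)})\in\Z^d$, so we can use the tuple $(\ell,k^{(1)},\dots,k^{(d)})$, encoded as a single integer of $O(d\lg N)$ bits, as the search key. At each tree node we store pointers to the data structures $\D(Q),\D'(Q),\D_k(Q),\D'_k(Q)$ (the last two lazily allocated, see below) and the current pointer to $\ALG(Q)$. Standard balanced-BST bookkeeping then gives $O(\lg|\G'|)$ worst-case time for insertion, deletion, and lookup of a key.

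For operation~1, inserting $Q\notin\G'$ amounts to a BST insertion of its key followed by attaching freshly initialized auxiliary structures. The structures $\D(Q)$ and $\D'(Q)$ are initialized in $O(1)$ worst-case time by the first item of Lemma~\ref{lem:data-structure}. The weight-class structures $\D_k(Q)$ and $\D'_k(Q)$ are not created eagerly for every $k$; instead we allocate them lazily, the first time a hypercube of weight class~$k$ is inserted into $\C(Q)$ or $\C'(Q)$, which again costs $O(1)$ per allocation and does not appear in the cost of operation~1. For operation~2 we simply run the standard BST deletion on the key of $Q$, releasing the (then-unused) auxiliary structures; this is $O(\lg|\G'|)$. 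For operation~3, a single key lookup returns the node for $Q$ and, from it, all the requested pointers together with the $\ALG(Q)$ field, which can be read or overwritten in $O(\lg|\G'|)$.

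For operation~4 the key observation is that the children of $Q=Q_{\ell,k}$ at level $\ell+1$ have coordinate vectors of the form $(2k^{(1)}+b^{(1)},\dots,2k^{(d)}+b^{(d)})$ with $(b^{(1)},\dots,b^{(d)})\in\{0,1\}^d$, so there are exactly $2^d$ candidate child keys, each computable from the key of $Q$ in $O(d)$ time and independent of any parent-to-child pointers. We enumerate the $2^d$ candidates and perform a BST lookup for each; the lookups that succeed are precisely the elements of $\ch(Q)\cap\G'$, yielding the claimed $O(2^d\lg|\G'|)$ worst-case bound.

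I expect no real obstacle here: every step reduces to textbook balanced-tree operations composed with the $O(1)$-time initializations guaranteed by Lemma~\ref{lem:data-structure}. The only subtle point worth highlighting in a full write-up is the lazy initialization of $\D_k(Q)$ and $\D'_k(Q)$, which is what prevents the number of weight classes from appearing additively in the cost of operation~1, and the use of predictable child keys in place of explicit parent-to-children pointers in operation~4.
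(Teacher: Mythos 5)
Your proof is correct and takes essentially the same approach as the paper's: a balanced search structure keyed on the cell identifier $(\ell,k)$, with the $2^d$ children obtained by computing their candidate keys $2k+b$, $b\in\{0,1\}^d$ (the paper keeps one ordered list $L_\ell$ per level rather than a single global tree, a purely cosmetic difference). Your lazy-allocation observation for $\D_k(Q)$ and $\D'_k(Q)$ is a welcome clarification that the paper's write-up silently elides, since eagerly initializing all $\Theta(\lg_{1+\varepsilon}W)$ weight-class structures would not fit within the claimed $O(\lg|\G'|)$ bound.
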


\begin{proof}
Our data structure works as follows. For each level $\ell=0,1,\dots,\lg N$,
we maintain an ordered list $L_{\ell}$ consisting of the non-empty
cells in $\G_{\ell}$; $L_{\ell}$ is lexicographically ordered by
the vectors $k\in\Z^{d}$ from the definition of the grid cells which uniquely
identify the cells in $\G$. Whenever a grid cell $Q\in\G$ is added,
we compute the vector $k\in\Z^{d}$ corresponding to $Q$ 
and we insert $Q$ into $L_{\ell(Q)}$. This takes worst-case time
$O(\lg|\G'|)$ since each level can contain at most $O(|\G'|)$ non-empty
cells. Similarly, when a cell $Q$ is removed then then we compute
$k$ and remove $Q$ from $L_{\ell(Q)}$ in worst-case time $O(\lg|\G'|)$.
To change a pointer for $\ALG(Q)$, we find $Q$ in $\G'$ in time
$O(\lg|\G'|)$ and then change the corresponding pointer.

For a given cell $Q$ of level $\ell<\lg N$, the hierarchical grid decomposition can identify
all $O(2^{d})$ non-empty children cells $Q_{i}\in\ch(Q)$ using the
previously defined query operation for each $Q_{i}\in\ch(Q)$. This
takes worst-case query time $O(2^{d}\lg|\G'|)$.
\end{proof}

\section{Details of the Algorithmic Framework}
\label{sec:framework-details}

In this section, we describe the formal details of the algorithmic framework
which is used by our algorithms.
 
\subsection{Algorithmic Framework for Unweighted Hypercubes}
\label{sec:framework-details:unweighted}

In this subsection, we provide the formal details of the algorithm framework
which we only sketched in Section~\ref{sec:framework}. The framework we use in
this subsection is used by our algorithms for unweighted hypercubes; its
adjustment for unweighted hypercubes is discussed in
Section~\ref{sec:adjustment}

Our algorithmic framework is a bottom-up
dynamic program. Seen as an offline algorithm, first for each cell
$Q$ of the lowest level $\lg N$ we compute a solution $\ALG(Q)\subseteq\C(Q)$
using a blackbox algorithm which we denote by \alg. Formally, $\ALG(Q):=\alg\left(Q,\DD(Q),\emptyset\right)$
where for each cell $Q$ we define $\DD(Q):=\left\{ \D(Q),\D'(Q),\left\{ \D_{k}(Q)\right\} _{k\in\N},\left\{ \D'_{k}(Q)\right\} _{k\in\N}\right\} $
to be its data structures that we described in Section~\ref{sec:framework}. The concrete implementation of \alg depends
on the dimension $d$ and whether the input hypercubes are weighted
or unweighted; we present implementations of \alg later in the paper.
Then we iterate over the levels in the order $\log N-1,\dots,0$ where
in the iteration of each level~$\ell$ we compute a solution for
the each cell $Q$ of level~$\ell$ by running \alg on $Q$. Formally,
we set $\ALG(Q):=\alg\left(Q,\DD(Q),\left\{ \ALG(Q_{i})\right\} _{Q_{i}\in c(Q)\cap\G'}\right)$
where the latter is the call to $\alg$ that obtains as input 
\begin{itemize}
\item the grid cell $Q\in\G$ together with its corresponding data structures
$\DD(Q)$ and 
\item a solution $\ALG(Q_{i})$ for each grid cell $Q_{i}\in\ch(Q)\cap\G'$,
i.e., each child of $Q$ in $\G'$. 
\end{itemize}
For each of our studied settings we will define an algorithm that
outputs a solution $\SOL$ based on the entries $\ALG(Q)$ for all
cells $Q\in\G'$. This algorithm will depend on the concrete implementation
of $\alg$ and will run in time $O(|\SOL|\log^{O(1)}N)$.

In the dynamic setting, we do the following whenever a hypercube $C$
is inserted or deleted. Let $\ell:=\ell(C)$. We first identify all
cells $Q$ for which $C\in\C(Q)$. Assume that those are the cells
$Q_{\ell},\dots,Q_{0}$ such that $\ell(Q_{j})=j$ for each $j\in\{0,\dots,\ell\}$.
We update $\G'$, i.e., insert a cell $Q_{j}$ that is not already
in $\G'$ or remove such a cell if its corresponding set $\C'(Q_{j})$
has become empty in case that we remove $C$.{} Then, we update each
data structure $\D(Q_{j})$, $\D'(Q_{j})$, $\D_{k}(Q_{j})$, $\D'_{k}(Q_{j})$
such that $C\in\C(Q_{j})$, $C\in\C'(Q_{j})$, $C\in\C_{k}(Q_{j})$,
$C\in\C'_{k}(Q_{j})$, resp., for each $j\in\{0,\dots,\ell\}$ s.t.\ $Q_{j}\in\G'$.

We recompute the solutions $\ALG(Q_{j})$ in decreasing order of $j=\ell,\dots,0$
in the same way as described above, i.e., we define $\ALG(Q_{j}):=\alg\left(Q_{j},\DD(Q_{j}),\emptyset\right)$
if $\ell(Q_{j})=\log N$ and $\ALG(Q_{j}):=\alg\left(Q_{j},\DD(Q_{j}),\left\{ \ALG(Q_{j'})\right\} _{Q_{j'}\in\ch(Q_{j})\cap\G'}\right)$.
This requires $O(\lg N)$ calls to the black box algorithm \alg and
takes a total time of $O(2^{d}\lg n\lg N)$ to identify the children
cells $c(Q_{j})$ for all $Q_{j}$ with $j=\ell,\dots,0$. Thus, the
algorithmic framework has the following guarantee on the update times. 
\begin{lem}
\label{lem:framework-dynamic}
	Assume that the algorithm $\alg$ runs
	in time $T$. Then we obtain a worst-case update time of $(d/\epsilon)^{O(1/\epsilon)}\lg N(\lg^{2d+1}n+2^{d}\lg n+T))$
	in the deterministic setting and $O(\lg^{2}N(\lg^{2d+1}n+2^{d}\lg n+T))$
	in the randomized setting. 
\end{lem}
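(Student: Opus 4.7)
The plan is to trace through the update procedure described just above the lemma statement and account for the cost incurred at each of the $O(\log N)$ affected levels, then multiply by the number of parallel runs of the framework (one per offset in the deterministic case, $O(\log N)$ random offsets in the randomized case).

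First, I would fix a single offset $a$ and show that a single run of the framework supports updates in time $O\bigl(\log N\cdot(\log^{2d+1} n + 2^d \log n + T)\bigr)$. When a hypercube $C$ is inserted or deleted, at most $\log N$ cells $Q_\ell,\dots,Q_0$ satisfy $C\in\C(Q_j)$ (one per level, by the second assertion of Lemma~\ref{lem:properties-grid}). Identifying these cells and inserting/removing them from $\G'$ costs $O(\log|\G'|) = O(\log n + \log N)$ per cell by Lemma~\ref{lem:grid-cells}, so $O(\log N \cdot \log(nN))$ over all affected levels.

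Next I would account for the data structure updates: for each $Q_j$ with $Q_j\in\G'$ we insert or delete $C$ in $\D(Q_j), \D'(Q_j)$ and in the single $\D_k(Q_j),\D'_k(Q_j)$ with $C\in\C_k$. By Lemma~\ref{lem:data-structure}, each of these operations costs $O(\log^{2d+1} n)$ worst-case time, contributing $O(\log N\cdot \log^{2d+1} n)$ overall. Then I would bound the cost of assembling the inputs for the $O(\log N)$ calls to $\alg$: for each $Q_j$ we must locate the children $\ch(Q_j)\cap\G'$ and retrieve their solutions; by the fourth item of Lemma~\ref{lem:grid-cells} this costs $O(2^d\log|\G'|) = O(2^d\log n)$ per level, totaling $O(2^d \log n\cdot \log N)$. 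Finally, the $O(\log N)$ calls to $\alg$ contribute $O(T\log N)$. Summing yields the single-offset bound $O(\log N\cdot(\log^{2d+1} n + 2^d \log n + T))$.

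The last step is to multiply by the number of parallel instances. As described just after Lemma~\ref{lem:offset}, in the deterministic setting we run the framework in parallel for every offset $a\in\off(\epsilon)$, where $|\off(\epsilon)|\le (d/\epsilon)^{O(1/\epsilon)}$, giving the stated deterministic bound of $(d/\epsilon)^{O(1/\epsilon)}\log N\cdot(\log^{2d+1} n + 2^d\log n + T)$. In the randomized setting, we run $O(\log N)$ parallel instances for independently sampled offsets, which introduces an extra $\log N$ factor and yields $O(\log^2 N\cdot(\log^{2d+1} n + 2^d\log n + T))$. I do not see any real obstacle beyond careful bookkeeping; the main thing to double-check is that all the cells touched by a single update remain consistent across the bottom-up recomputation and that maintaining $\G'$ under insertions/removals of cells does not itself trigger cascading work beyond the $O(\log N)$ affected cells, which is immediate from Lemma~\ref{lem:grid-cells}.
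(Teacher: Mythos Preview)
Your proposal is correct and follows essentially the same approach as the paper: the paper's proof simply says the bound follows from the preceding discussion together with Lemma~\ref{lem:offset} (number of offsets), Lemma~\ref{lem:data-structure} (data-structure update times), and Lemma~\ref{lem:grid-cells} (maintaining $\G'$), and you have spelled out exactly this bookkeeping in more detail.
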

\begin{proof}
	This follows immediately from the discussion preceding
	Lemma~\ref{lem:framework-dynamic} together with Lemma~\ref{lem:offset} for the number of
	offsets, Lemma~\ref{lem:data-structure} for the insertion and deletion times
	of hypercubes and Lemma~\ref{lem:grid-cells} for the updates of the grid $\G'$.
\end{proof}

\subsection{Adjustment of Hierarchical Grid Decomposition for Weighted Hypercubes}
\label{sec:framework-details:weighted}
\label{sec:adjustment}

Our implementation of the algorithm from Section~\ref{sec:Weighted-hypercubes} might
not run in time $\polylg(n, N)$ since initializing the data structure
$P(Q)$ takes time $\Omega(|P(Q)|)$ and it is possible that $|P(Q)| = \Omega(n)$.
Note, however, that initializing the data structure $P(Q)$ is the only operation
in our algorithm for weighted hypercubes which potentially requires
superpolylogarithmic time.  We describe a slight modification of our
general hierarchical grid decomposition such that we can maintain $P(Q)$ efficiently and obtain a total
update times of $\polylg(n, N)$. 

Roughly speaking, our approach works as follows. Instead of building $P(Q)$ from
scratch at the beginning of each update, we update it in the background whenever
necessary. In particular, whenever a cell $Q'$ adds/removes a hypercube $C$ to
$\bar{\C}(Q')$, then the vertices of $C$ are added/removed from all data
structures $P(Q)$ such that $Q' \subset Q$; note that this effects at most $\lg N$
cells in total.  This will ensure that when running the algorithm for
a cell $Q$, then at the beginning the set $P(Q)$ already equals
$\bigcup_{Q'\in\ch(Q)}P(Q')$.

We now describe this formally. First, let us introduce several additional data
structures. We introduce an additional global data structure
$\bar{\D}$ according to Lemma~\ref{lem:data-structure} that maintains the set
$\bar{\C}:=\bigcup_{Q\in\G'}\bar{\C}(Q)$, i.e., $\bar{D}$ contains the
hypercubes $\bar{\C}(Q)$ for all non-empty cells $Q\in\G'$ (recall that $\G'$ is
the set of non-empty grid cells).  Also, for each cell $Q\in\G'$ we introduce a
data structure $\tilde{\D}(Q)$ that stores a set $\tilde{\C}(Q)\subseteq\C'(Q)$
that contains all hypercubes contained in the set $\bar{\C}(Q)$ at some point
during the execution of the algorithm,
i.e., $\tilde{C}(Q)$ also contains hypercubes which (during a single run of the
algorithm) are inserted into $\bar{\C}$ and later removed.
Note that since the algorithm runs for at most $d^d \log W/\epsilon^{d}$ iterations
(see Lemma~\ref{lem:iterations}) and each iteration inserts at most one
hypercube into $\bar{\C}(Q)$, we have that $|\tilde{\C}(Q)|\le d^d \log W/\epsilon^{d}$.  
Furthermore, we adjust the data structure due to Lemma~\ref{lem:grid-cells} such
that we can access $\tilde{\D}(Q)$ in time $O(\log|\G'|)$ (like
$\D(Q)$ and $\D'(Q),$ etc.).

We explain now how to adjust our hierarchical grid decomposition in case that a hypercube is inserted or deleted.
As before, when a hypercube $C$ is inserted or deleted, we update
the data structure maintaining the non-empty grid cells $\G'$ and we update the
data structures $\D(Q)$, $\D'(Q)$, $\D_{k}(Q)$, and $\D'_{k}(Q)$ for all grid
cells $Q$ and all $k\in\N$ such that $C\in\C(Q)$, $C\in\C'(Q)$, $C\in\C_{k}(Q)$,
$C\in\C'_{k}(Q)$, respectively. As before, this takes time $O(\lg N(\lg n+\lg^{2d+1}n))$.
Let $Q_{j},\dots,Q_{0}$ be all cells $Q$ such that $C\in\C(Q)$ and
assume that $\ell(Q_{j'})=j'$ for each $j'\in\{0,\dots,j\}$. Recall
that in each cell $\ALG(Q_{j})$ we store $(\bar{\C}(Q_{j}),P(Q_{j}))$
where $|\bar{\C}(Q_{j})|\le (d/\epsilon)^{d}$. For each $j'\in\{0,\dots,j\}$
we delete all hypercubes from $\bar{\C}(Q_{j'})$ and
$\tilde{\C}(Q_{j'})$ since we will recompute all these sets and we
remove the hypercubes in $\bar{\C}(Q_{j'})$ from $\bar{\D}$. Also,
we update the set $P(Q_{j''})$ for each $j''\le j'$ accordingly
such that it no longer contains the weights of the vertices of the
hypercubes in $\tilde{\C}(Q_{j'})$. More formally, for each $j'\in\{0,\dots,j\}$,
each $C_{i}\in\tilde{\C}(Q_{j'})$, each $j''\in\{0,\dots,j'\}$, and
each vertex $p$ of $C_{i}$ we decrement the weight of $p$ in $P(Q_{j''})$
by $w_{i}$. After that, for each $j'\in\{0,\dots,j\}$ and for each
$C_{i}\in\bar{\C}(Q_{j'})$ we remove $C_{i}$ from $\bar{\D}$ and
finally set $\bar{\C}(Q_{j'}),\tilde{\C}(Q_{j'})=\emptyset$.
This takes time $O(d^d \log^{2d}n \lg W / \varepsilon^d)$ since the most expensive
operation is to delete the $O(d^d \lg W / \varepsilon^d)$ hypercubes from $\tilde{\C}(Q)$
where each deletion takes time $O(\lg^{2d} n)$; note that this time is later
subsumed by the running time of the algorithm from Lemma~\ref{lem:iterations}.
Finally, we call the algorithm
for each cell $Q$ in $Q_{j},Q_{j-1},\dots,Q_{0}$ in this order.

Now let us look how we need to adapt the algorithm (as defined in
Section~\ref{sec:Weighted-hypercubes}) to obtain polylogarithmic running
times. When running the algorithm, we omit the step where we define $P(Q)$ to be
$\bigcup_{Q'\in\ch(Q)}P(Q')$. Whenever we add a hypercube
$C_{i}$ to $\bar{\C}(Q)$ then we add $C_{i}$ also to $\tilde{\C}(Q)$
and for each cell $Q'$ with $Q\subseteq Q'$, we increase in $P(Q')$
the weight of all vertices of $C_{i}$ by $w_{i}$. This ensures that
when we run the algorithm for some cell $Q$ then at the beginning the set
$P(Q)$ already equals $\bigcup_{Q'\in\ch(Q)}P(Q')$. Observe that,
hence, the algorithm directly changes entries $\ALG(Q')$ for cells $Q'\ne Q$
which we did not allow before in our hierarchical grid decomposition. Also, whenever we add a hypercube $C_{i}$
to $\bar{\C}(Q)$ then we add $C_{i}$ also to $\bar{\D}$, and whenever
we remove a hypercube $C_{i}$ from $\bar{\C}(Q)$ then we remove
$C_{i}$ also from $\bar{\D}$.

Finally, we devise the routine for returning the global solution $\SOL(Q^{*})$
in $\tilde{O}(|\SOL(Q^{*})|)$ time using a recursion on the grid cells that
stops if a grid cell $Q$ does not contain any hypercube from $\SOL(Q^{*})$. 
\begin{lem}
\label{lem:output-routine-weighted-hypercubes}
In time $|\SOL(Q^{*})|\cdot (d/\epsilon)^{O(d^{2})} \cdot \log^{d+1}N$,
we can output all hypercubes in $\SOL(Q^{*})$. 
\end{lem}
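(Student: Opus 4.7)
The plan is to implement the output routine by a top-down recursion on the grid hierarchy starting at $Q^{*}$, carrying as an auxiliary parameter the set $A$ of hypercubes from $\bar{\C}(Q'')$ over all ancestors $Q'' \supsetneq Q$ on the path from $Q^{*}$ to the current cell $Q$. Since each $\bar{\C}(Q'')$ is always an independent set inside $\C'(Q'')$, Lemma~\ref{lem:properties-grid} yields $|\bar{\C}(Q'')| \le (d/\epsilon)^{d}$, so $|A| \le (d/\epsilon)^{d} \log N$ throughout. At a cell $Q$ I fetch $\bar{\C}(Q)$ via $\tilde{\D}(Q)$ (accessible through the adjusted grid data structure of Section~\ref{sec:adjustment}); for each $C \in \bar{\C}(Q)$ I check whether any hypercube in $A$ overlaps $C$, and I output $C$ exactly when no such overlap exists. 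By the definition of $\SOL(Q^{*})$ given immediately before the lemma, this yields precisely $\SOL(Q^{*}) \cap \bar{\C}(Q)$, and costs $(d/\epsilon)^{O(d)} \log N$ per visited cell.

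The main obstacle is deciding cheaply yet correctly whether to recurse into a child $Q' \in \ch(Q) \cap \G'$: the naive test ``does $Q'$ contain some hypercube of $\bar{\D}$?'' is too coarse, because all surviving candidates in the subtree at $Q'$ might be killed by $A$, which would make the recursion visit cells not proportional to $|\SOL(Q^{*})|$. I would exploit the observation that a candidate $C_{i} \in \bar{\C}(Q'')$ with $Q'' \subseteq Q'$ belongs to $\SOL(Q^{*})$ iff $C_{i}$ is disjoint from every hypercube in $A$, equivalently iff $C_{i} \subseteq R := Q' \setminus \bigcup_{C \in A} C$. I decompose the region $R$ (a box minus $|A|$ axis-aligned boxes) into a collection $\mathcal{B}$ of $O(|A|^{d}) = (d/\epsilon)^{O(d^{2})} \log^{d} N$ axis-aligned subboxes using a standard sweep, and for each $B \in \mathcal{B}$ I invoke operation~3 of Lemma~\ref{lem:data-structure} on $\bar{\D}$ to test in $O(\log^{2d-1}|\bar{\C}|)$ time whether there is some $C_{i} \in \bar{\D}$ with $C_{i} \subseteq B$. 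I recurse into $Q'$ iff at least one such test succeeds.

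With this pruning, every recursive call into a child produces at least one hypercube of $\SOL(Q^{*})$ in its subtree. Consequently each $C_{i} \in \SOL(Q^{*})$ is charged for at most the $\log N$ ancestors on its root-to-leaf path, and the recursion tree has at most $|\SOL(Q^{*})| \cdot \log N$ nodes. Multiplying by the per-node cost, which is dominated by the $(d/\epsilon)^{O(d^{2})} \log^{d} N$ decomposition tests times the range-query cost from Lemma~\ref{lem:data-structure}, and absorbing lower-order polylogarithmic factors and the $\log^{d}/\log^{d+1}$ gap into the $(d/\epsilon)^{O(d^{2})}$ factor, yields the claimed total running time of $|\SOL(Q^{*})| \cdot (d/\epsilon)^{O(d^{2})} \cdot \log^{d+1} N$. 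The remaining routine details are verifying that the decomposition of $R$ can be produced in time polynomial in $|A|$ by an incremental sweep, and that the pruning test is correct by the equivalence of ``$C_{i}$ is disjoint from $A$'' and ``$C_{i} \subseteq R$'', which holds because axis-aligned hypercubes are disjoint iff they separate along some coordinate axis.
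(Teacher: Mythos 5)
Your overall architecture matches the paper's proof closely: a top-down recursion from $Q^{*}$ that carries the ancestor candidate sets, outputs the surviving hypercubes in $\bar{\C}(Q)$, prunes by decomposing the free region of a child into boxes and querying $\bar{\D}$, and charges recursion nodes to $\SOL(Q^{*})$. The paper's "auxiliary grid" check and your sweep decomposition of $Q' \setminus \bigcup_{C\in A} C$ are the same idea.

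However, the key justification you give for the node count contains a genuine error. You assert that ``a candidate $C_{i}\in\bar{\C}(Q'')$ with $Q''\subseteq Q'$ belongs to $\SOL(Q^{*})$ \emph{iff} $C_{i}$ is disjoint from every hypercube in $A$.'' Only the ``only if'' direction is true. By the definition of $\SOL$ given just before the lemma, $C_{i}$ is eliminated by any overlapping hypercube in $\bar{\C}(Q''')$ for \emph{any} strict ancestor $Q'''\supsetneq Q''$, which includes (a) $\bar{\C}(Q)$ itself (your $A$ carries only ancestors of the \emph{current} cell $Q$, not $Q$'s own $\bar{\C}(Q)$), and more importantly (b) intermediate cells $Q'''$ with $Q''\subsetneq Q'''\subsetneq Q'$, which live entirely below $Q'$ and are not represented in $A$ at all. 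So a hypercube can avoid $A$ and still be dead. Consequently, ``every recursive call into a child produces at least one hypercube of $\SOL(Q^{*})$ in its subtree'' does not follow from your equivalence: it is perfectly possible that every candidate inside $Q'$ that avoids $A$ is nevertheless killed by some $\bar{\C}$ at an intermediate level inside $Q'$. The recursion-count bound thus needs a different justification --- e.g., via the observation that the highest-level candidate inside $Q'$ that avoids the ancestor sets is never killed (the paper's terse argument implicitly leans on something of this flavor), or a charging scheme that also accounts for ``dead'' visits --- rather than the stated ``iff.'' Also note that even mechanically you should be testing against $A\cup\bar{\C}(Q)$ (the ancestor set of the child $Q'$), not $A$, when deciding whether to recurse into $Q'$; otherwise you systematically recurse one level too deep.

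Finally, a small point: you write $|A|\le(d/\epsilon)^{d}\log N$ with the justification that each $\bar{\C}(Q'')$ is an independent set; this matches the paper's stated bound, and the box-decomposition size $O(|A|^{d})$ and per-box query cost $O(\log^{2d-1}|\bar{\C}|)$ indeed combine to give a per-node cost of $(d/\epsilon)^{O(d^{2})}\log^{d}N$, consistent with the claimed total.
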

\begin{proof}
Our output routine is a recursive algorithm which is first called on $Q^*$. The
input of each recursive call is a cell $Q$ and a set of at most $(d/\epsilon)^{d}
\cdot (\ell(Q)-1)$ hypercubes $\hat{\C}$; $\hat{\C}$ contains those hypercubes
in $\SOL(Q^{*})$ that intersect $Q$ and which originate from levels higher than
$Q$.  Now we first check
whether there is a hypercube 
$C\in\bar{D}$ 
with $C\subseteq Q$ such that
$C\cap C'=\emptyset$ for each $C'\in\hat{\C}$. Using $\bar{\D}$ we can do this
in time
\begin{align*}
	O\left(\left( (d/\epsilon)^{d} \cdot (\ell(Q)-1)\right)^{d}\log^{2d-1}|\bar{\C}(Q)|\right)
	&= O\left( (d/\epsilon)^{d^{2}}\log^{d}N \lg^{2d-1}\left((d/\varepsilon)^d\right)\right) \\
	&=(d/\epsilon)^{O(d^{2})}\log^{d}N
\end{align*}
using an auxiliary grid similarly as in the case of unweighted hypercubes.
(Note that we cannot simply go through all hypercubes in $\bar{\C}(Q)$ and then 
recurse on each child of $Q$ since then our running time might not be near-linear in $|\SOL(Q^{*})|$.)
If not then we stop. Otherwise, we output all hypercubes in $\bar{\C}(Q)$
that do not intersect any hypercube in $\hat{\C}$ and recurse on
each child of $Q$ that contains a hypercube that does not intersect any hypercube in 
$\bar{\C}(Q)\cup \hat{\C}$ (which we check again using an auxiliary grid).
The argument for each recursive call is $\bar{\C}(Q)\cup \hat{\C}$. 
The recursion tree has at most $O(|\SOL(Q^{*})|\log N)$ nodes since
we stop if a given cell $Q$ does not contain a hypercube $C\in\bar{\C}$
with $C\cap C'=\emptyset$ for each $C'\in\hat{\C}$. Hence, this
algorithm has running time
$|\SOL(Q^{*})|\cdot (d/\epsilon)^{O(d^{2})} \cdot \log^{d+1}N$
overall.%
\end{proof}

\section{Dynamic Independent Set of Unweighted Intervals}
\label{sec:Intervals-unweighted}

We describe our dynamic algorithm for unweighted intervals, i.e., we assume that
$d=1$ and $w_{i}=1$ for each interval $C_{i}\in\C$.  We prove the following
theorem.
\begin{thm}
	\label{thm:unweighted-intervals}
	For the unweighted maximum independent set of intervals problem there are
	fully dynamic algorithms that maintain $(1+\epsilon)$-approximate solutions
	deterministically with worst-case update time
	$(1/\epsilon)^{O(1/\epsilon)} \left(\frac{1}{\varepsilon}\lg n\log N\right)^{2}$
	and with high probability with worst-case update time 
	$O\left(\frac{1}{\varepsilon^2} \log^3 N \log^2 n \right)$. 
\end{thm}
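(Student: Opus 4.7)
The plan is to apply the framework of Section~\ref{sec:framework} in dimension $d=1$; this reduces the theorem to implementing the blackbox $\alg(Q,\DD(Q),\{\ALG(Q_1),\ALG(Q_2)\})$ in time $(1/\epsilon)^{O(1/\epsilon)}\polylg(n,N)$. By Lemma~\ref{lem:properties-grid} the cell $Q$ has at most two children and every independent subset of $\C'(Q)$ has at most $2/\epsilon$ intervals, so I would enumerate every independent subset $S\subseteq\C'(Q)$ (there are $2^{O(1/\epsilon)}$ of them). Each $S$ cuts $Q$ into at most $O(1/\epsilon)$ gaps, and the candidate solution associated with $S$ consists of $S$ together with the intervals of $\ALG(Q_1)\cup\ALG(Q_2)$ that lie entirely inside one of these gaps. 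The routine returns the subset $S^*$ maximising this cardinality and stores $\ALG(Q)$ implicitly as $(S^*,\text{pointers into }\ALG(Q_1),\ALG(Q_2))$ alongside the explicit integer $|\ALG(Q)|$, so that $\ALG(Q)$ is never materialised and we do not pay for the $\Omega(n)$ intervals it may contain.

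To evaluate a candidate in polylogarithmic time I need the counts $|\ALG(Q_i)\cap g|$, meaning the number of intervals of $\ALG(Q_i)$ contained in a gap $g$. I would attach to every non-empty cell $Q$ an augmented persistent balanced BST $R(Q)$ over the intervals of $\ALG(Q)$ sorted by left endpoint, supporting range-count queries in $O(\log n)$ time. Because $R(Q_1)$ and $R(Q_2)$ cover disjoint sub-ranges of $Q$, $R(Q)$ is built by persistent concatenation of $R(Q_1)$ and $R(Q_2)$ plus $O(1/\epsilon)$ splices for the intervals in $S^*$, which costs $(1/\epsilon)^{O(1)}\log n$ per recomputation. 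The total work inside $\alg$ is therefore $T=(1/\epsilon)^{O(1/\epsilon)}\log n$, and substituting $T$ together with the $(1/\epsilon)^{O(1/\epsilon)}$ offsets of Lemma~\ref{lem:offset} into Lemma~\ref{lem:framework-dynamic} yields the deterministic worst-case update time; running $O(\log N)$ random offsets in parallel instead gives the randomized bound claimed in the theorem.

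The hard part is the $(1+\epsilon)$-approximation analysis: $\ALG(Q_i)$ is near-optimal only for $\C(Q_i)$ as a whole, and its restriction to a sub-range $g$ need not be near-optimal for $g$ (for example a long interval covering all of $Q_i$ is useless once a parent-level interval in $S$ blocks part of $Q_i$). I would resolve this by an inductive argument modelled on the dense/sparse charging sketched in the introduction for weighted intervals. Fix the grid-aligned reference solution $\OPT'$ guaranteed by Lemma~\ref{lem:offset}; at every cell $Q$ the set $S_{\OPT'}:=\OPT'\cap\C'(Q)$ is among the subsets enumerated by $\alg$. With this choice each interval of $\OPT'\setminus S_{\OPT'}$ lies inside a unique child cell $Q_i$ and inside a gap of $S_{\OPT'}$, and I would prove by induction on $\ell(Q)$ that for every gap $g$ the count $|\ALG(Q_i)\cap g|$ is at least $|\OPT'\cap g|$ up to an additive deficit of at most one interval per gap boundary. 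Summing this additive deficit over the $O((1/\epsilon)\log N)$ pairs (gap, level) and absorbing it into the locally selected intervals $S^*$ via an amortised charging argument (charging every lost boundary interval to an $\Omega(1/\epsilon)$-fraction of $S^*$ at its cell) gives $|\ALG(Q^*)|\ge(1-O(\epsilon))|\OPT'|\ge(1-O(\epsilon))|\OPT|$, and rescaling $\epsilon$ by a constant yields the claimed $(1+\epsilon)$-approximation.
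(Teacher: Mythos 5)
Your plan has two genuine gaps, and the paper's approach sidesteps both by using a fundamentally different branching structure.

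First, a concrete counting error: you propose to enumerate ``every independent subset $S\subseteq\C'(Q)$'' and claim there are $2^{O(1/\epsilon)}$ of them. While it is true that any such \emph{independent} $S$ has at most $O(1/\epsilon)$ intervals (Lemma~\ref{lem:properties-grid}), the set $\C'(Q)$ itself can contain $\Theta(n)$ intervals, so the number of independent subsets is $|\C'(Q)|^{O(1/\epsilon)}=n^{O(1/\epsilon)}$, not $(1/\epsilon)^{O(1/\epsilon)}$. This breaks the claimed polylogarithmic running time of $\alg$. The paper avoids this by never enumerating subsets of $\C'(Q)$ at all: when $|\ALG(Q_1)\cup\ALG(Q_2)|\le\frac{1}{\epsilon^2}\lg N$, it runs the right-endpoint greedy of Lemma~\ref{lem:subrut-unweighted} on $\D(Q)$, which computes the \emph{exact} optimum $\OPT_U(Q)$ in time $O(|\OPT_U(Q)|\lg^2 n)$, and $|\OPT_U(Q)|=O(\frac{1}{\epsilon^2}\lg N)$ in that branch (Lemma~\ref{lem:intervals-unweighted-time}); when the children solutions are larger than the threshold, it simply returns $\ALG(Q_1)\cup\ALG(Q_2)$ by pointers and ignores $\C'(Q)$ entirely.

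Second, the inductive invariant you sketch --- that $|\ALG(Q_i)\cap g|\ge|\OPT'\cap\C(Q_i)\cap g|$ up to an additive one per gap boundary, for every gap $g$ of the ancestor's choice --- is not proved and is the crux of the matter. Your $\ALG(Q_i)$ is chosen to maximise the \emph{total} count over $Q_i$; nothing forces its restriction to an arbitrary subrange $g$ (which is determined by choices at levels \emph{above} $Q_i$, unknown when $\ALG(Q_i)$ is fixed) to track $\OPT'$ there. You flag this difficulty yourself, but the resolution you gesture at --- charging ``boundary'' losses to an $\Omega(1/\epsilon)$-fraction of $S^*$ at each cell --- cannot work as stated because $S^*$ can be empty at a cell. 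The weighted-interval algorithm in Appendix~\ref{apx:weighted-intervals} deals with exactly this problem by storing, per cell, solutions $\ALG(Q,S)$ for \emph{all} polylogarithmically many segments $S$ (Lemma~\ref{lem:structured-solution}); your single-solution-per-cell design does not have this flexibility. The paper's unweighted algorithm instead sidesteps the issue: when the solution is small it is exact (no per-subrange claim is needed), and when it is large the only loss is the $\le 1/\epsilon$ intervals of $\OPT_U(Q)\cap\C'(Q)$, each charged $\epsilon/\lg N$ against the $\ge\frac{1}{\epsilon^2}\lg N$ intervals of $\ALG(Q_1)\cup\ALG(Q_2)$ all of which survive into the final output, giving the clean $\epsilon$ total loss in Lemma~\ref{lem:intervals-unweighted-approximation}.
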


Roughly speaking, our algorithm works as follows. If for a cell $Q$,
$\left|\bigcup_{Q'\in\ch(Q)}\ALG(Q')\right|$ is small, we compute the
\emph{optimal} solution for $\C(Q)$ (without using the solutions $\ALG(Q')$ for
the cells $Q'\in\ch(Q)$) in time $O(\log N\log^{2}n/\epsilon^{2})$, using the
data structure $\D(Q)$. If $\left|\bigcup_{Q'\in\ch(Q)}\ALG(Q')\right|$ is
large, then we output $\bigcup_{Q'\in\ch(Q)}\ALG(Q')$ as above. Using a charging
argument, we argue that we lose at most a factor of $1+\epsilon$ by ignoring the
intervals in $\C'(Q)$ in the latter case.

In the following, for a cell $Q$ we write $\OPT_{U}(Q)\subseteq\C(Q)$ to denote
the independent set of the intervals in $\C(Q)$ of maximum cardinality. To
simplify our notation, for each interval $C_{i}=(x_{i}^{(1)},y_{i}^{(1)})\in\C$
we define $x_{i}:=x_{i}^{(1)}$ and $y_{i}:=y_{i}^{(1)}$.

We describe the concrete implementation of the algorithm, where we assume that
the algorithm
obtains as input a cell $Q\in\G$ with children $Q_{1},Q_{2}$, the
set of data structures $\DD(Q):=\left\{ \D(Q),\D'(Q),\left\{ \D_{k}(Q)\right\} _{k\in\N}\right\} $,
and solutions $\ALG(Q_{1}),\ALG(Q_{2})$ for the two children cells.
First, we check if $|\ALG(Q_{1})\cup\ALG(Q_{2})|>\frac{1}{\epsilon^{2}}\log N$.
If this is the case, then the algorithm returns $\ALG(Q_{1})\cup\ALG(Q_{2})$
via pointers to $\ALG(Q_{1})$ and $\ALG(Q_{2})$, i.e., no further hypercubes are
added to the solution. Otherwise, the algorithm
uses a subroutine $\SUB$ which we define in the next lemma to compute
an independent set $\SUB(Q)\subseteq\C(Q)$ with $|\SUB(Q)|=|\OPT_{U}(Q)|$;
then the algorithm returns $\SUB(Q)$.
\begin{lem}
\label{lem:subrut-unweighted} Given a cell $Q$ we can compute a
solution $\SUB(Q)\subseteq\C(Q)$ with $|\SUB(Q)|=|\OPT_{U}(Q)|$
in time $O(|\OPT_{U}(Q)|\cdot\log^{2}n)$. 
\end{lem}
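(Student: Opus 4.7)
}

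The plan is to implement the classical earliest-deadline-first greedy algorithm for interval scheduling on top of the data structure $\D(Q)$, using the predecessor-style query of item~4 of Lemma~\ref{lem:data-structure}. Recall that $\D(Q)$ maintains the set $\C(Q)$ of intervals contained in $Q$, and that this query returns, for a given value $t$, the interval $C_i=(x_i,y_i)\in\C(Q)$ with minimum right endpoint $y_i$ subject to $x_i\ge t$, in time $O(\log^{2}n)$. This is exactly the primitive that the greedy needs.

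The algorithm maintains a current ``cut-off'' value $t$, initialized to a value smaller than any interval's left endpoint (for instance, $x_Q^{(1)}$), and an initially empty set $\SUB(Q)$. In each iteration it queries $\D(Q)$ with the current $t$; if no interval is returned, the algorithm halts and outputs $\SUB(Q)$. Otherwise, letting $C_i$ be the returned interval, it inserts $C_i$ into $\SUB(Q)$ and updates $t\leftarrow y_i$. Since intervals are open and two open intervals $(x_i,y_i)$ and $(x_j,y_j)$ with $x_j\ge y_i$ are disjoint, every interval returned in a subsequent iteration is independent of $C_i$, so $\SUB(Q)$ is always an independent set.

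Optimality of $|\SUB(Q)|=|\OPT_U(Q)|$ follows from the standard exchange argument for interval scheduling: in any optimal solution ordered by left endpoints, replacing its first interval by the one selected by the greedy (which has the smallest possible right endpoint among intervals independent of all previously chosen ones) preserves both feasibility and size; iterating this exchange transforms any optimum into $\SUB(Q)$, so $|\SUB(Q)|\ge|\OPT_U(Q)|$, while the reverse inequality is trivial.

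For the running time, each iteration performs one query on $\D(Q)$ in worst-case time $O(\log^{2}n)$, and the number of iterations equals $|\SUB(Q)|=|\OPT_U(Q)|$, so the total time is $O(|\OPT_U(Q)|\cdot\log^{2}n)$ as claimed. The main subtlety to verify is that item~4 of Lemma~\ref{lem:data-structure} applied to $\D(Q)$ automatically restricts attention to intervals in $\C(Q)$ (so we do not waste time on intervals outside~$Q$) and that the open-interval convention makes the disjointness condition $y_i\le x_j$ (rather than strict inequality) compatible with the query's ``$t\le x_j$'' formulation; both are direct from the definitions in the paper.
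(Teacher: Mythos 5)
Your proposal is correct and matches the paper's own proof: both implement the classical earliest-deadline-first greedy using the fourth query type of Lemma~\ref{lem:data-structure} on $\D(Q)$, with $O(\log^2 n)$ per iteration and $|\OPT_U(Q)|$ iterations, and both invoke the standard exchange argument for optimality.
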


\begin{proof}
The subroutine works by running the following greedy algorithm which
is known to compute a maximum cardinality independent set (see, e.g.,
Agarwal et al.~\cite{agarwal98label}). Let $Q=[a,b)$ be the current
cell and let $\SUB(Q)=\emptyset$. Start by finding the interval $C_{1}=[x_{1},y_{1})\in\C(Q)$
with the smallest $y_{1}$ coordinate and add $C_{1}$ to $\SUB(Q)$,
using the fourth property of Lemma~\ref{lem:data-structure} for
the data structure $\D(Q)$ with $[t,t')=[a,b)$. Now repeatedly find
the interval $C_{i}=[x_{i},y_{i})\in\C(Q)$ with the smallest $y_{i}$
coordinate such that $x_{i}\geq y_{i-1}$, using the fourth property
of Lemma~\ref{lem:data-structure} with $\D(Q)$ and $[t,t')=[y_{i-1},b)$.
We repeat this procedure until no such $C_{i}$ exists. Then, the
total running time of the algorithm is $O(|\OPT_{U}(Q)|\cdot\lg^{2}n)$.%
\end{proof}
We output the solution $\SOL:=\ALG(Q^{*})$ as follows: if $\ALG(Q^{*})$
contains a list if intervals, then we output those. Otherwise $\ALG(Q^{*})$
contains pointers to two solutions $\ALG(Q_{1}),\ALG(Q_{2})$ and
we recursively output those. Like in the case of unweighted hypercubes, 
for each cell $Q$ in $\ALG(Q)$ we store additionally $|\ALG(Q)|$
and hence we can report $|\SOL|=|\ALG(Q^{*})|$ in time $O(1)$.

\subsection{Analysis}

We analyze the running time and approximation ratio of the algorithm. 
Its running time is $O(\frac{1}{\varepsilon^{2}}\lg^{2}n\lg N)$ since
if $|\ALG(Q_{1})\cup\ALG(Q_{2})|>\frac{1}{\epsilon^{2}}\lg N$ we
return $\ALG(Q_{1})\cup\ALG(Q_{2})$ in time $O(1)$, and otherwise
our computed solution has size $O\left(\frac{1}{\varepsilon^{2}}\lg N\right)$,
using that the maximum cardinality independent set in $\C'(Q)$ has
size at most $1/\epsilon$ (see Lemma~\ref{lem:properties-grid})
and that $|\ALG(Q_{1})|=|\OPT_{U}(Q_{1})|$ and $|\ALG(Q_{2})|=|\OPT_{U}(Q_{2})|$
in this case. 
\begin{lem}
\label{lem:intervals-unweighted-time} The worst-case running time
of the algorithm is $O(\frac{1}{\varepsilon^{2}}\lg^{2}n\lg N)$. 
\end{lem}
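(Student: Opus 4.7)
The running time claim concerns a single invocation of the algorithm on a cell $Q$, so my plan is to bound the cost of each of the two branches separately. Since each solution $\ALG(Q')$ stores its cardinality explicitly (as the algorithm description notes), checking whether $|\ALG(Q_{1})\cup\ALG(Q_{2})|>\frac{1}{\epsilon^{2}}\lg N$ takes $O(1)$ time, and in the union branch we are done in $O(1)$ overall because we only return two pointers. The only non-trivial branch is the $\SUB$ branch, whose cost, by Lemma~\ref{lem:subrut-unweighted}, is $O(|\OPT_{U}(Q)|\cdot\lg^{2}n)$. The heart of the proof is therefore to show that $|\OPT_{U}(Q)|=O(\lg N/\epsilon^{2})$ whenever this branch is entered.

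The main obstacle is precisely this combinatorial estimate, for which I would establish the following structural invariant: \emph{whenever $|\ALG(Q')|\le\frac{1}{\epsilon^{2}}\lg N$, we have $|\ALG(Q')|=|\OPT_{U}(Q')|$}. The proof of the invariant is a case split on how $\ALG(Q')$ was produced. If $\ALG(Q')=\SUB(Q')$ then optimality is immediate from Lemma~\ref{lem:subrut-unweighted}. Otherwise $\ALG(Q')=\ALG(Q'_{1})\cup\ALG(Q'_{2})$ for the children $Q'_{1},Q'_{2}$ of $Q'$; but the algorithm only takes this union branch when $|\ALG(Q'_{1})\cup\ALG(Q'_{2})|>\frac{1}{\epsilon^{2}}\lg N$, and since sibling cells are disjoint and each input interval is assigned to a unique cell, $\ALG(Q'_{1})\cap\ALG(Q'_{2})=\emptyset$, so $|\ALG(Q')|=|\ALG(Q'_{1})|+|\ALG(Q'_{2})|>\frac{1}{\epsilon^{2}}\lg N$, contradicting the hypothesis.

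Once the invariant is in place, the bound on $|\OPT_{U}(Q)|$ follows directly. In the $\SUB$ branch we have $|\ALG(Q_{1})|+|\ALG(Q_{2})|=|\ALG(Q_{1})\cup\ALG(Q_{2})|\le\frac{1}{\epsilon^{2}}\lg N$, so applying the invariant to each $Q_{i}$ gives $|\OPT_{U}(Q_{i})|=|\ALG(Q_{i})|$. Decomposing an optimal solution via $\C(Q)=\C'(Q)\cup\C(Q_{1})\cup\C(Q_{2})$, the intervals it takes from $\C(Q_{i})$ form an independent set in $\C(Q_{i})$, and by Lemma~\ref{lem:properties-grid} the intervals it takes from $\C'(Q)$ number at most $1/\epsilon$. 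Summing gives
\[
|\OPT_{U}(Q)|\le|\OPT_{U}(Q_{1})|+|\OPT_{U}(Q_{2})|+1/\epsilon\le\tfrac{1}{\epsilon^{2}}\lg N+1/\epsilon=O\!\left(\tfrac{1}{\epsilon^{2}}\lg N\right),
\]
and feeding this into Lemma~\ref{lem:subrut-unweighted} yields the claimed $O(\frac{1}{\epsilon^{2}}\lg^{2}n\,\lg N)$ bound.
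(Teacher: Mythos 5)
Your proposal is correct and follows essentially the same route as the paper's proof: bound the trivial branch, then bound $|\OPT_U(Q)|$ in the $\SUB$ branch by $O(\frac{1}{\varepsilon^2}\lg N)$ using the decomposition $\C(Q)=\C'(Q)\cup\C(Q_1)\cup\C(Q_2)$ and Lemma~\ref{lem:properties-grid}. The one place you add value is that the paper asserts $|\ALG(Q_i)|=|\OPT_U(Q_i)|$ merely ``by the definition of the algorithm,'' whereas you make the underlying invariant explicit and prove it by a clean case split (showing the union branch cannot have occurred when $|\ALG(Q')|\le\frac{1}{\varepsilon^2}\lg N$). A minor cosmetic difference: you charge $O(1)$ for the initial size check because $|\ALG(Q')|$ is stored, while the paper's proof instead recursively enumerates up to $\frac{1}{\varepsilon^2}\lg N+1$ intervals and stops early, costing $O(\frac{1}{\varepsilon^2}\lg^2 n\lg N)$; both give the same final bound, and the cardinality-storage route you use is consistent with what the paper states elsewhere in the algorithm description.
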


\begin{proof}
First, the algorithm checks if $|\ALG(Q_{1})\cup\ALG(Q_{2})|>\frac{1}{\epsilon^{2}}\lg N$.
We can do this in time $O(\frac{1}{\varepsilon^{2}}\lg^{2}n\lg N)$
by recursively outputting $\ALG(Q_{1})$ and $\ALG(Q_{2})$ and stopping
after outputting $\frac{1}{\epsilon^{2}}\lg N+1$ intervals. If $|\ALG(Q_{1})\cup\ALG(Q_{2})|>\frac{1}{\epsilon^{2}}\lg N$
then the algorithm needs only time $O(1)$ to output $\ALG(Q_{1})\cup\ALG(Q_{2})$.
If $|\ALG(Q_{1})\cup\ALG(Q_{2})|\le\frac{1}{\epsilon^{2}}\lg N$, we
only need to bound the running time of the subroutine $\SUB$ from
Lemma~\ref{lem:subrut-unweighted}. To do this, we show have to show
that $|\OPT_{U}(Q)|=O(\frac{1}{\varepsilon^{2}}\lg N)$ which implies
the lemma.

To bound $|\OPT_{U}(Q)|$, note that since $|\ALG(Q_{1})\cup\ALG(Q_{2})|\leq\frac{1}{\epsilon^{2}}\lg N$,
we must have that $|\ALG(Q_{1})|=|\OPT_{U}(Q_{1})|$ and $|\ALG(Q_{2})|=|\OPT_{U}(Q_{2})|$
by the definition of the algorithm. The cardinality of $\OPT_{U}(Q)$ is at
most $|\OPT_{U}(Q_{1})|+|\OPT_{U}(Q_{2})|$ plus the maximum cardinality
of an independent set in $\C'(Q)$. The latter quantity is bounded
by $1/\varepsilon$ due to Lemma~\ref{lem:properties-grid}. Hence,
we obtain that

\begin{eqnarray*}
|\OPT_{U}(Q)| & \leq & |\OPT_{U}(Q)\cap\C(Q_{1})|+|\OPT_{U}(Q)\cap\C(Q_{2})|+|\OPT_{U}(Q)\cap\C'(Q)|\\
 & \le & |\OPT_{U}(Q_{1})|+|\OPT_{U}(Q_{2})|+|\OPT_{U}(Q)\cap\C'(Q)|\\
 & \le & \frac{1}{\varepsilon^{2}}\lg N+\frac{1}{\varepsilon}\\
 & = & O\left(\frac{1}{\varepsilon^{2}}\lg N\right).
\end{eqnarray*}
\end{proof}
Now we show that the global solution computed by the algorithm indeed is a
$(1+O(\varepsilon))$-approximation of $\OPT_{U}$. 
\begin{lem}
\label{lem:intervals-unweighted-approximation} The returned solution
$\SOL$ is a $(1+O(\varepsilon))$-approximation of $\OPT_{U}$. 
\end{lem}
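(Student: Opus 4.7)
The plan is a charging argument that combines a telescoping bound on the optimum with a counting bound on the number of ``large case'' cells. Let $L\subseteq \G'$ denote the set of cells $Q$ processed in the large case (i.e.\ where $|\ALG(Q_1)\cup\ALG(Q_2)|>\frac{1}{\epsilon^2}\lg N$), and let $T\subseteq\G'$ denote the maximal small‑case cells in the recursion rooted at $Q^\ast$ (equivalently: $Q^\ast$ itself if it is small case, together with every small‑case cell whose parent lies in $L$). Since the output routine descends only through large‑case cells and returns the stored list at the first small‑case cell it meets, we have $\SOL=\bigsqcup_{Q'\in T}\ALG(Q')$, a disjoint union (the $Q'\in T$ are pairwise disjoint cells, and intervals stored at $Q'$ lie inside $Q'$). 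Moreover for each $Q'\in T$, Lemma~\ref{lem:subrut-unweighted} gives $|\ALG(Q')|=|\OPT_U(Q')|$, so $|\SOL|=\sum_{Q'\in T}|\OPT_U(Q')|$.

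Next I set up the telescoping. For any cell $Q\in L$ with children $Q_1,Q_2$, at most $1/\epsilon$ intervals of $\OPT_U(Q)$ come from $\C'(Q)$ by Lemma~\ref{lem:properties-grid}, and the remaining intervals partition across $\C(Q_1)$ and $\C(Q_2)$, so
\[
|\OPT_U(Q)|\le |\OPT_U(Q_1)|+|\OPT_U(Q_2)|+1/\epsilon.
\]
Viewing the internal nodes of our recursion as $L$ and the leaves as $T$, iterating this inequality from $Q^\ast$ downward telescopes to
\[
|\OPT_U(Q^\ast)|\ \le\ \sum_{Q'\in T}|\OPT_U(Q')|+\frac{|L|}{\epsilon}\ =\ |\SOL|+\frac{|L|}{\epsilon}.
\]

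The remaining step, and the only one requiring any care, is to bound $|L|$ against $|\SOL|$. For each $Q\in L$, by definition $|\ALG(Q)|>\frac{1}{\epsilon^2}\lg N$. Now, a simple induction on the recursion shows $\ALG(Q)=\bigcup_{Q'\in T,\,Q'\subseteq Q}\ALG(Q')$ whenever $Q\in L$ (large‑case cells just forward the union of their children's solutions). Hence every interval in $\SOL$ that is counted in $\ALG(Q)$ for some $Q\in L$ is counted at most once per cell $Q\in L$ containing it, and by Lemma~\ref{lem:properties-grid} each interval lies in at most $\lg N$ grid cells. Therefore
\[
\frac{|L|}{\epsilon^2}\lg N\ <\ \sum_{Q\in L}|\ALG(Q)|\ \le\ \lg N\cdot|\SOL|,
\]
which gives $|L|<\epsilon^2|\SOL|$, and thus $|L|/\epsilon<\epsilon|\SOL|$.

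Combining the telescoping inequality with this bound yields $|\OPT_U|\le (1+\epsilon)|\SOL|$, i.e.\ $|\SOL|\ge (1-O(\epsilon))|\OPT_U|$, as required. The potentially subtle step is the third one: one must recognize that although a single interval can live in $\ALG(Q)$ for many cells $Q\in L$, the hierarchical grid caps this multiplicity at $\lg N$, and this is exactly what lets the ``$1/\epsilon$ loss per large‑case cell'' be amortized against the ``$\frac{1}{\epsilon^2}\lg N$ intervals already present'' to leave a clean relative error of $\epsilon$ overall. The rest of the argument is essentially bookkeeping on the recursion tree.
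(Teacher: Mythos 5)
Your proof is correct and takes essentially the same approach as the paper's charging argument: you simply make explicit the telescoping over the recursion tree that the paper leaves implicit, and reformulate the per-interval charge bound (each interval in $\SOL$ receives total charge at most $\varepsilon$) as a counting bound on the number $|L|$ of large-case cells. The two chains of inequalities are identical, both resting on the facts that a large-case cell loses at most $1/\epsilon$ intervals while holding more than $\frac{1}{\epsilon^2}\lg N$, and that each interval lies in at most $\lg N$ cells.
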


\begin{proof}
We use a charging argument to analyze the algorithm. Let $Q$ be any cell
in $\G$ with children $\ch(Q)=\{Q_{1},Q_{2}\}$. First, if $|\ALG(Q_{1})\cup\ALG(Q_{2})|\le\frac{1}{\epsilon^{2}}\log N$
then $|\SUB(Q)|=|\OPT_{U}(Q)|$ by the definition of the algorithm. Second,
assume that $|\ALG(Q_{1})\cup\ALG(Q_{2})|>\frac{1}{\epsilon^{2}}\log N$.
Note that in this case $\OPT_{U}(Q)$ might contain some intervals
$C_{i}\in\C'(Q)$ which the algorithm does not pick. However, $\OPT_{U}(Q)$
can contain at most $1/\epsilon$ such intervals by Lemma~\ref{lem:properties-grid}.
We charge the intervals in $\OPT_{U}(Q)\cap\C'(Q)$ to the intervals
in $\ALG(Q_{1})\cup\ALG(Q_{2})$ (which are selected by the algorithm). Thus,
each interval in $\ALG(Q_{1})\cup\ALG(Q_{2})$ receives a charge of
\begin{align*}
\frac{1/\varepsilon}{|\ALG(Q_{1})\cup\ALG(Q_{2})|}\leq\frac{\varepsilon}{\lg N}.
\end{align*}

Now consider the solution $\SOL$ returned by the algorithm. Since each interval
is contained in at most $\log N$ cells of $\G$ by Lemma~\ref{lem:properties-grid},
each interval in $\SOL$ is charged at most $\lg N$ times. Hence,
each interval in $\SOL$ receives a total charge of at most $\varepsilon$
and, thus, $\SOL$ satisfies $|\SOL|\geq(1-\varepsilon)|\OPT_{U}|$. 
\end{proof}
By combining Lemma~\ref{lem:intervals-unweighted-time}, Lemma~\ref{lem:intervals-unweighted-approximation}
and Lemma~\ref{lem:framework-dynamic}, we complete the proof of
Theorem~\ref{thm:unweighted-intervals}.

\section{Unweighted Hypercubes}
\label{sec:Unweighted-hypercubes}

In this section we assume that $w_{i}=1$ for each $C_{i}\in\C$ and we present a
dynamic algorithm for hypercubes with an approximation ratio of
$(1+\epsilon)2^{d}$. Our result is stated in the following theorem.
\begin{thm}
\label{thm:unweighted-hypercubes}
	For the unweighted maximum independent set of hypercubes problem in $d$
	dimensions there are fully dynamic algorithms that maintain
	$(1+\epsilon)2^{d}$-approximate solutions deterministically with worst-case
	update time $(d/\epsilon)^{O(d^2+1/\epsilon)} \lg^{2d+1}N\lg^{2d+1}n$ and
	with high probability with worst-case update time $(d/\epsilon)^{O(d^2)}
	\lg^{2d+2}N\lg^{2d+1}n$. 
\end{thm}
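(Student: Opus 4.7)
\textbf{Proof plan for Theorem~\ref{thm:unweighted-hypercubes}.}
The plan is to carry over the structure of the unweighted-intervals algorithm of
Section~\ref{sec:Intervals-unweighted} into $d$ dimensions, paying a factor $2^d$
because we can no longer solve unweighted IS of hypercubes exactly even inside a
single cell. So our local subroutine will settle for a $2^{d}$-approximation of
$\OPT_{U}(Q)$, and the hierarchical framework of
Section~\ref{sec:framework-details:unweighted} will contribute an extra
$(1+\varepsilon)$-factor on top of it. The algorithm will plug into
Lemma~\ref{lem:framework-dynamic}, so the main work is to define the blackbox
\alg run on a cell $Q$ with children's solutions $\{\ALG(Q_i)\}_{Q_i\in\ch(Q)}$.

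Given $Q$, I first compute the cardinality $U:=|\bigcup_{Q_i\in\ch(Q)}\ALG(Q_i)|$,
analogous to the interval case. Pick a threshold
$\tau = \Theta\bigl((d/\varepsilon)^{d}\cdot 2^{d}\cdot\log N /\varepsilon\bigr)$.
If $U\ge\tau$, the algorithm outputs the union of children's solutions
implicitly (by storing pointers to $\ALG(Q_i)$); in this ``dense'' case we
simply \emph{discard} every hypercube in $\C'(Q)$, which by
Lemma~\ref{lem:properties-grid} costs us at most $(d/\varepsilon)^{d}$
hypercubes from $\OPT_{U}(Q)$. Otherwise, in the ``sparse'' case $U<\tau$, I
call a local subroutine $\SUB(Q)$ that returns an IS of $\C(Q)$ whose
cardinality is at least $|\OPT_{U}(Q)|/2^{d}$, using the data structures
$\DD(Q)$ directly and ignoring the children. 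The subroutine mimics the
classical greedy-by-size $2^{d}$-approximation: order the candidates by $s_i$
and add $C_i$ whenever it is disjoint from everything previously selected. To
avoid scanning all of $\C(Q)$, I maintain a point set $P$ holding the $2^{d}$
vertices of every selected hypercube and build an auxiliary grid within $Q$ in
the style of Lemma~\ref{lem:construct-aux-grid} after each insertion; each
iteration then searches $(\log N)^{O(d)}$ aligned subcells using the
$\{\D'_k(Q)\}$ to locate the next candidate, and the number of iterations is
bounded by $|\SUB(Q)|$, which in the sparse case is at most
$|\OPT_U(Q)|\le U + (d/\varepsilon)^d \le \tau + (d/\varepsilon)^d$.

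For the approximation analysis I follow Lemma~\ref{lem:intervals-unweighted-approximation}
with two modifications. In the sparse case, $|\SUB(Q)|\ge|\OPT_U(Q)|/2^d$ by
the standard vertex-charging argument (each OPT-hypercube $C^{*}$ bigger than
a selected $C_i$ that blocks it must contain one of the $2^d$ vertices of
$C_i$, and distinct OPT-hypercubes claim distinct vertices because they are
pairwise disjoint). In the dense case, the up-to-$(d/\varepsilon)^d$
hypercubes of $\OPT_U(Q)\cap\C'(Q)$ that we drop are charged uniformly to the
$U\ge\tau$ hypercubes in $\bigcup_{Q_i\in\ch(Q)}\ALG(Q_i)$, so each receives
charge at most $(d/\varepsilon)^d/\tau = \Theta(\varepsilon/(2^d\log N))$ per
level. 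Since a hypercube lies in at most $\log N$ cells by
Lemma~\ref{lem:properties-grid}, its total accumulated charge is
$O(\varepsilon/2^d)$. Multiplied by the local $2^d$-loss this yields
$|\OPT_U|\le(1+O(\varepsilon))\,2^{d}|\SOL|$, as required.

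The running-time claim then follows by plugging the cost of $\alg$ into
Lemma~\ref{lem:framework-dynamic}: each of the $\tau+(d/\varepsilon)^d$ greedy
iterations costs $(d/\varepsilon)^{O(d)}\log^{2d+1}n$ for the auxiliary-grid
query and insertion of $2^d$ vertices, and the overhead of checking $U\ge\tau$
is $O(\tau)$ via a truncated recursive traversal of the children's pointers.
The deterministic update time absorbs the extra $(d/\varepsilon)^{O(1/\varepsilon)}$
factor from iterating over all offsets in Lemma~\ref{lem:offset}, while the
high-probability version only pays an extra $\log N$ factor from sampling
$O(\log N)$ offsets. The main obstacle in the plan is calibrating the
threshold $\tau$ so that the dense-case charging yields exactly a
$(1+\varepsilon)$-overhead on top of the $2^d$-factor of the greedy subroutine,
while simultaneously keeping the sparse-case running time polylogarithmic;
once $\tau$ is fixed as above, the two effects fit together and the rest is
bookkeeping on top of the framework already developed for the weighted
hypercube algorithm.
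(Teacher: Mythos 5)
Your overall structure — the dense/sparse dichotomy, greedy-by-size with a vertex-charging argument for the $2^d$ factor, and the per-level charging for the $(1+\varepsilon)$ overhead — is the same as the paper's, but two specific design choices differ from the paper's Section~\ref{sec:Unweighted-hypercubes} and create genuine problems.

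First, in the sparse case you call a subroutine $\SUB(Q)$ that ``ignores the children'' and runs greedy-by-size \emph{from scratch} on all of $\C(Q)$. The paper instead initializes $\bar{\C}$ to $\bigcup_{Q'\in\ch(Q)}\ALG(Q')$ and greedy-adds \emph{only} hypercubes from $\C'(Q)$. This is not just a cosmetic difference: the paper's greedy runs for at most $(d/\varepsilon)^d$ iterations (Lemma~\ref{lem:properties-grid} bounds the number of additions from $\C'(Q)$), whereas yours runs for up to $|\SUB(Q)|\le|\OPT_U(Q)|$ iterations, which in the sparse case is $\Theta(\tau)=\Theta((d/\varepsilon)^{O(d)}\log N)$ — an extra $\log N$ factor. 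Tracking this through the cost of the auxiliary grid (which has $O(|\bar{\C}|^{2d})$ aligned hyperrectangles per iteration) and Lemma~\ref{lem:framework-dynamic}, your deterministic update time comes out as $(d/\varepsilon)^{O(d^2+1/\varepsilon)}\lg^{2d+2}N\lg^{2d+1}n$, which is one $\lg N$ factor worse than the $\lg^{2d+1}N$ bound you are asked to prove. Your stated per-iteration bound of $(d/\varepsilon)^{O(d)}\lg^{2d+1}n$ hides this because it omits the $\lg^{2d}N$ coming from $|\bar{\C}|^{2d}$. Relatedly, your intermediate inequality $|\OPT_U(Q)|\le U+(d/\varepsilon)^d$ implicitly assumes the children's solutions are exact; for hypercubes they are only $O(2^d)$-approximate, so a $2^d$ factor is missing there (this doesn't change the asymptotics but it is an error).

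Second, you invoke Lemma~\ref{lem:construct-aux-grid} for the auxiliary grid. That lemma builds a \emph{weight-balanced} grid for the weighted-hypercube algorithm of Section~\ref{sec:Weighted-hypercubes}. For greedy-by-size in the unweighted case what you need is a \emph{face-aligned} grid whose coordinates in each dimension $j$ are exactly $\{x_i^{(j)},y_i^{(j)}:C_i\in\bar{\C}\}\cup\{x_Q^{(j)},y_Q^{(j)}\}$, because the correctness of enumerating aligned hyperrectangles rests on Lemma~\ref{lem:aligned-hyperrectangle}: the next smallest unblocked hypercube is contained in some aligned $B^*$ that is \emph{disjoint from every selected hypercube}, and this requires the grid lines to coincide with the faces of selected hypercubes. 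A weight-balanced grid à la Lemma~\ref{lem:construct-aux-grid} does not give you that property — an aligned $B^*$ could partially overlap a selected hypercube and the search would miss candidates. The fix is simple (just define the grid by faces as the paper does), but as stated your subroutine is not correct.

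Your approximation analysis otherwise survives: a recursive bound $\alpha_\ell\le\max\bigl(2^d,\ \alpha_{\ell+1}+(d/\varepsilon)^d/\tau\bigr)$, with $\alpha_{\lg N}=2^d$ since the bottom level is always sparse, gives $\alpha_0\le 2^d+\varepsilon/2^d\le(1+\varepsilon)2^d$ for your choice of $\tau$. The extra $2^d$ factor you fold into $\tau$ is unnecessary (the paper uses $\tau=d^d\log N/\varepsilon^{d+1}$ and still obtains $(1+\varepsilon)2^d$ because the dense-drop charge of $\varepsilon$ per hypercube and the blocking charge of $2^d$ per hypercube simply add, giving $2^d+\varepsilon\le(1+\varepsilon)2^d$), but it is not wrong.
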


We define the algorithm for hypercubes.
Let $x_{Q}^{(1)},\dots,x_{Q}^{(d)}$ and $y_{Q}^{(1)},\dots,y_{Q}^{(d)}$
be such that $Q=[x_{Q}^{(1)},y_{Q}^{(1)}]\times\dots\times[x_{Q}^{(d)},y_{Q}^{(d)}]$.
If $\left|\bigcup_{Q'\in\ch(Q)}\ALG(Q')\right|>d^d \log N/\epsilon^{d+1}$,
we output $\bigcup_{Q'\in\ch(Q)}\ALG(Q')$ by returning a list of pointers to
$\{\ALG(Q')\}_{Q'\in \ch(Q)}$. Otherwise, roughly speaking
we sort the hypercubes $C\in\C'(Q)$ non-decreasingly by size and
add them greedily to our solution, i.e., we add a hypercube $C\in\C'(Q)$
if $C$ does not overlap with a hypercube in $\bigcup_{Q'\in\ch(Q)}\ALG(Q')$
or with a previously selected hypercube from $\C'(Q)$. To this end,
we first initialize our solution for $Q$ to $\bigcup_{Q'\in\ch(Q)}\ALG(Q')$.
Then, we proceed in iterations where in each iteration we add a hypercube
from $\C'(Q)$ to the solution. %
Suppose that at the beginning of the current iteration, the current
solution is $\bar{\C}\subseteq\C(Q)$, containing all hypercubes in
$\bigcup_{Q'\in\ch(Q)}\ALG(Q')$ and additionally all hypercubes from
$\C'(Q)$ that we selected in previous iterations. We want to find
the hypercube $C_{i^{*}}\in\C'(Q)$ with smallest size $s_{i^{*}}$ which has the
property
that it does not overlap with any hypercube in $\bar{\C}$. To this end,
we construct an \emph{auxiliary grid} inside $Q$ with coordinates
$Z^{(j)}:=\{x_{i}^{(j)},y_{i}^{(j)}|C_{i}\in\bar{\C}\}\cup\{x_{Q}^{(j)},y_{Q}^{(j)}\}$
for each dimension $j\in[d]$. Observe
that if such a $C_{i^{*}}$ exists, it overlaps with some set of auxiliary grid cells;
let $B^{*}$ denote the union of these cells. Then $B^{*}$
does not intersect with any hypercube in $\bar{\C}$ and is \emph{aligned
with} $Z^{(1)},\dots,Z^{(d)}$, where we say that a set $B\subseteq\mathbb{R}^{d}$
is aligned with $Z^{(1)},\dots,Z^{(d)}$ if there are values $x^{(j)},y^{(j)}\in Z^{(j)}$
for each $j\in[d]$ such that $B=(x^{(1)},y^{(1)})\times\dots\times(x^{(d)},y^{(d)})$. 
\begin{lem}
\label{lem:aligned-hyperrectangle}
	If $C_{i^*}$ exists, then there exists a hyperrectangle $B^{*}\subseteq Q$ such
	that $C_{i^{*}}\subseteq B^{*}$, $B^{*}$ is aligned with $Z^{(1)},\dots,Z^{(d)}$
	and that satisfies $B^{*}\cap C_{i}=\emptyset$ for each $C_{i}\in\bar{\C}$. 
\end{lem}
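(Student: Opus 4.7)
The plan is to construct $B^*$ explicitly by extending $C_{i^*}$ outward in each dimension until we hit the nearest grid coordinates of the auxiliary grid $Z^{(1)}, \dots, Z^{(d)}$, and then verify that this extension is safe, i.e., still disjoint from every $C_i \in \bar{\C}$. Write $C_{i^*} = \prod_{j=1}^d (x_{i^*}^{(j)}, y_{i^*}^{(j)})$. For each dimension $j$, define $a^{(j)}$ to be the largest element of $Z^{(j)}$ satisfying $a^{(j)} \le x_{i^*}^{(j)}$ and $b^{(j)}$ to be the smallest element of $Z^{(j)}$ satisfying $b^{(j)} \ge y_{i^*}^{(j)}$. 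Since $x_Q^{(j)}, y_Q^{(j)} \in Z^{(j)}$ and $C_{i^*} \subseteq Q$, both $a^{(j)}$ and $b^{(j)}$ are well-defined and satisfy $x_Q^{(j)} \le a^{(j)}$ and $b^{(j)} \le y_Q^{(j)}$. Setting $B^* := \prod_{j=1}^d (a^{(j)}, b^{(j)})$, we immediately obtain $C_{i^*} \subseteq B^* \subseteq Q$, and $B^*$ is aligned with $Z^{(1)}, \dots, Z^{(d)}$ by construction.

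It remains to verify that $B^* \cap C_i = \emptyset$ for every $C_i \in \bar{\C}$. I would argue by contradiction: fix some $C_i \in \bar{\C}$ with $B^* \cap C_i \neq \emptyset$ and derive a contradiction with $C_i \cap C_{i^*} = \emptyset$. Since $C_{i^*}$ was chosen precisely so that it does not overlap any hypercube in $\bar{\C}$, there must exist a dimension $j$ in which the open intervals $(x_i^{(j)}, y_i^{(j)})$ and $(x_{i^*}^{(j)}, y_{i^*}^{(j)})$ are disjoint, i.e., either $y_i^{(j)} \le x_{i^*}^{(j)}$ or $x_i^{(j)} \ge y_{i^*}^{(j)}$.

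In the first case, $y_i^{(j)} \in Z^{(j)}$ is a candidate for $a^{(j)}$, so by maximality $a^{(j)} \ge y_i^{(j)}$; hence the projection of $C_i$ onto dimension $j$ lies in $(-\infty, y_i^{(j)}] \subseteq (-\infty, a^{(j)}]$, which is disjoint from the open interval $(a^{(j)}, b^{(j)})$, contradicting the assumption that $B^* \cap C_i \neq \emptyset$. The second case is symmetric: $x_i^{(j)} \in Z^{(j)}$ is a candidate for $b^{(j)}$, so $b^{(j)} \le x_i^{(j)}$ by minimality, and the projection of $C_i$ onto dimension $j$ lies in $[x_i^{(j)}, \infty) \subseteq [b^{(j)}, \infty)$, again disjoint from $(a^{(j)}, b^{(j)})$. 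This yields the contradiction and completes the proof.

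The argument is essentially a one-step geometric snapping, and I do not expect any real obstacle; the only subtlety is that $C_i$ and $C_{i^*}$ are open sets, so one has to be careful that when $y_i^{(j)} = x_{i^*}^{(j)}$ (a touching-but-not-overlapping configuration) the snapping still produces $a^{(j)} \ge y_i^{(j)}$, which is precisely where using the open interval $(a^{(j)}, b^{(j)})$ in the definition of $B^*$ is essential.
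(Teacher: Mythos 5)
Your construction and contradiction argument are exactly the approach the paper takes: snap $C_{i^*}$ outward to the nearest coordinates of $Z^{(j)}$ in each dimension and then use the fact that every $C_i\in\bar{\C}$ contributes its endpoints to $Z^{(j)}$ to rule out any overlap. If anything, your case split (first derive a dimension of disjointness from $C_i\cap C_{i^*}=\emptyset$, then show $C_i$ cannot reach past $a^{(j)}$ or $b^{(j)}$) is a slightly cleaner statement of the same argument.
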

\begin{proof}
	Let $C_{i^{*}}=(x_{i^{*}}^{(1)},y_{i^{*}}^{(1)})\times\dots\times(x_{i^{*}}^{(d)},y_{i^{*}}^{(d)})$.
	For all $j\in[d]$, set $x^{(j)}\in Z^{(j)}$ to the largest coordinate
	in $Z^{(j)}$ such that $x^{(j)}\leq x_{i^{*}}^{(j)}$ and set $y^{(j)}\in Z^{(j)}$
	to the smallest coordinate in $Z^{(j)}$ such that $y^{(j)}\geq y_{i^{*}}^{(j)}$.
	Now let $B^{*}=(x^{(1)},y^{(1)})\times\dots\times(x^{(d)},y^{(d)})$.
	Clearly, $B^{*}$ is aligned with $Z^{(1)},\dots,Z^{(d)}$.

	We prove the last property claimed in the lemma by contradiction.
	Suppose there exists a $C_{i}=\prod_{j=1}^{d}(x_{i}^{(j)},y_{i}^{(j)})\in\bar{\C}$
	such that $B^{*}\cap C_{i}\neq\emptyset$. Then there exists a coordinate
	$j\in[d]$ such that $x^{(j)}< x_{i}^{(j)}\leq x_{i^{*}}^{(j)}$
	or $y_{i^{*}}^{(j)}\leq y_{i}^{(j)}< y^{(j)}$. However, the definition
	of $Z^{(j)}$ implies that $x_{i}^{(j)}\in Z^{(j)}$ and $y_{i}^{(j)}\in Z^{(j)}$.
	This contradicts our above choices of $x^{(j)}$ and $y^{(j)}$ which
	were picked as the largest (smallest) coordinates in $Z^{(j)}$ such
	that $x^{(j)}\leq x_{i^{*}}^{(j)}$ ($y^{(j)}\geq y_{i^{*}}^{(j)}$).
\end{proof}

We enumerate all possibilities for $B^{*}$. We discard a candidate for $B^{*}$
if there is a hypercube $C_{i}\in\bar{\C}$
with \emph{$B^{*}\cap C_{i}\ne\emptyset$}; this is done by iterating
over all $C_{i}\in\bar{\C}$ and checking whether $B^{*}\cap C_{i}\neq\emptyset$.
Note that there are only $O(\prod_{j=1}^{d}|Z^{(j)}|^{2})=O\left((2|\bar{\C}|)^{2d}\right)$
possibilities for $B^{*}$
that are aligned with $Z^{(1)},\dots,Z^{(d)}$. For each such possibility
for $B^{*}$, we compute the hypercube $C_{i}\in\C'(Q)$ with smallest
size $s_{i}$ such that $C_{i}\subseteq B^{*}$, using the data structure
$\D'(Q)$ from Lemma~\ref{lem:data-structure}. Let $C_{i^{*}}$
be the hypercube with smallest size among the hypercubes that we found
for all candidates for $B^{*}$. We select $C_{i^{*}}$ and add it
to $\bar{\C}$. This completes one iteration. We stop if in one iteration
we do not find a hypercube $C_{i^{*}}$ that we can add to $\bar{\C}$.
Due to Lemma~\ref{lem:properties-grid} any feasible solution can
contain at most $(d/\epsilon)^{d}$ hypercube from $\C'(Q)$. Hence,
the number of iterations is at most $(d/\epsilon)^{d}$. Finally,
let $\bar{\C}$ denote the union of all hypercubes in $\bigcup_{Q'\in\ch(Q)}\ALG(Q')$
and all hypercubes from $\C'(Q)$ that we selected in some iteration
of our algorithm. Then the algorithm returns $\bar{\C}$ as a list of hypercubes. 

\begin{lem}
\label{lem:subroutine-unweighted-hypercubes}
	We have $|\bar{\C}|=O(\frac{d^d}{\varepsilon^{d+1}}\lg N)$ and computing
	$\bar{\C}$ takes a total time of
	$O(\left(\frac{2d}{\epsilon}\right)^{d(d+1)}\log^{2d}N\lg^{2d+1}n)$. 
\end{lem}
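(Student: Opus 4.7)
The plan is to prove the size bound first and then deduce the running time from it.

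For the size bound, the greedy loop is entered only when $\left|\bigcup_{Q'\in\ch(Q)}\ALG(Q')\right|\le d^{d}\lg N/\varepsilon^{d+1}$, otherwise the algorithm returns immediately via pointers. Once inside the loop, every hypercube added in an iteration comes from $\C'(Q)$ and does not overlap any previously selected hypercube in $\bar{\C}$, so the set of hypercubes added by the loop forms an independent set inside $\C'(Q)$. By Lemma~\ref{lem:properties-grid} any such set has size at most $(d/\varepsilon)^{d}$, so the loop terminates in at most $(d/\varepsilon)^{d}$ iterations. Adding the two contributions gives $|\bar{\C}|\le d^{d}\lg N/\varepsilon^{d+1}+(d/\varepsilon)^{d}=O\!\bigl(d^{d}\lg N/\varepsilon^{d+1}\bigr)$.

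For the running time I would analyze one iteration and then multiply by the iteration count. At the start of an iteration the current $\bar{\C}$ has size $O(d^{d}\lg N/\varepsilon^{d+1})$, so each $Z^{(j)}$ has $|Z^{(j)}|\le 2|\bar{\C}|+2$ coordinates and can be built in time proportional to $|\bar{\C}|$. The number of hyperrectangles $B^{*}$ aligned with $Z^{(1)},\dots,Z^{(d)}$ is $\prod_{j=1}^{d}\binom{|Z^{(j)}|}{2}=O(|\bar{\C}|^{2d})$, which by Lemma~\ref{lem:aligned-hyperrectangle} is sufficient to cover all candidate targets for the next greedy pick. For each candidate $B^{*}$ I (i) test $B^{*}\cap C_{i}=\emptyset$ for every $C_{i}\in\bar{\C}$ in $O(d|\bar{\C}|)$ total time, and (ii) invoke the smallest-hypercube-contained-in-$B^{*}$ query of Lemma~\ref{lem:data-structure}(\ref{item:cuboid-query}) on $\D'(Q)$ in time $O(\log^{2d+1}n)$. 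Per iteration this gives $O\!\bigl(|\bar{\C}|^{2d}(d|\bar{\C}|+\log^{2d+1}n)\bigr)$; multiplying by $(d/\varepsilon)^{d}$ iterations and collecting powers yields the claimed bound, since the $\log^{2d+1}n$ term dominates $d|\bar{\C}|$ whenever $n$ is not tiny.

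The main obstacle is arithmetic bookkeeping: a direct substitution of $|\bar{\C}|=O(d^{d}\lg N/\varepsilon^{d+1})$ into $|\bar{\C}|^{2d}$ produces large exponents of $d/\varepsilon$, and care is needed to consolidate the contribution into the form $(2d/\varepsilon)^{d(d+1)}\lg^{2d}N$ as stated. The cleanest way is to isolate the $\lg N$ factors from the $d/\varepsilon$ factors throughout and note that the $(d/\varepsilon)^{d}$ iteration count absorbs the lower-order term $(d/\varepsilon)^{d}$ coming from new hypercubes when raising to the $2d$-th power; the rest is a direct computation.
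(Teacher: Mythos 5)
Your proof follows the paper's proof step by step: the size bound comes from the entry threshold plus Lemma~\ref{lem:properties-grid}, and the running time comes from $(d/\varepsilon)^{d}$ iterations, $O(|\bar{\C}|^{2d})$ aligned candidates per iteration, an $O(d|\bar{\C}|)$ intersection check and an $O(\lg^{2d+1}n)$ query per candidate. Your worry about the arithmetic is well founded: a direct substitution of $|\bar{\C}|=O(d^{d}\lg N/\varepsilon^{d+1})$ into $|\bar{\C}|^{2d}$ yields an exponent on $d/\varepsilon$ closer to $2d(d+1)$ than $d(d+1)$, and the paper's own intermediate equality $(2|\bar{\C}|)^{2d}=O\bigl((2d/\varepsilon)^{d(d+1)}\lg^{2d}N\bigr)$ has the same discrepancy (it only matches for $d=1$); the looser $(d/\varepsilon)^{O(d^{2})}$ carried to Theorem~\ref{thm:unweighted-hypercubes} absorbs both.
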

\begin{proof}
	Before the first iteration starts, we have that
	$|\bar{\C}|=|\bigcup_{Q'\in\ch(Q)}\ALG(Q)|\leq\frac{d^d}{\varepsilon^{d+1}}\lg N$. 
   	In each iteration we add at most one hypercube from $\C'(Q)$ and, by
	Lemma~\ref{lem:properties-grid}, at most $(d/\varepsilon)^{d}$ such hypercubes
	can be added to $\bar{\C}$. Thus, when we stop
	$|\bar{\C}| \leq \frac{d^d}{\varepsilon^{d+1}}\lg N + (d/\varepsilon)^{d} =
	O(\frac{d^d}{\varepsilon^{d+1}}\lg N)$.

	Now let us bound the running time of the algorithm. There are
	$O((d/\varepsilon)^{d})$ iterations for adding hypercubes from $\C'(Q)$. Each
	iteration performs
	$O\left((2|\bar{\C}|)^{2d}\right)=O(\left(\frac{2d}{\epsilon}\right)^{d(d+1)}\log^{2d}N)$
	guesses for $B^{*}$. In each iteration, we can check if $B^{*}\cap
	C_{i}\neq\emptyset$ for all $C_{i}\in\bar{\C}$ in time $O(|\bar{\C}|d)$ as
	follows.  Given $C_{i}=\prod_{j=1}^{d}[x_{i}^{(j)},y_{i}^{(j)}]\in\bar{\C}$
	and $B^{*}=\prod_{j=1}^{d}(x_{B^{*}}^{(j)},y_{B^{*}}^{(j)})$, the algorithm
	checks if
	$(x_{B^{*}}^{(j)},y_{B^{*}}^{(j)})\cap(x_{i}^{(j)},y_{i}^{(j)})\neq\emptyset$
	for all $j=1,\dots,d$; this takes time $O(d)$. After that, the algorithm
	spends time $O(\lg^{2d+1}n)$ to find the smallest hypercube in $\C'(Q)$
	using the data structure from Lemma~\ref{lem:data-structure}.
	Since $O(|\bar{\C}|)=O(\frac{d^d}{\varepsilon^{d+1}}\lg N)$, the total running
	time for the procedure is
	$O(\left(\frac{2d}{\epsilon}\right)^{d(d+1)}\log^{2d}N\lg^{2d+1}n)$.
\end{proof}

It remains to bound the approximation ratio of the overall algorithm.
First, assume that we are in the case that
$\left|\bigcup_{Q'\in\ch(Q)}\ALG(Q')\right|\leq\frac{d^d}{\epsilon^{d+1}}\log N$.
We show that then $|\bar{\C}|\le2^{d}\cdot|\OPT(Q)|$, where $\OPT(Q)$
denotes the optimal solution for the hypercubes $\C(Q)$. We use a
charging argument. When our algorithm selects a hypercube $C_{i}$
with $C_{i}\notin\OPT(Q)$ then potentially there can be a hypercube
$C_{i'}\in\OPT(Q)$ that cannot be selected later because $C_{i}\cap C_{i'}\ne\emptyset$.
However, our algorithm selects the input hypercubes greedily, ordered
by size. Therefore, we show that for each selected hypercube $C_{i}\in\bar{\C}$
there are at most $2^{d}$ hypercubes $C_{i'}\in\OPT(Q)$ that appear
\emph{after} $C_{i}$ in the ordering such that $C_{i}\cap C_{i'}\ne\emptyset$,
i.e., such that $C_{i}$ prevented the algorithm from selecting $C_{i'}$.
In such a case the hypercube $C_{i'}$ must overlap a vertex of $C_{i}$
and since $C_{i}$ has only $2^{d}$ vertices this yields the approximation
ratio of $2^{d}$. 
\begin{lem}
\label{lem:charging-2}
	Let $C_{i}\in\C(Q)$. There are at most $2^{d}$ hypercubes $C_{i'}\in\OPT(Q)$
	s.t.\ $C_{i}\cap C_{i'}\ne\emptyset$ and $s_{i'}\geq s_{i}$. 
\end{lem}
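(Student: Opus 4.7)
The plan is to show that the mapping from each such $C_{i'}$ to a suitable vertex of $C_i$ is injective; since $C_i$ has $2^d$ vertices, the claim will follow.

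First I would establish the following dimension-wise observation: for every dimension $j \in [d]$, at least one of $x_i^{(j)} \ge x_{i'}^{(j)}$ or $y_i^{(j)} \le y_{i'}^{(j)}$ must hold. Otherwise $C_{i'}$'s interval in dimension $j$ would be strictly contained in $C_i$'s, forcing $s_{i'} < s_i$ and contradicting the hypothesis that $s_{i'} \ge s_i$. Equivalently, in every dimension, at least one endpoint of $C_i$'s interval lies in the closed interval $[x_{i'}^{(j)}, y_{i'}^{(j)}]$.

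Next, for every $C_{i'}$ satisfying the lemma's hypotheses, I would pick for each dimension $j$ one such endpoint of $C_i$ (arbitrarily if both options are available); the resulting $d$ coordinates determine a vertex $v(C_{i'})$ of $C_i$. To prove that the map $C_{i'} \mapsto v(C_{i'})$ is injective on the set of hypercubes under consideration, I would consider the perturbation $v_\delta$ obtained by displacing each coordinate of $v(C_{i'})$ by $\delta > 0$ strictly toward the interior of $C_i$ (i.e.\ in the $+$ direction from a chosen $x_i^{(j)}$ and in the $-$ direction from a chosen $y_i^{(j)}$). Because $C_i \cap C_{i'} \neq \emptyset$ forces $\max(x_i^{(j)}, x_{i'}^{(j)}) < \min(y_i^{(j)}, y_{i'}^{(j)})$ in every dimension, each chosen coordinate sits strictly below the opposite boundary of $C_{i'}$'s interval, so for sufficiently small $\delta$ the point $v_\delta$ lies in the open box $C_{i'}$, and by construction also in $C_i$.

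Finally, if two distinct $C_{i'}, C_{i''} \in \OPT(Q)$ were mapped to the same vertex $v$, then choosing $\delta$ small enough to work for both simultaneously would place $v_\delta \in C_{i'} \cap C_{i''}$, contradicting the independence of $\OPT(Q)$. Hence the map is injective, and the number of qualifying $C_{i'}$'s is bounded by the number of vertices of $C_i$, namely $2^d$. The main subtlety to handle carefully is the boundary case where $C_i$'s and $C_{i'}$'s intervals exactly coincide in some dimension: then both endpoints of $C_i$ are on the boundary of $C_{i'}$'s open interval, but the inward-perturbation argument still succeeds because the perturbation strictly enters the overlap in each dimension.
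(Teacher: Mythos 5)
Your proof is correct and follows essentially the same strategy as the paper's: in each dimension at least one endpoint of $C_i$'s interval falls inside $C_{i'}$'s interval, so each qualifying $C_{i'}$ can be associated with a vertex of $C_i$, and independence of $\OPT(Q)$ forces this association to be injective. Your inward-perturbation step is actually slightly more careful than the paper, which asserts $p\in C_{i'}$ directly even though the chosen coordinate may coincide with a boundary of the open box $C_{i'}$; your $v_\delta$ argument cleanly closes that small gap.
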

\begin{proof}
	Let $C_{i'}\in\OPT(Q)$ be a hypercube with the properties claimed
	in the lemma. We show that $C_{i'}$ must overlap with at least one
	vertex of $C_{i}$. Since $C_{i}$ has $2^{d}$ vertices, this implies
	the lemma. Since $C_{i}\cap C_{i'}\neq\emptyset$ and $s_{i'}\geq s_{i}$,
	for all dimensions $j\in[d]$ we must have that either $x_{i}^{(j)}\leq x_{i'}^{(j)}\leq y_{i}^{(j)}\leq y_{i'}^{(j)}$
	or $x_{i'}^{(j)}\leq x_{i}^{(j)}\leq y_{i'}^{(j)}\leq y_{i}^{(j)}$
	or $x_{i'}^{(j)}\leq x_{i}^{(j)}\leq y_{i}^{(j)}\leq y_{i'}^{(j)}$.
	 This implies that in each dimension $j\in[d]$ there exists a point
	$p_{j}\in\{x_{i}^{(j)},y_{i}^{(j)}\}$ such that $p_{j}\in[x_{i'}^{(j)},y_{i'}^{(j)}]$.
	Now observe that the point $p=(p_{1},\dots,p_{d})\in\R^{d}$ is a
	vertex of $C_{i}$ and $p\in C_{i'}$. This implies that $C_{i'}$
	overlaps with at least one vertex of $C_{i}$ and since the hypercubes in
	$\OPT(Q)$ are non-intersecting, $C_{i'}$ is the only hypercube of $\OPT(Q)$ that
	overlaps with this vertex of $C_i$.
\end{proof}

For each $C_{i}\in\bar{\C}$ we charge the at most $2^{d}$ hypercubes $C_{i'}$
due to Lemma~\ref{lem:charging-2} to $C_{i}$ which proves an approximation ratio
of $2^{d}$ for the case that
$\left|\bigcup_{Q'\in\ch(Q)}\ALG(Q')\right|\leq\frac{d^d}{\epsilon^{d+1}}\log N$.
If $\left|\bigcup_{Q'\in\ch(Q)}\ALG(Q')\right|>\frac{d^d}{\epsilon^{d+1}}\log N$
then we charge the at most $(d/\epsilon)^{d}$ hypercubes in $\OPT(Q)\cap\C'(Q)$
to the hypercubes in $\bigcup_{Q'\in\ch(Q)}\ALG(Q')$. Each of the latter
hypercubes receives a charge of at most $\epsilon / \log N$ in this level, and
since there are $\log N$ levels, it receives a charge of at most $\epsilon$ in
total.  Hence, we obtain an approximation ratio of $(1+\epsilon)2^{d}$.

We define $\SOL:=\ALG(Q^{*})$ to be our computed solution and we output $\SOL$
as follows: if $\ALG(Q^{*})$ contains a list if hypercubes, then we output
those.  Otherwise $\ALG(Q^{*})$ contains pointers to solutions $\ALG(Q')$ with
$Q' \in  \ch(Q)$ and we recursively output these solutions. Hence, we can output
$\SOL$ in time $O(|\SOL|)$.  Finally, we can store each solution $\ALG(Q)$ such
that it has one entry storing its cardinality $|\ALG(Q)|$. We can easily
recompute $|\ALG(Q)|$ whenever our algorithm above recomputes an entry
$\ALG(Q)$.  Then, in time $O(1)$ we can report the size of $\SOL$ by simply
returning $|\ALG(Q^{*})|$.

We conclude the section with the proof of
Theorem~\ref{thm:unweighted-hypercubes}.
\begin{proof}[Proof of Theorem~\ref{thm:unweighted-hypercubes}]
	This follows immediately from combining Lemmas~\ref{lem:framework-dynamic}
	and \ref{lem:subroutine-unweighted-hypercubes} for the running time and the
	charging argument after Lemma~\ref{lem:charging-2} for the approximation
	ratio.
\end{proof}

\section{Dynamic Independent Set of Weighted Intervals}
\label{apx:weighted-intervals}

We present a $(1+\epsilon)$-approximation algorithm for fully dynamic
IS of weighted intervals with arbitrary weights.  Our main result
is summarized in the following theorem.
\begin{thm}
\label{thm:weighted-intervals-1}
	For the weighted maximum IS of intervals problem 
	there exists a fully dynamic algorithm which
	maintains $(1+\epsilon)$-approximate solution deterministically with worst-case
	update time~$\updateweightedintervals$.
\end{thm}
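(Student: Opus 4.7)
The plan is to instantiate the hierarchical grid framework of Section~\ref{sec:framework} for $d=1$ and, for every cell $Q$, define a subroutine that produces a near-optimal independent set $\ALG(Q)\subseteq\C(Q)$ from the two children solutions $\ALG(Q_1),\ALG(Q_2)$ in time $\polylg(n,N)\cdot\lg W\cdot(1/\epsilon)^{O(1/\epsilon)}$; the overall algorithm then follows from Lemma~\ref{lem:framework-dynamic}. The first ingredient is to run, in parallel, the $O(1)$-approximation algorithm for weighted hypercubes from Section~\ref{sec:Weighted-hypercubes} specialized to $d=1$. This gives us, for each cell $Q$, the set of weighted points $P(Q)$ with the property (Lemma~\ref{lem:apx-ratio-weighted-hypercubes}) that $w(P(Q\cap S))$ is within a constant factor of $w(\OPT(Q\cap S))$ for any ``nice'' subsegment $S$. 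These points will be used both as a proxy for local weight and as segment boundaries.

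Given $Q$, I would first construct a one-dimensional auxiliary grid over $Q$ from $P(Q)$ analogous to Lemma~\ref{lem:construct-aux-grid}, with the additional refinement of splitting $Q$ into $O_{\epsilon}(\polylg N)$ \emph{segments} $S_1,\dots,S_m$ whose endpoints are coordinates of $P(Q)$, such that inside each segment the total $P(Q)$-weight is tightly controlled. For each segment $S_j$ the algorithm then decides between two treatments. In the \textbf{dense} case we simply inherit: we take the intervals from $\ALG(Q_1)\cup\ALG(Q_2)$ that lie inside $S_j$. In the \textbf{sparse} case we compute an almost-optimal solution inside $S_j$ from scratch by guessing, in time $(1/\epsilon)^{O(1/\epsilon)}\lg W$, the rounded weight class of each of the at most $O_{\epsilon}(1)$ intervals in some near-optimal solution for $S_j$ together with the order of their right endpoints, and then greedily picking the lightest-right-endpoint interval with that weight class that fits; each such greedy pick is supported in time $O(\lg^2 n)$ by the data structure of Lemma~\ref{lem:data-structure} (item~4). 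The final $\ALG(Q)$ is the union of the segment solutions together with $O_{\epsilon}(1)$ ``bridging'' intervals in $\C'(Q)$ across segment boundaries, again chosen by enumeration.

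The structural heart of the proof is a lemma asserting that for every cell $Q$ there is a $(1+O(\epsilon))$-approximate solution $\OPT^\star(Q)$ that decomposes along segment boundaries so that in each segment $S_j$ it either (i) contains $\Omega_{\epsilon}(1)$ intervals of comparable weight, in which case it is contained in one child of $Q$ and hence is already essentially represented by $\ALG(Q_1)\cup\ALG(Q_2)$, or (ii) consists of at most $O_{\epsilon}(1)$ intervals, hence is recovered by our guessing procedure. To prove existence one starts from any true optimum $\OPT(Q)$ and, inside every segment, either shifts weight to the child solutions (losing at most one boundary-crossing interval, charged to the $\Omega_{\epsilon}(1)$ other intervals in the same segment) or groups intervals into weight classes and prunes each overfull class, using that $P(Q)$ accurately estimates the weight of any subsegment so that the pruned intervals can be charged to already-counted weight.

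The main obstacle, and the step I expect to take the most care, is showing that the $(1+O(\epsilon))$-factor loss does not accumulate over the $\lg N$ levels of the grid. The argument should be a level-by-level charging: the intervals that our algorithm fails to include, relative to $\OPT^\star(Q)$, at a cell $Q$ are charged either to $\Omega_{\epsilon}(1)$ intervals already selected inside a dense segment at the same level (so each selected interval receives charge $O(\epsilon)$ per level) or to the $P(Q)$-weight of a sparse segment (which itself is proxied by selected intervals at lower levels via Lemma~\ref{lem:apx-ratio-weighted-hypercubes}). Since by Lemma~\ref{lem:properties-grid} each interval of $\SOL$ is involved in at most $\lg N$ cells, each interval absorbs at most $O(\epsilon\lg N/\lg N)=O(\epsilon)$ total charge after rescaling $\epsilon$ by $\Theta(1/\lg N)$, giving the claimed $(1+\epsilon)$-approximation. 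Combining this with Lemma~\ref{lem:framework-dynamic} and the running time of the subroutine yields the stated update time of $\updateweightedintervals$, proving Theorem~\ref{thm:weighted-intervals-1}.
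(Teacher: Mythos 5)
Your outline correctly identifies the key ingredients — running the $O(1)$-approximation for weighted hypercubes to obtain $P(Q)$, building a $P(Q)$-driven auxiliary grid, partitioning each cell into segments, handling each segment as either ``dense'' (inherit from children) or ``sparse'' (enumerate rounded weight classes and a left-to-right order, then pick greedily via item~4 of Lemma~\ref{lem:data-structure}), and proving a structural lemma that a near-optimal solution decomposes accordingly. All of this matches the paper's construction.

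However, there is a real gap in the step you yourself flag as the ``main obstacle'': preventing the $(1+O(\epsilon))$ loss from compounding over the $\lg N$ levels. You propose a level-by-level charging where each selected interval absorbs $O(\epsilon)$ per level and then conclude the bound by ``rescaling $\epsilon$ by $\Theta(1/\lg N)$.'' That fix would work for the unweighted case (where the per-cell subroutine is an exact greedy and the $\epsilon$-dependence in the running time is only polynomial), but it fails here: the sparse-segment subroutine (your guessing of weight classes, or the paper's Lemma~\ref{lem:dp-sparse}) intrinsically costs $(1/\epsilon)^{O(1/\epsilon)}$, so replacing $\epsilon$ by $\epsilon/\lg N$ turns the update time into $(\lg N/\epsilon)^{O(\lg N/\epsilon)}$, which is superpolynomial in $N$, not the claimed $\updateweightedintervals$.

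The paper avoids the accumulation without touching $\epsilon$ via a fixed-point induction (Lemma~\ref{lem:induction-cells}). The structural lemma (Lemma~\ref{lem:structured-solution}) is deliberately stated with asymmetric error: $(1+\kappa'\epsilon)\sum_{\T_{\sparse}} w(\OPT''_{T_j}) + \sum_{\T_{\dense}} w(\OPT''_{T_j}) \ge w(\OPT^*(Q,S))$, i.e.\ the $(1+\kappa'\epsilon)$ slack is only on the sparse part. The induction then shows that every segment solution satisfies $w(\ALG(Q',S')) \ge (1+\epsilon)^{-1}(1+\kappa'\epsilon)^{-1} w(\OPT^*(Q',S'))$ \emph{with the same constant at every level}: the sparse subsegments are handled with factor $(1+\epsilon)^{-1}$ only (Lemma~\ref{lem:dp-sparse}), leaving exactly the $(1+\kappa'\epsilon)$ room that the structural lemma gives back, while the dense subsegments inherit the full $(1+\epsilon)^{-1}(1+\kappa'\epsilon)^{-1}$ from the children by the induction hypothesis. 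Because dense subsegments contribute multiplicatively with no fresh loss, the approximation factor is invariant under the recursion and never grows with $\lg N$. The only places a $1/\lg N$ appears in the paper are in the thresholds for deleting ``light'' intervals and in the grid-slice weight bound, which control additive leakage (Lemma~\ref{lem:estimate-weight-OPT}) but do not affect the $(1/\epsilon)^{O(1/\epsilon)}$ factor. You need this invariance argument; the per-level charge-and-rescale route cannot deliver the stated running time.
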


We first describe an offline implementation of our algorithm from
Theorem~\ref{thm:weighted-intervals-1} and later describe how to turn it into a
dynamic algorithm.
Before describing our algorithm in full detail, we first sketch its main steps.
Roughly speaking, our offline algorithm called on a cell $Q$ works as
follows.  (1)~We run the $O(1)$-approximation algorithm from
Section~\ref{sec:Weighted-hypercubes} which internally maintains a set of points
$P(Q)$; the points in $P(Q)$ correspond to a superset of the endpoints of
intervals in a $O(1)$-approximate IS. (2)~Based on the points in
$P(Q)$, we define an auxiliary grid (slightly different though than the
auxiliary grid defined in Section~\ref{sec:Weighted-hypercubes}).  After that,
we define \emph{segments} inside $Q$, where each segments corresponds to a
consecutive set of auxiliary grid cells.  (3)~We compute a
$(1+O(\varepsilon))$-approximate solution for each of the previously defined
segments.  To do this, we split the segments into \emph{subsegments} and assume
that we already know solutions for the subsegments. Then we compute a solution
for the segment by running a static algorithm on an IS instance
which is based on the solutions for the subsegments.  (4)~We compute
solutions for each subsegment. To obtain a solution for each subsegment of $Q$,
we show that it suffices to only consider solutions such that either
(a)~a~subsegment contains only $1/\varepsilon$ independent intervals and in this
case we can compute a $(1+\epsilon)$-approximate solution by enumerating all
possible solutions or (b)~the~solution of the subsegment only consists of
segments from the children cells, $Q_1$ and $Q_2$, and in this case we can use
previously computed solutions for the segments of $Q_1$ and $Q_2$. Finally, we
describe how the offline algorithm can be made dynamic; roughly speaking, this
is done by running the previously described procedure only on those grid cells
which are affected by an update operation, i.e., those cells which contain the
interval which was inserted/deleted during the update operation.

We note that our approach does not extend to dimensions $d>1$ because our
structural lemma (Lemma~\ref{lem:structured-solution}) and our algorithm for
computing a $(1+\varepsilon)$-approximate IS $I$ with $|I| \leq
\OoEps$ in time $O_{d,\varepsilon}(\polylog(n))$ (Lemma~\ref{lem:dp-sparse}) do not
extend to higher dimensions.

\textbf{Auxiliary grid and segments.}
Main step~(1) of the algorithm is implemented as follows. We run the algorithm
from Section~\ref{sec:Weighted-hypercubes} which maintains a $O(1)$-approximate
solution; we use the deterministic version of the algorithm since we use it as a
subroutine.  Recall that for each cell $Q$, the $O(1)$-approximation algorithm
maintains a set $P(Q)$ consisting of points in $Q$. Further recall that each
point in $P(Q)$ corresponds to an endpoint of an interval in $\C(Q)$, where the
point has the same weight as its corresponding interval; if a point is the
endpoint of multiple intervals, then its weight is the sum of the weights of the
corresponding intervals. 

We implement main step~(2) as follows.  We define an \emph{auxiliary grid}
$Z(Q)$ inside $Q$. We partition $Q$ into $O( (1/\epsilon)^{3+1/\varepsilon} \log N)$
intervals such that in the interior of each interval the points in $P(Q)$ have a
total weight of at most $\epsilon^{3+1/\varepsilon} w(P(Q))/\log N$ and such
that each interval in $\C'(Q)$ intersects at least two intervals.  The following
lemma shows how the auxiliary grid $Z(Q)$ can be constructed.
\begin{lem}
\label{lem:weighted-intervals-grid}
	Given a cell $Q$ with endpoints $(x_Q,y_Q)$ and a set of points $P(Q)$ stored in the (range
	counting) data structure due to Lemma~\ref{lem:range-counting-data-structure}.
	In time $O((1/\varepsilon)^{3+1/\varepsilon} \log N\lg^2 |P(Q)|)$ 
	we can compute a set of coordinates $Z(Q)=\{z_{1},z_{2},\dots\}\subseteq Q$
	such that
	\begin{itemize}
		\item $x_Q,y_Q \in Z(Q)$,
		\item $k'\cdot\epsilon N/2^{\ell(Q)-1}\in Z(Q)$ for each $k'\in\Z$
			such that $k'\cdot\epsilon N/2^{\ell(Q)-1}\in Q$,
		\item $w(P(Q)\cap(z_{k},z_{k+1}))\le \epsilon^{3+1/\varepsilon} \cdot w(P(Q))/\log N$
			for each $k=1,\dots,|Z(Q)|$, and
		\item $|Z(Q)| = O( (1/\epsilon)^{3+1/\varepsilon}\log N)$.
	\end{itemize}
\end{lem}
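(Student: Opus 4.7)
The plan is to construct $Z(Q)$ incrementally. First, I would initialize $Z(Q)$ to contain the coordinates forced by the first two bullets: $x_Q$, $y_Q$, and every $k'\cdot\varepsilon N/2^{\ell(Q)-1}$ lying in $Q$. Since $Q$ has length $N/2^{\ell(Q)-1}$ there are only $O(1/\varepsilon)$ such coordinates, so this step is cheap and immediately satisfies the first two bullets. I would then compute $w(P(Q))$ in $O(\log|P(Q)|)$ time with a single range-weight query on the data structure from Lemma~\ref{lem:range-counting-data-structure}, and fix the per-sub-interval threshold $T:=\varepsilon^{3+1/\varepsilon}w(P(Q))/\log N$.

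Next, I would refine $Z(Q)$ iteratively: while some open sub-interval $(z_k,z_{k+1})$ has weight greater than $T$, locate the maximal prefix $S$ of points in $P(Q)\cap(z_k,z_{k+1})$ (scanned left to right) with total weight at most $T$, and let $p$ be the next point after $S$ (which exists because the sub-interval weight exceeds $T$). If $w_p>T/2$, add $p$ itself as a new coordinate, so that $p$ becomes a coordinate and is excluded from both open children; call this a Case~1 split. Otherwise $w_p\le T/2$, and since $w(S)+w_p>T$ by maximality of $S$, it follows that $w(S)>T/2$; in this case I would insert a new coordinate strictly between the last point of $S$ and $p$ (a Case~2 split). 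Each splitting step is implemented by binary searching for the appropriate coordinate via the range-weight queries of Lemma~\ref{lem:range-counting-data-structure}, which takes $O(\log^2|P(Q)|)$ time per split.

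To bound $|Z(Q)|$ I would count Case~1 and Case~2 splits separately. Each Case~1 split removes from the total weight held in open sub-intervals a point of weight strictly larger than $T/2$, and this quantity is nonnegative and upper bounded by $w(P(Q))$, so there are at most $2w(P(Q))/T$ Case~1 splits. Each Case~2 split creates a left-child open sub-interval of weight strictly larger than $T/2$ which has weight at most $T$ and is therefore never split again; since all such left children are disjoint with combined weight at most $w(P(Q))$, there are at most $2w(P(Q))/T$ Case~2 splits. Hence the total number of splits is $O(w(P(Q))/T)=O((1/\varepsilon)^{3+1/\varepsilon}\log N)$, which combined with the $O(1/\varepsilon)$ initial coordinates yields the fourth bullet, and multiplying by the $O(\log^2|P(Q)|)$ time per split gives the running time claimed by the lemma.

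The main obstacle is proving the tight $O((1/\varepsilon)^{3+1/\varepsilon}\log N)$ bound on $|Z(Q)|$: a naïve rule that simply halves the weight of a bad sub-interval would add an extra $\log$ factor from the recursion depth. The greedy-prefix rule avoids this by guaranteeing that every split either withdraws at least $T/2$ from a monotone nonnegative budget of size $w(P(Q))$ (Case~1) or produces a permanently good sub-interval of weight at least $T/2$ that is never revisited (Case~2), so each split can be charged to a $\Theta(T)$ decrement in one of two budgets of total size $O(w(P(Q)))$.
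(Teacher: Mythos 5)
Your proof is correct but organizes the construction differently from the paper's. The paper reduces Lemma~\ref{lem:weighted-intervals-grid} to the one-dimensional case of Lemma~\ref{lem:construct-aux-grid} and appends the $O(1/\varepsilon)$ forced coordinates afterwards; the underlying construction there is a single left-to-right greedy pass, setting $z_1:=x_Q$ and always picking $z_{j+1}$ as the largest value $z$ with $w(P(Q)\cap(z_j,z))\le T$. Because the grid slices are open, that maximal $z$ is always the next ``overflow'' point of $P(Q)$, so $w(P(Q)\cap(z_j,z_{j+1}])>T$ for every non-final $j$, and pairing consecutive half-open intervals gives $O(w(P(Q))/T)$ coordinates directly, with no case distinction. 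You instead pre-insert the forced coordinates and then iteratively refine bad sub-intervals, splitting on a Case~1/Case~2 dichotomy (heavy vs.\ light blocking point $p$) and charging each case to a separate budget. Your two-potential argument is valid and gives the same $O((1/\varepsilon)^{3+1/\varepsilon}\log N)$ bound, but it is a bit more machinery than needed: placing the new coordinate at $p$ unconditionally (the paper's rule) already makes the left child good and removes $p$ from the potential, and the standard consecutive-pair charge then covers both of your cases at once. Your treatment is also slightly under-specified on two implementation points that the paper handles implicitly — how to locate a value strictly inside the Case~2 gap (you need the explicit coordinates of $\max S$ and $p$, obtainable by range reporting or the median query of Lemma~\ref{lem:range-counting-data-structure}), and how to enumerate the currently-bad sub-intervals without rescanning (keep a worklist of newly created right children) — but these are routine and do not affect correctness or the claimed $O(\log^2|P(Q)|)$ per split.
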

\begin{proof}
	The proof is very similar to the proof of
	Lemma~\ref{lem:construct-aux-grid}.  The only difference is that we
	additionally add the points
	$z_{k'} = k'\cdot\epsilon N/2^{\ell(Q)-1}$ for each $k'\in\Z$ such
	that $z_{k'} \in Q$; since $|Q| = N/2^{\ell(Q)-1}$ by definition of the
	hierarchical grid $\G$, this adds $O(1/\varepsilon)$ points to $Z(Q)$.
\end{proof}

We say that an interval $(x,y)$ is \emph{aligned} with a set of points $P$ if
$x,y\in P$; note that $x$ and $y$ might not necessarily be consecutive points in
$P$, i.e., there might be $z\in P$ such that $x < z < y$.
Let $\S(Q)$ be the set of all intervals that are aligned with $Z(Q)$. We refer
to the intervals in $\S(Q)$ as \emph{segments}.  Note that some segments
intersect/overlap and that
$|\S(Q)| = O(|Z(Q)|^2) = O( (1/\epsilon)^{6+2/\varepsilon}\log^2 N)$.

\textbf{Solutions for segments.}
Now we implement main step~(3). We split the segment into subsegments and
provide an algorithm which computes a near-optimal solution for the segment
given the solutions of all subsegments.

Let $Q$ be a cell with children cells $\ch(Q)=\{Q_{1},Q_{2}\}$ and let
$S\in\S(Q)$ be a segment with $S=(s_{1},s_{2})$.  Now we explain how our
algorithm computes a solution $\ALG(Q,S)$. In the following, we assume that for
$r=1,2$, each set $\ALG(Q_r,S')$ has been computed already for each segment $S'
\in \S(Q_r)$, i.e., for both children cells of $Q$ we know ISs for
each of their segments. We will guarantee that each such set $\ALG(Q_r,S')$ is a
$(1+O(\epsilon))$-approximation with respect to the optimal IS
consisting only of intervals in $\C(Q_r)$ contained in $S'$.  We now describe
how we use the solutions $\ALG(Q_r,S')$ to compute $\ALG(Q,S)$ as a
$(1+O(\varepsilon))$-approximate solution for $Q$.

To compute $\ALG(Q,S)$, our algorithm does the following. Initially, it defines
a second auxiliary grid $\Zfine$ based on the auxiliary grids of the two
children cells; note that $\Zfine$ is potentially different from $Z(Q)$.
Formally, we define $\Zfine := Z(Q_{1})\cup Z(Q_{2})\cup\{s_{1},s_{2}\}$, where
$s_1$ and $s_2$ are the endpoints of the current segment $S$. Based on $\Zfine$
we define \emph{subsegments} inside $S$: we define
$\Tfine = \{ (z,z') : z,z' \in \Zfine, z < z', (z,z') \subseteq S \}$ and call
the intervals $T \in \Tfine$ \emph{subsegments}.
Note that since $\Zfine$ contains all endpoints of $Z(Q_1)$ and $Z(Q_2)$, we
can later use the solutions computed for the segments of $Q_1$ and $Q_2$ (which
are aligned with $Z(Q_1)$ and $Z(Q_2)$) in order to obtain solutions for the
subsegments in $\Tfine$.

The next step is to compute a solution $\ALG(Q,T)$ for each subsegment $T \in
\Tfine$. We explain below how the solution $\ALG(Q,T)$ is computed concretely; for
now assume that we have computed it already for all $T\in\Tfine$. 

Using the solutions for the subsegments, we compute a solution for
$\ALG(Q,S)$ that represents the optimal way to combine the solutions for the
subsegments. To do this, we create a new instance of maximum weight IS
of intervals based on $\Tfine$. The new instance
contains one interval $C_{T}:=T$ for each subsegment $T\in\Tfine$
with weight $w_{T}:=w(\ALG(Q,T))$. For this instance,
we construct the corresponding interval graph with $n'=|\Tfine|$ vertices
and $m'=O(|\Tfine|^{2})$ edges and then we apply the $O(n'+m')=O(|\Tfine|^{2})$
time algorithm in~\cite{frank75some} which computes an optimal solution
for this instance. Let $\T^{*}\subseteq\Tfine$ denote the subsegments
corresponding to this optimal solution. We define
$\ALG(Q,S):=\bigcup_{T\in\T^{*}}\ALG(Q,T)$.

\textbf{Solutions for subsegments.}
Next, we implement main step~(4) which computes a solution for a subsegment.
This solution will either consist of the solutions of segments of children cells
or of ``sparse'' solutions which contain at most $1/\varepsilon$ intervals.

Let $Q$ be a cell with children $\ch(Q) = \{Q_1,Q_2\}$.  Now we explain how to
compute a solution $\ALG(Q,T)$ for a subsegment $T\in\Tfine$. We compute
(explained next) a dense solution $\ALG_\dense(Q,T)$ and a sparse solution
$\ALG_\sparse(Q,T)$. After this, the algorithm sets $\ALG(Q,T)$ to the solution
with higher weight among $\ALG_\dense(Q,T)$ and $\ALG_\sparse(Q,T)$.

To compute the dense solution $\ALG_\dense(Q,T)$, we check if
$T\subseteq Q_r$ for some $r=1,2$, i.e., if $T$ is completely contained in one
of the children cells. If this is the case, then we set
$\ALG_\dense(Q,T) = \ALG(Q_r,T)$. Note that this is possible since we have
access to all solutions $\ALG(Q_r,T')$ for the children cells and since $T\in
\S(Q_r)$ by definition of $\Tfine$. If the above check fails, e.g., because
$\ch(Q) = \emptyset$ or because $T$ overlaps with both $Q_1$
and $Q_2$, we set $\ALG_\dense(Q_r,T) = \emptyset$.

To compute the sparse solution $\ALG_\sparse(Q,T)$, we find a
$(1+O(\varepsilon))$-approximate solution of the maximum weight IS
of intervals within $T$ which contains at most $1/\varepsilon$ intervals.  We do
this by arguing that for each interval we only $O(1/\varepsilon)$ need to
consider different weight classes and, hence, we only need to consider
$(1/\varepsilon)^{O(1/\varepsilon)}$ different sequences of interval weight
classes. Then we exhaustively enumerate all of these sequences and check for
each sequence if such a sequence of intervals exists in $T$.
Lemma~\ref{lem:dp-sparse} explains how this is done in detail and we prove it in
Section~\ref{sec:dp-sparse}.
\begin{lem}
\label{lem:dp-sparse}
	Consider a subsegment $T\in\Tfine$. In time $(\OoEps)^{O(1/\varepsilon)}\lg^{2}n$ we can
	compute a set $\ALG_{\sparse}(Q,T)$ which contains at most $1/\epsilon$
	intervals from $\C(Q)$ that are all contained in $T$.  Moreover, if
	$\overline{\OPT}(Q,T)$ is the maximum weight IS with this
	property, then
	$w(\ALG_\sparse(Q,T))\ge(1+\epsilon)^{-1}w(\overline{\OPT}(Q,T))$.
\end{lem}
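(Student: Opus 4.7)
The plan is to enumerate a small collection of weight-class \emph{profiles} that describe $\overline{\OPT}(Q,T)$ up to rounding and, for each profile, use a greedy sweep through $T$ to realize it. The driving observation is that since $|\overline{\OPT}(Q,T)| \le 1/\varepsilon$, only a narrow range of weight classes matters.

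First I would round each interval's weight down to the nearest power of $(1+\varepsilon)$ --- equivalently, treating each $C_i \in \C_k$ as if it had weight $(1+\varepsilon)^k$ --- losing only a $(1+\varepsilon)$ factor. Let $k^*$ denote the class of the heaviest interval in $\overline{\OPT}(Q,T)$. Because the solution contains at most $1/\varepsilon$ intervals, the total weight of intervals whose class is below $k^* - c$, for $c := \lceil \log_{1+\varepsilon}(1/\varepsilon^2)\rceil = O((1/\varepsilon)\log(1/\varepsilon))$, is at most $\varepsilon \cdot w(\overline{\OPT}(Q,T))$, so discarding those intervals costs another $(1+\varepsilon)$ factor. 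What remains is an independent set of at most $1/\varepsilon$ intervals whose classes all lie in $\{k^*-c, \dots, k^*\}$.

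The algorithm then enumerates (i)~the value $k^* \in \{0, 1, \dots, \lceil\log_{1+\varepsilon} W\rceil\}$ and (ii)~a \emph{profile} $(k_1, \dots, k_m)$ with $m \le 1/\varepsilon$ and each $k_j \in \{k^*-c, \dots, k^*\}$, interpreted as the left-to-right sequence of classes of the chosen intervals. The number of profiles for fixed $k^*$ is at most $\sum_{m \le 1/\varepsilon}(c+1)^m = (1/\varepsilon)^{O(1/\varepsilon)}$. For each profile I run a greedy earliest-endpoint sweep: initialize a pointer $t$ to the left endpoint of $T$ and, for each $j$ in order, use property~4 of Lemma~\ref{lem:data-structure} on $\D_{k_j}(Q)$ to obtain, in $O(\log^2 n)$ time, the interval $C \in \C_{k_j}(Q)$ of smallest right endpoint whose left endpoint is at least $t$; if $C \subseteq T$, add $C$ to the candidate set and advance $t$ to the right endpoint of $C$, otherwise abort the profile. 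I set $\ALG_\sparse(Q,T)$ to be the realized candidate of maximum rounded total weight across all guesses.

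Correctness of the greedy for a fixed profile is a standard earliest-deadline-first exchange argument: for any independent set of classes $(k_1,\dots,k_m)$ placed inside $T$, replacing the leftmost choice by the greedy one (which has no larger right endpoint) only makes room for the remainder, so if any realization of the profile exists the greedy finds one. Combined with the two rounding steps, the output has weight at least $(1+\varepsilon)^{-1} w(\overline{\OPT}(Q,T))$ after re-scaling $\varepsilon$ by a constant. The total time is $O(\log_{1+\varepsilon} W) \cdot (1/\varepsilon)^{O(1/\varepsilon)} \cdot (1/\varepsilon) \cdot O(\log^2 n) = (1/\varepsilon)^{O(1/\varepsilon)} \log^2 n$ once the $\log W$ factor is folded into the polylogarithmic term under the standard polynomial-input assumption. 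The principal obstacle is the compound rounding analysis: discarding both weights that are not powers of $(1+\varepsilon)$ and entire classes below $k^* - c$ has to be charged carefully against the $1/\varepsilon$ cap on $|\overline{\OPT}(Q,T)|$ so that the overall multiplicative loss stays at $1 + O(\varepsilon)$.
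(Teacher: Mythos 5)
Your high-level approach — enumerate a small set of weight-class profiles for the at most $1/\varepsilon$ intervals in $\overline{\OPT}(Q,T)$ and then realize each profile by a greedy earliest-right-endpoint sweep using property~4 of Lemma~\ref{lem:data-structure} — is the same as the paper's, and your rounding argument (discard intervals more than $c = O((1/\varepsilon)\log(1/\varepsilon))$ classes below the top class $k^*$, and charge the loss against the heaviest interval using the cardinality bound $|\overline{\OPT}(Q,T)| \le 1/\varepsilon$) is clean and correct.

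However, there is a genuine gap in how you obtain the stated time bound. You enumerate $k^*$ over all $O(\log_{1+\varepsilon} W)$ weight classes, which gives an algorithm with running time $O(\log_{1+\varepsilon} W)\cdot(1/\varepsilon)^{O(1/\varepsilon)}\log^2 n$, and you then assert that the extra $\log W$ factor ``folds into the polylogarithmic term under the standard polynomial-input assumption.'' This is not justified here: the paper keeps $W$ as an explicit, independent parameter (the bounds in Lemma~\ref{lem:framework-dynamic} and Lemma~\ref{lem:weighted-intervals:update-time} carry a separate $\log W$ factor), and the lemma you are proving claims $(\OoEps)^{O(1/\varepsilon)}\lg^2 n$ with no $\log W$ dependence at all. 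The paper avoids the enumeration of $k^*$ entirely: it queries the range-counting data structure to obtain $w(P(Q)\cap T)$ in $O(1)$ time, and then uses Lemma~\ref{lem:estimate-weight-OPT}, which says $w(\overline{\OPT}(Q,T)) \le \kappa\, w(P(Q)\cap T)$, to compute deterministically the two thresholds $k_u$ (smallest $k$ with $(1+\varepsilon)^k \ge \kappa\, w(P(Q)\cap T)$) and $k_l$ (largest $k$ with $(1+\varepsilon)^k \le \varepsilon^2\, w(P(Q)\cap T)$). These pin the relevant classes to a window of width $k_u-k_l = O((1/\varepsilon)^2)$ without any guess over $\log_{1+\varepsilon} W$ possibilities, so the total number of profiles is $(1/\varepsilon)^{O(1/\varepsilon)}$ and no $\log W$ factor appears. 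To fix your proof you should replace the enumeration of $k^*$ by this single range-counting query; everything else in your argument then goes through unchanged.
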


To store the solutions $\ALG(Q,T)$, we store all intervals
explicitly if it holds that $\ALG(Q,S) = \ALG_\sparse(Q,T)$ and if $\ALG(Q,S) =
\ALG_\dense(Q,T)$, we store a pointer to the respective set $\ALG(Q_{j},T)$.

\textbf{The global solution for $Q^*$.} Finally, the global solution $\SOL$ which we
output is the solution for the cell $Q^*$ stored in $\ALG(Q^{*},[0,N])$, where
$Q^*$ is the cell at level $0$ containing the whole space $[0,N]$. Note that the
above algorithm indeed computed such a solution because $Q^*$ has endpoints $0$
and $N$ and, hence, $[0,N] \in Z(Q^*)$. In order to return $\SOL$, we first
output all intervals that are stored in $\ALG(Q^{*},[0,N])$ explicitly.  After
that, we recursively output all solutions in sets $\ALG(Q',S')$ to which we
stored a pointer in $\ALG(Q^{*},[0,N])$.

\textbf{Dynamic algorithm.}
To implement the above offline algorithm as a dynamic algorithm, we use the
dynamic hierarchical grid decomposition due to Section~\ref{sec:framework}
together with its adjustment described in Section~\ref{sec:adjustment}. Now we
describe how to implement the preprocessing, update and query operations.

\emph{Preprocessing.} The preprocessing of the algorithm is exactly the same as
in Section~\ref{sec:adjustment}.

\emph{Update.} Suppose that in an update, an interval $C_{i}$ is inserted or
deleted.  First, we run the algorithm from Section~\ref{sec:Weighted-hypercubes}
for this update. This updates the set $P(Q)$ for each cell $Q$ such that
$C_{i}\in\C(Q)$ as described in Section~\ref{sec:adjustment}. Then, for each
cell $Q$ for which $P(Q)$ changes, we recompute the sets $Z(Q)$ and $\S(Q)$ as
described in the offline algorithm.  We sort the latter cells decreasing by
level and for each such cell $Q$, we recompute $\ALG(Q,S)$ for each $S\in\S(Q)$
as described for the offline algorithm.

\emph{Query.} We return the IS $\ALG(Q^*, [0,N])$ as described
above for the offline algorithm.

\subsection{Analysis}
We now proceed to the analysis of the previously presented dynamic algorithm.
Fix any cell $Q$ for the rest of this subsection and let $\OPT(Q)$ denote the
maximum weight IS for $Q$ which only consists of intervals from
$\C(Q)$.

Our proof proceeds as follows. We start by performing some technical
manipulations of $\OPT(Q)$. Next, we prove that for each segment there exists a
near-optimal structured solution (Lemma~\ref{lem:structured-solution}).  We use
this fact to show that our algorithm computes a near-optimal solution for each
segment of $Q$ (Lemma~\ref{lem:induction-cells}), which implies that our global
solution is $(1+\varepsilon)$-approximate
(Lemma~\ref{lem:weighted-intervals:almost-optimal}). We conclude by analyzing
the update time of the algorithm
(Lemma~\ref{lem:weighted-intervals:update-time}).

We start with several simplifications of $\OPT(Q)$. For each
$C_{i}\in\OPT(Q)\cap\C'(Q)$, we say that $C_{i}$ is \emph{light} if $w_{i}\le
(\epsilon^{2}/\log N) \cdot w(P(Q))$ and \emph{heavy} otherwise. We
delete all light intervals from $\OPT(Q)$. Let $\OPT'(Q)\subseteq\OPT(Q)$ denote
the remaining (heavy) intervals from $\OPT(Q)$.

Next, we show that $\OPT'(Q)$ has large weight and that for each subinterval
$I\subseteq Q$, the weight of the points in $I\cap P(Q)$ yields an upper bound
for the profit that $\OPT'(Q)$ obtains in $I$.
\begin{lem}
\label{lem:estimate-weight-OPT}
	We have that $w(\OPT'(Q))\ge(1-O(\epsilon))w(\OPT(Q))$ and there exists a
	constant $\kappa = \Theta(1)$ such that for each interval
	$I\subseteq Q$ the total weight of the intervals in
	$\{C_{i}\in\OPT'(Q)|C_{i}\subseteq I\}$ is bounded by $\kappa w(P(Q)\cap I)$.
\end{lem}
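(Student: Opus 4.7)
My plan is to piggyback on the analytic machinery already developed in Section~\ref{sec:Weighted-hypercubes}. Because $P(Q)$ is produced by the very same $O(1)$-approximation subroutine (specialized to $d=1$) invoked in step~(1) of our algorithm, the proofs of Lemmas~\ref{lem:light} and~\ref{lem:heavy} transfer essentially verbatim and yield the two assertions directly, so no genuinely new ideas are needed.

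For the first assertion I would mimic the proof of Lemma~\ref{lem:light}. In every cell $Q'\subseteq Q$ there are at most $1/\varepsilon$ independent intervals in $\C'(Q')$ by Lemma~\ref{lem:properties-grid}, and each light one has weight at most $(\varepsilon^{2}/\log N)\,w(P(Q'))$ by definition, so the contribution of $Q'$ to the light weight is at most $(\varepsilon/\log N)\,w(P(Q'))$. Summing over all $Q'\subseteq Q$ and exploiting that every endpoint deposited into the sets $P(\cdot)$ appears in at most $\log N$ ancestor cells gives
\[
	\sum_{Q'\subseteq Q} w(P(Q')) \le \log N \cdot w(P(Q)),
\]
so the total weight of light intervals in $\OPT(Q)$ is at most $\varepsilon\,w(P(Q))$. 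Invoking Lemma~\ref{lem:apx-ratio-weighted-hypercubes} with $d=1$ to convert $w(P(Q))$ into $O(w(\OPT(Q)))$ then gives $w(\OPT'(Q))\ge(1-O(\varepsilon))\,w(\OPT(Q))$.

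For the second assertion, every interval in $\OPT'(Q)$ is heavy in its containing level cell. The core inequality extracted from the proof of Lemma~\ref{lem:heavy} (specialized to $d=1$) states that for each heavy $C_{i}$ one has $w_{i}\le(2+O(\varepsilon))\,w(P(Q)\cap C_{i})$. Since the intervals of $\OPT'(Q)$ are open and pairwise disjoint, the sets $\{P(Q)\cap C_{i}\}$ are pairwise disjoint as well, and for any subinterval $I\subseteq Q$ summing over those $C_{i}\subseteq I$ yields
\[
	\sum_{\substack{C_{i}\in\OPT'(Q)\\ C_{i}\subseteq I}} w_{i} \;\le\; (2+O(\varepsilon))\sum_{\substack{C_{i}\in\OPT'(Q)\\ C_{i}\subseteq I}} w(P(Q)\cap C_{i}) \;\le\; (2+O(\varepsilon))\,w(P(Q)\cap I),
\]
which establishes the claim with $\kappa := 2+O(\varepsilon) = \Theta(1)$.

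The main obstacle is purely bookkeeping: I would have to double-check that the intermediate inequalities from the proofs of Lemmas~\ref{lem:light} and~\ref{lem:heavy} really hold uniformly for intervals coming from \emph{every} subcell $Q'\subseteq Q$, not only from $Q$ itself. This is indeed the case because those lemmas already process light and heavy objects cell-by-cell through their containing $\C'(Q')$, and vertices deposited into $P(Q')$ propagate upward into $P(Q)$ so the key estimates survive at the top level; everything else is a direct transcription of the hypercube arguments with $d=1$.
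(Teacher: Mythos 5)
Your proposal is correct and takes essentially the same route as the paper, whose proof of this lemma is a one-line pointer to the argumentation behind Lemma~\ref{lem:apx-ratio-weighted-hypercubes}. You have unpacked that pointer into the two relevant ingredients correctly: the per-interval bound $w_i\le(2+O(\varepsilon))w(P(Q)\cap C_i)$ extracted from the proof of Lemma~\ref{lem:heavy}, and the charging of light intervals to $P(\cdot)$ from Lemma~\ref{lem:light}, then summing the heavy bound over the pairwise disjoint intervals contained in $I$.

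One point worth flagging: you silently adopt the Section~\ref{sec:Weighted-hypercubes}-style light/heavy classification applied \emph{per containing cell} $Q'\subseteq Q$ (threshold $w(P(Q'))\varepsilon^2/\log N$), whereas the text just before Lemma~\ref{lem:estimate-weight-OPT} literally restricts the classification to $C_i\in\C'(Q)$ with threshold $w(P(Q))\varepsilon^2/\log N$. Under that literal reading, intervals of $\OPT'(Q)$ assigned to strict subcells are neither light nor heavy, so the assertion ``every interval in $\OPT'(Q)$ is heavy in its containing level cell'' would be unjustified and the per-interval bound would be unavailable for them. The per-cell reading you use is the one that actually makes the second assertion go through, and it is the reading consistent with the paper's appeal to Lemmas~\ref{lem:light} and~\ref{lem:heavy}; your instinct is therefore right, but you should state explicitly that you are deleting light intervals at \emph{all} levels $Q'\subseteq Q$, not just level $\ell(Q)$.
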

\begin{proof}
	This follows from the argumentation in the proof of
	Lemma~\ref{lem:apx-ratio-weighted-hypercubes}.
\end{proof}

Next, we group the intervals such that the weight of intervals in
different groups differs by at least a $1/\varepsilon$ factor while losing a
factor of at most $1+\epsilon$ in the resulting solution.  To do so, we employ a
shifting step.  Given an offset $a'\in\{0,\dots,1/\epsilon-1\}$, we delete the
intervals in $\OPT'(Q,a')\subseteq\OPT'(Q)$, where we define $\OPT'(Q,a')$ to be
all intervals $C_{i}\in\OPT'$ for which
$w_{i}\in\bigcup_{k\in\N}[\frac{1}{\epsilon}^{a'+k/\epsilon},\frac{1}{\epsilon}^{a'+k/\epsilon+1})$.
Then we group the intervals into supergroups according to ranges of weights of
the form
$[\frac{1}{\epsilon}^{a'+k/\epsilon+1},\frac{1}{\epsilon}^{a'+(k+1)/\epsilon})$
with $k\in\N$ such that if two intervals in $\OPT'(Q)\setminus\OPT'(Q,a')$ are
in different supergroups then their weights differ by a factor of at least
$1/\epsilon$.
\begin{lem}
	\label{lem:shifting}
	There is a value for $a'\in\{0,\dots,1/\epsilon-1\}$
	such that $w(\OPT'(Q,a'))\le\epsilon\cdot\OPT'(Q)$.
\end{lem}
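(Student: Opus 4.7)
The plan is to use the standard averaging (shifting) argument: I will show that each interval in $\OPT'(Q)$ is ``bad'' for exactly one choice of $a' \in \{0, \ldots, 1/\epsilon - 1\}$, so the average weight of $\OPT'(Q, a')$ over all choices of $a'$ is $\epsilon \cdot w(\OPT'(Q))$, and pigeonhole then yields the desired $a'$.

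More concretely, fix any $C_i \in \OPT'(Q)$ with weight $w_i$. By definition, there is a unique integer $j_i \in \N$ such that $w_i \in [(1/\epsilon)^{j_i}, (1/\epsilon)^{j_i + 1})$. The interval $C_i$ belongs to $\OPT'(Q, a')$ if and only if there exists $k \in \N$ with $j_i = a' + k/\epsilon$, i.e., if and only if $a' \equiv j_i \pmod{1/\epsilon}$. Since $a'$ ranges over a full set of residues modulo $1/\epsilon$, there is exactly one value of $a' \in \{0, \ldots, 1/\epsilon - 1\}$ for which $C_i \in \OPT'(Q, a')$.

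Summing over all intervals, this gives
\begin{equation*}
\sum_{a' = 0}^{1/\epsilon - 1} w(\OPT'(Q, a')) \;=\; w(\OPT'(Q)).
\end{equation*}
Hence the average value of $w(\OPT'(Q, a'))$ is $\epsilon \cdot w(\OPT'(Q))$, and by averaging there exists at least one $a' \in \{0, \ldots, 1/\epsilon - 1\}$ with $w(\OPT'(Q, a')) \le \epsilon \cdot w(\OPT'(Q))$, which is the claim.

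I do not expect any obstacles here: the argument is purely combinatorial and relies only on the fact that the forbidden weight ranges indexed by different values of $a'$ form a partition of $\bigcup_{j \in \N}[(1/\epsilon)^j, (1/\epsilon)^{j+1})$, which in turn covers all possible weights of intervals in $\OPT'(Q)$ (since $w_i \ge 1$ by the global assumption). The only mild subtlety is to note that $1/\epsilon$ is assumed to be a (power of 2) integer so that the residue-class reasoning is well-defined; this was fixed earlier in Section~\ref{sec:framework}.
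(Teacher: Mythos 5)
Your proof is correct and matches the paper's argument: both observe that the weight ranges $\bigcup_{k\in\N}[(1/\epsilon)^{a'+k/\epsilon},(1/\epsilon)^{a'+k/\epsilon+1})$ for different $a'\in\{0,\dots,1/\epsilon-1\}$ partition $[1,\infty)$, so each interval of $\OPT'(Q)$ is excluded for exactly one shift, and averaging over the $1/\epsilon$ shifts gives the claim. You spell out the residue-class reasoning a bit more explicitly than the paper, but it is the same proof.
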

\begin{proof}
	Consider $a', a'' \in \{0,\dots,1/\epsilon-1\}$ with $a' \neq a''$. Now observe
	that in this case the sets
	$\bigcup_{k\in\N}[\frac{1}{\epsilon}^{a'+k/\epsilon},\frac{1}{\epsilon}^{a'+k/\epsilon+1})$
	and
	$\bigcup_{k\in\N}[\frac{1}{\epsilon}^{a''+k/\epsilon},\frac{1}{\epsilon}^{a''+k/\epsilon+1})$
	are disjoint.
	Since the weight $w_i$ of each $C_i \in \OPT'$ belongs to exactly one such set and
	there are $1/\varepsilon$ different values for $a'$, there must exist a
	shift $a'$ with the property claimed in the lemma.
\end{proof}

For the value $a'$ due to Lemma~\ref{lem:shifting} we define
$\OPT^*(Q):=\OPT'(Q)\setminus\OPT'(Q,a')$.  For each segment $S\in\S(Q)$, let
$\OPT^*(Q,S)$ denote all intervals from $\OPT^*(Q)$ that are contained in $S$. 

\textbf{A structured solution.} 
Now our goal is to prove that for each segment $S\in\S(Q)$, the set
$\ALG(Q,S)\subseteq\C(Q)$ computed by our algorithm indeed approximates
$\OPT^*(Q,S)$ almost optimally. To this end, fix any segment $S\in\S(Q)$.  We
argue that there is a structured near-optimal solution $\OPT''(Q,S)$ for $S$
such that $S$ is divided into subsegments $\T$ that are aligned with $\Zfine$
and where the solution of each subsegment $T\in\T$ either contains only
$1/\epsilon$ intervals in total or only intervals from $\C(Q_{1})$ or from
$\C(Q_{2})$. We formalize this notion of a structured solution in the following
definition. 

\begin{definition}
\label{def:structured-solution}
	A solution $\OPT''(Q,S)\subseteq\OPT^*(Q,S)$ is \emph{structured} if there
	exists a set of disjoint subsegments $\T=\{T_{1},\dots,T_{|\T|}\}$ aligned
	with $\Zfine$ such that each interval $C_{i}\in\OPT''(Q,S)$ is contained in
	some subsegment $T_{j}\in\T$ and for each segment $T_{j}\in\T$ it holds that
	$T_{j}\subseteq S$ and either
	\begin{enumerate}
		\item $T_j$ contains at most $1/\epsilon$ intervals from $\OPT''(Q,S)$, or
		\item there exists $r\in\{1,2\}$ such that $T_{j}\subseteq Q_{r}$ and if an interval
			$C_{i}\in\OPT''(Q,S)$ is contained in $T_{j}$ then $C_{i}\in\C(Q_{r})$.
	\end{enumerate}
\end{definition}
Now consider a structured solution $\OPT''(Q,S)$ and a set of disjoint
subsegments $\T$ as per Definition~\ref{def:structured-solution}.
For $T_j \in \T$, we define $\OPT_{T_{j}}''(Q,S) := \OPT''(Q,S) \cap T_{j}$.
Furthermore, we let $\T_{\sparse}\subseteq\T$ and $\T_{\dense}\subseteq\T$
denote the sets of subsegments in $\T$ for which the first and the second case
of the definition applies, respectively. Notice that since $\T$ is aligned with
$\Zfine$, we have that $\T \subseteq \Tfine$.
	
The following lemma shows that there exists a structured solution which is
within a $(1 + O(\varepsilon))$-factor of $\OPT^*(Q,S)$; we will prove the lemma
in Section~\ref{sec:structured-solution}.
\begin{lem}
\label{lem:structured-solution}
	There exists a structured solution $\OPT''(Q,S)\subseteq\OPT^*(Q,S)$
	and a set of disjoint subsegments $\T=\{T_{1},\dots,T_{|\T|}\}$
	such that for a fixed $\kappa' = \Theta(1)$ that is independent of $Q$ and
	$S$,
	\begin{align*}
		(1+\kappa'\cdot \epsilon) \sum_{T_{j}\in\T_{\sparse}}w(\OPT_{T_{j}}''(Q,S))+\sum_{T_{j}\in\T_{\dense}}w(\OPT_{T_{j}}''(Q,S))\ge w(\OPT^*(Q,S)).
	\end{align*}
\end{lem}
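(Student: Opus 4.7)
I will construct $\OPT''(Q,S)$ and $\T$ explicitly. First, I classify each interval of $\OPT^*(Q,S)$ as \emph{long} if it lies in $\C'(Q)$ and \emph{short} if it lies in $\C(Q_1)\cup\C(Q_2)$. By Lemma~\ref{lem:properties-grid} there are at most $m\le 1/\varepsilon$ long intervals, enumerated as $L_1,\dots,L_m$ in order of position. For each $L_j=(a_j,b_j)$, let $a_j^-$ be the largest element of $\Zfine$ at most $a_j$ and $b_j^+$ the smallest element of $\Zfine$ at least $b_j$, and set $T_j^{\mathrm{long}}:=(a_j^-,b_j^+)$. Because $Z(Q_r)$ contains all multiples of $\varepsilon|Q_r|=\varepsilon|Q|/2$ falling in $Q_r$ by Lemma~\ref{lem:weighted-intervals-grid}, and $|L_j|<2\varepsilon|Q|$, each $T_j^{\mathrm{long}}$ is $\Zfine$-aligned, contains $L_j$, and has length $O(\varepsilon|Q|)$. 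Merging any overlapping $T_j^{\mathrm{long}}$'s produces disjoint $\Zfine$-aligned subsegments $R_1,\dots,R_{m'}$ ($m'\le m$) which I mark sparse. The complement $S\setminus\bigcup_i R_i$ is a disjoint union of $\Zfine$-aligned intervals; I further split any component crossing the midpoint $M$ of $Q$ (which lies in $\Zfine$ as a common endpoint of $Z(Q_1)$ and $Z(Q_2)$) at $M$, producing pieces $G_1,\dots,G_k$ each contained entirely in one child $Q_r$, marked dense. I set $\T:=\{R_i\}\cup\{G_k\}$ and let $\OPT''(Q,S)$ consist of the long intervals contained in the $R_i$'s together with every interval of $\OPT^*(Q,S)$ contained in some $G_k$.

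The family $\T$ is disjoint and $\Zfine$-aligned by construction. Moreover $\sum_i|\OPT_{R_i}''|=m\le 1/\varepsilon$, so each sparse subsegment has at most $1/\varepsilon$ intervals; and each dense $G_k\subseteq Q_r$ contains only intervals in $\C(Q_r)$, because all long intervals are trapped inside some $R_i$ and every short interval contained in $G_k\subseteq Q_r$ belongs to $\C(Q_r)$. Thus Definition~\ref{def:structured-solution} is satisfied. The intervals of $\OPT^*(Q,S)$ not in $\OPT''(Q,S)$ fall into two categories: (a) short intervals contained in some $R_i$, and (b) intervals straddling a boundary of some $R_i$. Note that short intervals cannot straddle $M$ (since they sit in a single child cell) and long intervals cannot straddle any $R_i$-boundary, so straddling intervals occur only at $R_i$-boundaries and are therefore adjacent to sparse subsegments. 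Since $\OPT^*$ is an IS and there are $O(1/\varepsilon)$ $R_i$-boundaries, at most $O(1/\varepsilon)$ intervals straddle boundaries, and combining Lemmas~\ref{lem:estimate-weight-OPT},~\ref{lem:weighted-intervals-grid} with the supergroup separation of Lemma~\ref{lem:shifting} bounds the type-(b) loss by $O(\varepsilon)\sum_i w(\OPT_{R_i}'')$. For (a), short intervals of $\OPT^*$ contained in some $R_i$ can only lie in one of the rounding regions $(a_j^-,a_j)$ or $(b_j,b_j^+)$ (no short interval of $\OPT^*$ can be contained in the long $L_j$ itself, since $\OPT^*$ is independent); each such region sits in a single $\Zfine$-slice whose $P(Q_r)$-weight is at most $\varepsilon^{3+1/\varepsilon}w(P(Q_r))/\log N$. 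Combining this with Lemma~\ref{lem:estimate-weight-OPT} and the heaviness of long intervals ($w_{L_j}\ge(\varepsilon^2/\log N)w(P(Q))$), the type-(a) loss is at most $O(\varepsilon^{1+1/\varepsilon})$ times the kept long-interval weight, which is in particular $O(\varepsilon)\sum_i w(\OPT_{R_i}'')$. Since all dropped weight is attributable to sparse subsegments, the inequality in the lemma holds with some $\kappa'=\Theta(1)$.

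The most delicate step will be the type-(a) bound, because the slice-weight estimate of Lemma~\ref{lem:weighted-intervals-grid} is stated only for $P(Q_r)$, whereas Lemma~\ref{lem:estimate-weight-OPT} controls $w(P(Q)\cap\text{region})$. The set $P(Q)$ additionally contains endpoints of intervals added to $\bar{\C}(Q)\subseteq\C'(Q)$ by the $O(1)$-approximation subroutine while processing $Q$. To handle this, I will decompose $P(Q)=P(Q_1)\cup P(Q_2)\cup\Delta(Q)$ and bound each part separately: the $P(Q_r)$-parts directly via Lemma~\ref{lem:weighted-intervals-grid}, and the $\Delta(Q)$-contribution, stemming from the at most $O((1/\varepsilon)\log W)$ intervals in $\tilde{\C}(Q)$ (each of size $\ge\varepsilon|Q|$ and hence contributing at most one endpoint to any given $\Zfine$-slice), by charging it to the kept long intervals $L_j$ in the same $R_i$ using the factor-$1/\varepsilon$ weight gap between supergroups from Lemma~\ref{lem:shifting}.
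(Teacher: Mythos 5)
Your construction takes a genuinely different route from the paper's, and there is a gap in it that I do not think can be repaired without essentially changing the approach to match the paper's. The paper classifies intervals of $\OPT^*(Q,S)$ by \emph{weight}: an interval is \emph{important} if $w_i \ge \tilde{w}$, and the paper explicitly notes that important intervals may come from $\C'(Q)$ or from $\C(Q_1)\cup\C(Q_2)$. You instead classify by \emph{location}: \emph{long} means in $\C'(Q)$, \emph{short} means in $\C(Q_1)\cup\C(Q_2)$. These two classifications are not the same, and the mismatch is where your argument breaks.

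The problem is your type-(b) loss. A heavy interval $C_i\in\C(Q_1)$ (hence ``short'' in your sense) can straddle an $R_i$-boundary $a_j^-$ while having weight far larger than every long interval in $S$. Such a $C_i$ is dropped by your construction (it is neither in an $R_i$ nor contained in a $G_k$), yet the only weight you retain in $\T_{\sparse}$ is the weight of the long intervals, which can be as small as roughly $\tilde{w}$. A concrete bad instance has one long interval $L_1$ of weight $\approx\tilde{w}$ and one short interval $C_i$ of weight $\Theta(w(P(Q)))$ abutting $a_1^-$; you keep $\approx\tilde w$ and lose $\Theta(w(P(Q)))$, so the claimed inequality $(1+\kappa'\epsilon)\sum_{\T_{\sparse}}w(\OPT''_{T_j}) + \sum_{\T_{\dense}}w(\OPT''_{T_j}) \ge w(\OPT^*(Q,S))$ cannot hold for $\kappa'=\Theta(1)$. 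Invoking Lemma~\ref{lem:estimate-weight-OPT} does not help here because that lemma bounds only the weight of intervals \emph{contained in} a region, while a straddling interval is not contained in any $\Zfine$-slice; the auxiliary-grid weight bound (Lemma~\ref{lem:weighted-intervals-grid}) likewise bounds only per-slice $P(Q_r)$-weight, and the supergroup separation of Lemma~\ref{lem:shifting} compares weight classes of kept intervals to each other, not of dropped intervals to kept ones.

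The paper sidesteps this by placing \emph{all} important intervals (including heavy intervals from $\C(Q_1)\cup\C(Q_2)$) into the sparse subsegments and only ever deleting (i) light intervals from $\C'(Q)$ (done before the lemma, at $O(\epsilon)$ loss), (ii) non-important intervals lying in slices that an important interval partially overlaps, whose weight is $O(\epsilon w_i)$ chargeable to the important $C_i$ by the slice-weight bound and Lemma~\ref{lem:estimate-weight-OPT}, and (iii) every $1/\epsilon$'th important interval via the shifting step. Nothing heavy is ever lost to a boundary-straddling effect, so the charge is always to a kept important interval in $\T_{\sparse}$. If you want to keep the spirit of your decomposition, you would need to define your sparse regions around \emph{every} important interval (not just those in $\C'(Q)$), at which point you are essentially reproducing the paper's construction. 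Your observations about gaps between merged $T_j^{\mathrm{long}}$'s lying in a single $\Zfine$-slice, and about reconciling $P(Q)$ with $P(Q_1)\cup P(Q_2)$ via the $\Delta(Q)$ decomposition, are reasonable, but they do not cure the main gap above.
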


Given Lemma~\ref{lem:structured-solution}, observe $\T\subseteq\Tfine$
since $\T$ is aligned with $\Zfine$ and since $\Tfine$ contains all intervals
which are aligned with $\Zfine$. Further observe that for a subsegment
$T\in\T_\sparse$, the solution $\ALG_\sparse(Q,T)$ provides a
$(1+\varepsilon)$-approximate solution for $T$. Also, if $T\in\T_\dense$, then we will argue by induction
that $\ALG_\dense(Q,T) = \ALG(Q_r,T)$ is a
$(1+\varepsilon)$-approximate solution for $\OPT_{T_{j}}''(Q,S)$.

We conclude the section with thre lemmas which after combining with
Lemma~\ref{lem:framework-dynamic} complete the proof of
Theorem~\ref{thm:weighted-intervals-1}.  The first two lemmas establish that
$\ALG(Q,S)$ is a $(1+O(\varepsilon))$-approximation of $\OPT^*(Q,S)$ and the
last lemma bounds the update time of the algorithm.

\begin{lem}
\label{lem:induction-cells}
	Let $Q' \subseteq Q$ be a cell and let $S'\in\S(Q')$ be a segment. Then
	\begin{align*}
		w(\ALG(Q',S'))\ge (1+\varepsilon)^{-1} (1+\kappa' \varepsilon)^{-1} w(\OPT^*(Q',S')).
	\end{align*}
\end{lem}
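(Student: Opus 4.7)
The plan is to prove Lemma~\ref{lem:induction-cells} by induction on the level $\ell(Q')$, starting from the deepest level $\lg N$ and working toward level $0$. For the base case $\ell(Q') = \lg N$, the cell has no children, so $\ALG_{\dense}(Q',T) = \emptyset$ for every subsegment $T \in \Tfine$ and the algorithm returns $\ALG_{\sparse}(Q',T)$. By Lemma~\ref{lem:properties-grid}, any IS drawn from $\C(Q') = \C'(Q')$ contains at most $1/\varepsilon$ intervals, hence $|\OPT^{*}(Q',S')| \le 1/\varepsilon$, and Lemma~\ref{lem:dp-sparse} applied with $T = S'$ yields the stronger bound $w(\ALG(Q',S')) \ge (1+\varepsilon)^{-1} w(\OPT^{*}(Q',S'))$.

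For the inductive step, assume the lemma for every child of $Q'$ and fix a segment $S' \in \S(Q')$. Invoke Lemma~\ref{lem:structured-solution} to obtain a structured solution $\OPT''(Q',S') \subseteq \OPT^{*}(Q',S')$ together with its partition $\T = \T_{\sparse} \cup \T_{\dense}$; since $\T$ is aligned with $\Zfine$ we have $\T \subseteq \Tfine$, so the algorithm has computed $\ALG(Q',T)$ for every $T \in \T$, and the interval-graph subroutine that composes $\ALG(Q',S')$ from an optimal disjoint selection of subsegment solutions gives $w(\ALG(Q',S')) \ge \sum_{T \in \T} w(\ALG(Q',T))$. For each $T \in \T_{\sparse}$, the set $\OPT''_{T}(Q',S')$ is an IS of at most $1/\varepsilon$ intervals contained in $T$, so Lemma~\ref{lem:dp-sparse} gives $w(\ALG_{\sparse}(Q',T)) \ge (1+\varepsilon)^{-1} w(\OPT''_{T}(Q',S'))$. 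For each $T \in \T_{\dense}$, Definition~\ref{def:structured-solution} provides $r \in \{1,2\}$ with $T \subseteq Q_r'$ and $\OPT''_{T}(Q',S') \subseteq \C(Q_r')$; since $\OPT''_{T}(Q',S')$ is an IS of $\C(Q_r')$ inside $T$ its weight is at most $w(\OPT^{*}(Q_r',T))$ up to the light-interval and shifting slack absorbed by the definition of $\OPT^{*}$, and combining with the inductive hypothesis applied to $(Q_r',T)$ yields $w(\ALG_{\dense}(Q',T)) = w(\ALG(Q_r',T)) \ge (1+\varepsilon)^{-1} (1+\kappa'\varepsilon)^{-1} w(\OPT''_{T}(Q',S'))$.

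The two bounds are combined using the algebraic identity $(1+\varepsilon)^{-1} = (1+\varepsilon)^{-1}(1+\kappa'\varepsilon)^{-1}(1+\kappa'\varepsilon)$ to pull out the uniform approximation factor, which gives
\begin{align*}
w(\ALG(Q',S'))
 &\ge (1+\varepsilon)^{-1}(1+\kappa'\varepsilon)^{-1}\Bigl[(1+\kappa'\varepsilon)\!\!\sum_{T \in \T_{\sparse}}\!\! w(\OPT''_{T}(Q',S')) + \!\!\sum_{T \in \T_{\dense}}\!\! w(\OPT''_{T}(Q',S'))\Bigr] \\
 &\ge (1+\varepsilon)^{-1}(1+\kappa'\varepsilon)^{-1} w(\OPT^{*}(Q',S')),
\end{align*}
where the last inequality is Lemma~\ref{lem:structured-solution}. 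The crucial feature is that the extra $(1+\kappa'\varepsilon)$ coefficient granted to the sparse subsegments in Lemma~\ref{lem:structured-solution} is exactly what absorbs the $(1+\varepsilon)^{-1}$ loss from Lemma~\ref{lem:dp-sparse}, so the approximation ratio does not compound across the $\lg N$ levels. The main obstacle is the comparison $w(\OPT^{*}(Q_r',T)) \ge w(\OPT''_{T}(Q',S'))$ used in the dense case: $\OPT^{*}(Q_r',T)$ is a restriction of a globally-chosen IS for the child cell, while $\OPT''_{T}(Q',S')$ is a restriction of the analogous set for the parent, and they need not coincide; ensuring that the first dominates the second requires that the shift $a'$ of Lemma~\ref{lem:shifting} and the heavy/light threshold be chosen consistently across cells so that any local mismatch between the two restrictions is controlled by Lemma~\ref{lem:estimate-weight-OPT} and folded into the existing $(1+\kappa'\varepsilon)$ slack.
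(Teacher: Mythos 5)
Your proposal follows essentially the same inductive structure as the paper's proof: induction on $\ell(Q')$ from the leaves upward, the base case reducing to $\ALG_\sparse$ via Lemma~\ref{lem:dp-sparse}, and the inductive step combining Lemma~\ref{lem:structured-solution}, Lemma~\ref{lem:dp-sparse} on the sparse subsegments, and the inductive hypothesis on the dense subsegments, with the same algebraic rearrangement that uses the extra $(1+\kappa'\varepsilon)$ factor on $\T_\sparse$ to cancel the $(1+\varepsilon)^{-1}$ from $\ALG_\sparse$ so the loss does not compound over levels. Your base case is slightly more direct (observe that at the leaf level $\Tfine=\{S'\}$ and $\lvert\OPT^*(Q',S')\rvert\le1/\varepsilon$, then apply Lemma~\ref{lem:dp-sparse} to $S'$ itself, giving the even stronger $(1+\varepsilon)^{-1}$ bound), whereas the paper routes the base case through the structured solution too; either is fine. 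The one place you legitimately raise a concern that the paper glosses over is the dense-case inequality $w(\OPT''_{T_j}(Q',S'))\le w(\OPT^*(Q_r',T_j))$: the paper justifies this by asserting a set inclusion $\OPT''_{T_j}(Q',S')\subseteq\OPT^*(Q_r',T_j)$, which cannot hold as stated since the two sides are restrictions of independently chosen maximum-weight ISs for the parent and for the child (with potentially different shift offsets $a'$ and light-interval thresholds). You are right that this step needs a more careful argument — either redefining $\OPT^*(Q_r',T_j)$ as a local optimum over intervals of $\C(Q_r')$ inside $T_j$, or showing the mismatch can be charged to the structured-solution slack — but since both your proof and the paper's rest on the same under-justified comparison, this is a shared issue rather than a defect specific to your approach.
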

\begin{proof}
	We prove by induction over the level of $Q'$ in decreasing levels. In the
	base case, we obtain the claim based on Lemma~\ref{lem:dp-sparse}. In the
	induction step, we use Lemma~\ref{lem:structured-solution} to obtain the
	desired inequality.

	First, suppose that $Q'$ has level $\ell(Q') = \lg N$. Let $S'\in\S(Q')$ and
	let $\T$ be as in Definition~\ref{def:structured-solution}.  First, since
	$\T$ is aligned with $\Zfine$, we have that for each $T_j \in \T$ it holds
	that $T_j\in\Tfine$. Thus, the static algorithm which we run to compute
	$\ALG(Q',S')$ could pick the solutions for the intervals $T_j\in\T$ and thus
	we get $w(\ALG(Q',S')) \geq \sum_{T_j\in\T} w(\ALG(Q',T_j))$.  Second,
	observe that since $Q'$ has level $\lg N$, we must have that
	$\T_{\dense} = \emptyset$ and $\T = \T_{\sparse}$.
	Observe that for each subsegment $T_j\in\T_{\sparse}$ the solution
	$\ALG_\sparse(Q',T_j)$ is a $(1+\varepsilon)$-approximation of
	$\OPT''_{T_j}(Q',S)$ by Lemma~\ref{lem:dp-sparse}.  This gives
	$\sum_{T_j\in\T} w(\ALG(Q',T_j)) \geq \sum_{T_j\in\T}
	(1+\varepsilon)^{-1}w(\OPT_{T_j}''(Q',S'))$.
	Now by using the inequality from Lemma~\ref{lem:structured-solution} with
	$\T_{\dense} = \emptyset$ and $\T = \T_{\sparse}$, we obtain that
	$\sum_{T_j\in\T} (1+\varepsilon)^{-1}w(\OPT_{T_j}''(Q',S'))
		\geq (1+\varepsilon)^{-1} (1+ \kappa' \cdot \varepsilon)^{-1}
		w(\OPT^*(Q',S'))$.
	Summarizing all of the previous steps, we have that
	\begin{align*}
		w(\ALG(Q',S'))
		&\geq \sum_{T_j\in\T} w(\ALG(Q',T_j)) \\
		&\geq \sum_{T_j\in\T} (1+\varepsilon)^{-1}w(\OPT_{T_j}''(Q',S')) \\
		& \geq (1+\varepsilon)^{-1} (1+ \kappa' \cdot \varepsilon)^{-1} w(\OPT^*(Q',S')).
	\end{align*}

	Now suppose that $Q'$ has level $\ell(Q') < \lg N$ and has children
	$Q'_1,Q'_2$. We can assume by induction hypothesis that for $r=1,2$, we have
	that $w(\ALG(Q'_r,S_r')) \geq
	(1+\epsilon)^{-1}(1+\kappa'\cdot\epsilon)^{-1}w(\OPT^*(Q'_r,S_r'))$
	for each $S_r' \in \S(Q'_r)$. 
	We again use Lemma~\ref{lem:structured-solution} to obtain the sets
	$\T$, $\T_\dense$ and $\T_\sparse$. Now for each $T_j \in \T_\sparse$ we
	have that $\ALG(Q',T_j)\ge (1+\epsilon)^{-1}w(\OPT''_{T_j}(Q',S'))$ by
	Lemma~\ref{lem:dp-sparse}. For each $r\in \{1,2\}$ and each
	$T_j\in\T_\dense$ with $T_j \subseteq Q'_r$ we have that 
	$\ALG(Q'_r,T_j)\ge (1+\epsilon)^{-1}(1+\kappa'\cdot \varepsilon)^{-1}
	\OPT^*(Q'_r,T_j)$ by induction hypothesis. We conclude that 
	\begin{align*}
		w(\ALG(Q',S'))
		\geq& \sum_{T_j\in\T_\sparse} w(\ALG(Q',T_j)) + \sum_{T_j\in\T_\dense} w(\ALG(Q',T_j)) \\
		\geq& \sum_{T_j\in\T_\sparse} (1+\epsilon)^{-1}w(\OPT''_{T_j}(Q',S')) \\
		&+\sum_{r=1}^2\sum_{T_j\in\T_\dense:T_j\in Q'_r}
		(1+\epsilon)^{-1}(1+\kappa'\cdot \varepsilon)^{-1} w(\OPT^*(Q'_r,T_j)) \\
		\geq& \sum_{T_j\in\T_\sparse} (1+\epsilon)^{-1}w(\OPT''_{T_j}(Q',S')) \\
		&+ \sum_{T_j\in\T_\dense} (1+\epsilon)^{-1}(1+\kappa'\cdot \varepsilon)^{-1} w(\OPT''_{T_j}(Q',S')) \\
		=& (1+\epsilon)^{-1}(1+\kappa'\cdot \varepsilon)^{-1}
			\left[ \sum_{T_j\in\T_\sparse} (1+\kappa' \cdot \epsilon) w(\OPT''_{T_j}(Q',S'))\right.  \\
			&\hspace{4cm} + \left.\sum_{T_j\in\T_\dense}  w(\OPT''_{T_j}(Q',S')) \right] \\
		\geq& (1+\epsilon)^{-1}(1+\kappa'\cdot \varepsilon)^{-1}w(\OPT^*(Q',S')).
	\end{align*}
	In the above computation, the first inequality holds since
	$\T\subseteq\Tfine$ (as $\T$ is aligned with $\Zfine$) and the exact
	algorithm can pick, for example, all intervals in subsegments of $\T$ as its
	solution. 
	The second inequality follows from Lemma~\ref{lem:dp-sparse} for the
	subsegments in $\T_{\sparse}$ and from the induction hypothesis for the
	subsegments in $\T_{\dense}$.
	The third inequality holds since by definition of $\T$ (see
	Definition~\ref{def:structured-solution}), we have that
	$\OPT''_{T_j}(Q',S') \subseteq \OPT^*(Q_r',T_j)$ and that
	for all $C_i \in \OPT''_{T_j}(Q',S')$ it holds that
	$C_i \in \C(Q'_r)$; this implies that indeed
	$w(\OPT''_{T_j}(Q',S')) \leq (\OPT^*(Q_r',T_j))$.
	The equality is simple algebra.
	The last inequality follows from Lemma~\ref{lem:structured-solution}.
\end{proof}

\begin{lem}
\label{lem:weighted-intervals:almost-optimal}
	We have that $w(\SOL) \geq (1-O(\varepsilon))w(\OPT)$.
\end{lem}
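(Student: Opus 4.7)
The plan is to apply Lemma~\ref{lem:induction-cells} at the root cell $Q^{*}$ and then reverse the simplifications by which $\OPT^{*}(Q^{*},[0,N])$ was obtained from $\OPT$. First, observe that $[0,N]\in Z(Q^{*})$ and $[0,N]\in\S(Q^{*})$ by construction of the auxiliary grid (the endpoints of $Q^{*}$ are added explicitly). Hence Lemma~\ref{lem:induction-cells}, applied with $Q'=Q^{*}$ and $S'=[0,N]$, yields
\[
w(\SOL)=w(\ALG(Q^{*},[0,N]))\ \ge\ (1+\epsilon)^{-1}(1+\kappa'\epsilon)^{-1}\,w(\OPT^{*}(Q^{*},[0,N])).
\]

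Next I would unwind the definition of $\OPT^{*}(Q^{*},[0,N])$. Since $\C(Q^{*})=\C$, every input interval is contained in $Q^{*}$, so $\OPT^{*}(Q^{*},[0,N])=\OPT^{*}(Q^{*})$. Recall that $\OPT^{*}(Q^{*})=\OPT'(Q^{*})\setminus\OPT'(Q^{*},a')$ for the shift $a'\in\{0,\dots,1/\epsilon-1\}$ chosen in Lemma~\ref{lem:shifting}, and $\OPT'(Q^{*})$ was obtained from $\OPT(Q^{*})$ by discarding the light intervals. By Lemma~\ref{lem:estimate-weight-OPT} we have $w(\OPT'(Q^{*}))\ge(1-O(\epsilon))\,w(\OPT(Q^{*}))$, and by Lemma~\ref{lem:shifting} $w(\OPT'(Q^{*},a'))\le\epsilon\,w(\OPT'(Q^{*}))$, so
\[
w(\OPT^{*}(Q^{*},[0,N]))\ \ge\ (1-\epsilon)\,w(\OPT'(Q^{*}))\ \ge\ (1-O(\epsilon))\,w(\OPT(Q^{*})).
\]

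Finally, I would account for the grid offset. The algorithm is executed for every offset $a\in\off(\epsilon)$ (or for $O(\log N)$ random offsets in the randomized variant) and the best solution is returned, so Lemma~\ref{lem:offset} guarantees that there is some offset for which the grid-aligned optimum $\OPT'$ satisfies $w(\OPT')\ge(1-O(\epsilon))\,w(\OPT)$; for that offset $w(\OPT(Q^{*}))\ge w(\OPT')\ge(1-O(\epsilon))\,w(\OPT)$. Chaining the three inequalities and choosing the constant inside $O(\cdot)$ small enough (using $(1+\epsilon)^{-1}(1+\kappa'\epsilon)^{-1}=1-O(\epsilon)$ for the fixed constant $\kappa'$ of Lemma~\ref{lem:structured-solution}) gives
\[
w(\SOL)\ \ge\ (1-O(\epsilon))^{3}\,w(\OPT)\ =\ (1-O(\epsilon))\,w(\OPT),
\]
which is the claimed bound. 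The only nontrivial ingredient is the inductive guarantee of Lemma~\ref{lem:induction-cells}; everything else is a routine composition of the preceding lemmas, so I expect no further obstacles.
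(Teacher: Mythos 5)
Your proof is correct and follows the same route as the paper: apply Lemma~\ref{lem:induction-cells} at the root cell $Q^{*}$ with segment $[0,N]$, then relate $\OPT^{*}(Q^{*},[0,N])$ back to $\OPT$. The paper's own proof is considerably more terse (it compresses the unwinding of $\OPT^{*}$ into a single unexplained inequality), so your explicit appeal to Lemmas~\ref{lem:estimate-weight-OPT}, \ref{lem:shifting}, and~\ref{lem:offset} is a welcome clarification rather than a deviation.
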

\begin{proof}
	The claim follows since
	$\SOL = \ALG(Q^*,[0,N])$ and Lemma~\ref{lem:induction-cells} applied with $Q'=Q^*$ and
	$S'=[0,N]$ yields that
	$w(\SOL) = w(\ALG(Q^*,[0,N])) \geq (1-O(\epsilon))w(\OPT^*(Q^*,[0,N])) \geq (1-O(\epsilon))w(\OPT(Q^*,[0,N]))$.
\end{proof}

\begin{lem}
\label{lem:weighted-intervals:update-time}
	The algorithm has worst-case update time 
	$\updateweightedintervals$.
\end{lem}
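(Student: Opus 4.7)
The plan is to account separately for the cost of each step of the update procedure and sum them up. An update on an inserted or deleted interval $C_i$ falls naturally into three phases: (i)~the invocation of the $(4+O(\varepsilon))2^d$-approximation subroutine of Section~\ref{sec:Weighted-hypercubes} with $d=1$, which maintains the point sets $P(Q)$ together with the underlying hierarchical-grid data structures $\D(Q),\D'(Q),\D_k(Q),\D'_k(Q)$ and the collection $\G'$ of non-empty cells; (ii)~the recomputation of the auxiliary grid $Z(Q)$ and the segment collection $\S(Q)$ for each of the at most $\lg N$ cells $Q$ containing $C_i$; (iii)~the recomputation of $\ALG(Q,S)$ for every $S\in\S(Q)$ in each such cell $Q$, processed in order of decreasing level so that the solutions for children $Q_1,Q_2$ that $\ALG_\dense(Q,T)$ relies on are already available.

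For phase~(i), I would directly cite the $d=1$ specialization of Theorem~\ref{thm:weighted-hypercubes}, which supplies a worst-case bound of $(1/\varepsilon)^{O(1/\varepsilon)}\lg W\lg n\lg^3 N$ and simultaneously subsumes the cost of the hierarchical-grid updates. For phase~(ii), Lemma~\ref{lem:weighted-intervals-grid} yields $|Z(Q)|=(1/\varepsilon)^{O(1/\varepsilon)}\lg N$ and a construction time of $(1/\varepsilon)^{O(1/\varepsilon)}\lg N\lg^2 n$ per cell; since $|\S(Q)|=O(|Z(Q)|^2)$, enumerating $\S(Q)$ is absorbed into the same bound, for a total phase-(ii) cost of $(1/\varepsilon)^{O(1/\varepsilon)}\lg^3 N\lg^2 n$ across all affected cells.

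The main work is phase~(iii). For a single segment $S\in\S(Q)$, I would observe that $\Zfine=Z(Q_1)\cup Z(Q_2)\cup\{s_1,s_2\}$ again has size $(1/\varepsilon)^{O(1/\varepsilon)}\lg N$, so $|\Tfine|=O(|\Zfine|^2)=(1/\varepsilon)^{O(1/\varepsilon)}\lg^2 N$. For each subsegment $T\in\Tfine$, Lemma~\ref{lem:dp-sparse} computes $\ALG_\sparse(Q,T)$ in $(1/\varepsilon)^{O(1/\varepsilon)}\lg^2 n$, $\ALG_\dense(Q,T)$ is a constant-time pointer lookup into the children's stored solutions, and comparing their weights to fix $\ALG(Q,T)$ is free. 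The super-instance of $|\Tfine|$ weighted intervals $(C_T,w(\ALG(Q,T)))$ is then solved optimally in $O(|\Tfine|\lg|\Tfine|)$ by a right-endpoint sweep (an alternative to the generic $O(n'+m')$ algorithm of~\cite{frank75some} when $m'=\Theta(n'^2)$), so the per-segment cost is $(1/\varepsilon)^{O(1/\varepsilon)}\lg^2 N\lg^2 n$. Multiplying by $|\S(Q)|=(1/\varepsilon)^{O(1/\varepsilon)}\lg^2 N$ and by the $\lg N$ affected cells gives $(1/\varepsilon)^{O(1/\varepsilon)}\lg^5 N\lg^2 n$, which, combined with phase~(i) and using $\lg W\geq 1$, is dominated by $\updateweightedintervals$.

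The only real obstacle I anticipate is bookkeeping: to invoke $\ALG_\dense(Q,T)=\ALG(Q_r,T)$ in phase~(iii), the segment decomposition of the child $Q_r$ must have been recomputed \emph{before} $Q$ is processed and must agree with $\Zfine$. The decreasing-level processing order together with the containment $\Zfine\supseteq Z(Q_r)$ by construction handles this cleanly, but making the dependency between children and parent explicit is the only place where the summation above is not completely routine.
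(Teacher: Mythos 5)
Your proof mirrors the paper's argument almost exactly: the same three-phase decomposition (invoke the weighted-hypercubes subroutine for $d=1$, rebuild $Z(Q)$ and $\S(Q)$ for the $\lg N$ affected cells, recompute $\ALG(Q,S)$ via Lemma~\ref{lem:dp-sparse} for all subsegments), the same per-step bounds, and the same multiplication over $|\S(Q)|$, $|\Tfine|$, and $\lg N$ levels, giving $(1/\varepsilon)^{O(1/\varepsilon)}\lg^5 N\lg^2 n$ for the dominant phase. The one place you tighten the paper's bookkeeping is the super-instance step: you solve the auxiliary weighted IS on the $|\Tfine|$ constructed intervals in $O(|\Tfine|\lg|\Tfine|)$ via a right-endpoint sweep, whereas the paper invokes the $O(n'+m')=O(|\Tfine|^2)$ algorithm of~\cite{frank75some}; the latter gives a $\lg^4 N$ cost per segment, hence $\lg^7 N$ over all segments and cells, which is not transparently dominated by $\updateweightedintervals$ without an extra relation between $n$, $N$, and $W$, so your version makes the final absorption cleaner. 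Your closing remark about the decreasing-level order and $\Zfine\supseteq Z(Q_r)$ ensuring $\ALG(Q_r,T)$ is available when the parent needs it is correct and fills in a dependency the paper leaves implicit.
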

\begin{proof}
	First, we run the update step of the algorithm from
	Section~\ref{sec:Weighted-hypercubes} which has worst-case update time
	$(1/\varepsilon)^{O(1/\varepsilon)} \lg W \lg n \lg^3 N$. Then we
	need to update $\log N$ grid cells.  For each such cell we must construct
	the grid $Z(Q)$ which takes
	time $O((1/\varepsilon)^{1/\varepsilon} \log N \lg^2 |P(Q)|)$.
	Note that for each cell $Q$ the number of segments is bounded by
	$|Z(Q)|^2\leq (1/\varepsilon)^{O(1/\varepsilon)} \lg^2 N$.
   	Furthermore, for each segment we have to consider 
   	$|\Tfine|\le|\Zfine|^{2}\le (1/\epsilon)^{O(1/\varepsilon)}\log^2 N$
	subsegments. Then for each subsegments, the algorithm spends time
	$(\OoEps)^{O(1/\varepsilon)}\lg^{2}n$ for the computation due to
	Lemma~\ref{lem:dp-sparse}. This takes total time
	$(1/\varepsilon)^{O(1/\varepsilon)} \lg^4 N \lg^{2}n$.
	Furthermore, for computing the solution $\ALG(Q,S)$ we need to solve the
	static maximum independent set problem on an interval graph of
	$O(|\Tfine|^2)$ edges which takes $O(|\Tfine|^2)$ time.

	For all $\log N$ cells together that we need to update 
	we can upper bound all of these running times 
	by $\updateweightedintervals$. 
\end{proof}

\subsection{Proof of Lemma~\ref{lem:dp-sparse}}
\label{sec:dp-sparse}
Consider $\overline{\OPT}(Q,T)$. By Lemma~\ref{lem:estimate-weight-OPT}
we have that $w(\overline{\OPT}(Q,T)) \leq w(\OPT(Q,T)) \leq \kappa w(P(Q) \cap T)$,
where $\kappa$ is the constant in Lemma~\ref{lem:estimate-weight-OPT}.
Hence, we can remove all intervals with weight less than $\varepsilon^2 w(P(Q) \cap T)$ 
from $\overline{\OPT}(Q,T)$ and the resulting independent set still has weight
at least $(1-\varepsilon)w(\overline{\OPT}(Q,T))$. Call the resulting
independent set $\overline{\OPT}'(Q,T)$.

Now set $k_u$ to the smallest integer such that $(1+\varepsilon)^{k_u} \geq
\kappa w(P(Q) \cap T)$ and let $k_l$ be the largest integer such that
$(1+\varepsilon)^{k_l} \leq \varepsilon^2 w(P(Q) \cap T)$; recall that
$\kappa=\Theta(1)$ as in Lemma~\ref{lem:estimate-weight-OPT} is a constant not
depending on $n$ or $\varepsilon$.
Now note that $k_u - k_l = O(\lg_{1+\varepsilon} \varepsilon^2) = O((\lg \varepsilon)/\varepsilon) = O( (\OoEps)^{2})$.

Now we compute an independent set $\ALG_\sparse(Q,T)$ by exhaustively
enumerating all possible sequences
$\K=(\hat{K},\hat{k_{1}},\dots,\hat{k}_{\hat{K}})$ with the following properties: 
\begin{itemize}
	\item $\hat{K}\in\{0,1,\dots,\OoEps\}$ is a guess for $|\overline{\OPT}'(Q,T)|$. Note that
		there are $O(1/\varepsilon)$ choices for $\hat{K}$. 
	\item Suppose that $\overline{\OPT}'(Q,T)$ has indeed $\hat{K}$ elements and has the form
		$\overline{\OPT}'(Q,T)=\{C_{1},\dots,C_{\hat{K}}\}$, where $C_{j}$ lies completely
		on the left of $C_{j+1}$ for each $j$. Then $\hat{k_{i}}\in\N$
		is a guess for the weight of $C_{i}$ in the sense that $C_{i}$ has
		weight $(1+\varepsilon)^{\hat{k_{i}}}\leq w_{i}<(1+\varepsilon)^{\hat{k_{i}}+1}$.
		Since we removed all intervals with weight less than $\varepsilon^2 w(P(Q) \cap T)$
		from $\overline{\OPT}'(Q,T)$, all $C_{i}\in\overline{\OPT}'(Q,T)$
		have weights such that $(1+\epsilon)^{k_l}\le w_{i}\le(1+\epsilon)^{k_u}$.
		Hence, we only need to consider $O( (\OoEps)^{2})$ values for $\hat{k_{i}}$ and
		thus we have $O((\OoEps)^{\hat{K}+2})=(\OoEps)^{O(1/\varepsilon)}$
		choices for picking $(\hat{k_{1}},\dots,\hat{k}_{\hat{K}})$. 
\end{itemize}
Now given $T=(t_1,t_2)$ and a sequence
$\K=(\hat{K},\hat{k_{1}},\dots,\hat{k}_{\hat{K}})$ as above, we compute a
candidate solution $\ALG_\sparse^{\K}(Q,T)$
as follows. First, we initialize $\ALG_\sparse^{\K}(Q,T)=\emptyset$ and $t=t_1$.
Then for $i=1,\dots,\hat{K}$ in increasing order, we use the
fourth property from Lemma~\ref{lem:data-structure} on $\D_{\hat{k_{i}}}$
to find the interval $C_{i}=[x,y)\in\C_{\hat{k_{i}}}(Q)$ in the range
$[t,t_2)$ with smallest $y$-coordinate (recall that $\D_{\hat{k_{i}}}$ maintains
all intervals with weights in the range
$[(1+\varepsilon)^{\hat{k_i}}, (1+\varepsilon)^{\hat{k_i}+1})$).
If such a $C_{i}$ does not
exist, then the subroutine aborts and proceeds with the next sequence
$\K$, otherwise, $C_{i}$ is added to $\ALG_\sparse^{\K}(Q,T)$ and $t$ is
set to $y$.

Now the subroutine computes $\ALG_\sparse^{\K}(Q,T)$ as approximation of
$\overline{\OPT}'(Q,T)$
by iterating over all sequences $\K$ as defined above and running
the above routine for computing $\ALG_\sparse^{\K}(Q,T)$. Then the subroutine
sets $\ALG_\sparse^{\K}(Q,T)$ to the solution $\ALG_\sparse^{\K}(Q,T)$ with the
maximum weight.

To bound the running time of the procedure, note that there are
$(\OoEps)^{O(1/\varepsilon)}$ choices for $\K$. For each $\K$, computing
$\ALG_\sparse^{\K}(Q,T)$ takes time
$O(\OoEps\lg^{2}n)$ because each call to $D_{\hat{k_i}}$ takes time $O(\lg^2 n)$
by Lemma~\ref{lem:data-structure}.  Thus, the computing $\ALG_\sparse(Q,T)$ takes time
$(\OoEps)^{O(1/\varepsilon)}\lg^{2}n$.

Next, we show that 
$w(\ALG_\sparse(Q,T))\geq(1-\varepsilon)\overline{\OPT}'(Q,T)$. 
To see this, suppose that $\overline{\OPT}'(Q,T)=\{C_{1},\dots,C_{\hat{K}}\}$ for
suitable $\hat{K}\leq\OoEps$, where $C_{j}$ is to the left of $C_{j+1}$
for all $j$. Furthermore, let $\hat{k_{j}}$
be such that $(1+\varepsilon)^{\hat{k_{j}}}\leq
w_{j}<(1+\varepsilon)^{\hat{k_{j}}+1}$ for all $j=1,\dots,\hat{K}$.
Then the definition of the subroutine
implies that the subroutine has computed a solution $\ALG_\sparse^{\K}(Q,T)$
for the sequence $\K=(\hat{K},\hat{k_{1}},\dots,\hat{k_{K}})$.
Hence, $w(\ALG_\sparse(Q,T))\geq w(\ALG_\sparse^{\K}(Q,T))$.

Now consider the solution $\ALG_\sparse^{\K}(Q,T)=\{\hat{C_{1}},\dots,\hat{C}_{\hat K}\}$
computed by the algorithm for sequence $\K$ as above, where $\hat{C_{j}}$
is on the left of $\hat{C_{j+1}}$ for all $j$. First, note that
the subroutine did not abort since for $\hat{C_{j}}=(\hat{x_{j}},\hat{y_{j}})$
and $C_{j}=(x_{j},y_{j})$ we must have that $\hat{y_{j}}\leq y_{j}$
(by definition of the subroutine and the fourth property of Lemma~\ref{lem:data-structure}).
Second, observe that the weights of $\hat{C_{j}}$ and $C_{j}$ differ
by at most a $(1+\varepsilon)$-factor since $\hat{C_{j}}\in\C_{\hat{k_{i}}}(Q)$
and $C_{j}\in\C_{\hat{k_{i}}}(Q)$ by definition of the subroutine.
This implies that
$w(\ALG_\sparse^{\K}(Q,T))\geq(1-\varepsilon)w(\overline{\OPT}'(Q,T))\geq(1-O(\varepsilon))w(\overline{\OPT}(Q,T))$.

\subsection{Proof of Lemma~\ref{lem:structured-solution}}
\label{sec:structured-solution}
The proof has five main steps. (1) We start by defining important intervals which are
intervals from $\OPT^*(Q,S)$ which have high weight and which we will be
charging throughout the proof (we will never charge non-important intervals).
(2) We prove a technical claim which states that every important interval
must overlap with a point from $\Zfine$. (3) We show how to construct
$\OPT''(Q,S)$ as $(1+O(\varepsilon))$-approximation of $\OPT^*(Q,S)$ in two
substeps: (3.1) We show that for each important interval $C_i$ which partially
overlaps with a grid slice $(z_k,z_{k+1})$, we can delete all non-important
intervals from $\OPT^*(Q,S)$ which overlap with $(z_k,z_{k+1})$.  (3.2) We use
a shifting argument and remove every $1/\varepsilon$'th important interval from
$\OPT''(Q,S)$; this will be useful to satisfy the first property of
Definition~\ref{def:structured-solution} later.  (4) Using the solution $\OPT''(Q,S)$
we show how the set $\T$ can be constructed; this shows that $\OPT''(Q,S)$ and
$\T$ together yield a structured solution.  (5) Finally, we prove that the
inequality which we claim in the lemma holds; this implies that our structured
solution is almost optimal.

(1): Consider $\OPT^*(Q,S)$. Since we deleted all light intervals from $\OPT^*(Q,S)$ we
have for each $C_{i}\in\OPT^*(Q,S)\cap\C'(Q)$ that
$w_{i}\ge (\epsilon^{2}/\log N)\cdot w(P(Q))\ge(\epsilon^{2}/\log N) \cdot \max\{w(P(Q_{1}),w(P(Q_{2})\}$.
Let $k$ be the largest integer such that
$\tilde{w}:=\frac{1}{\epsilon}^{a'+k/\epsilon+1}\le (\epsilon^{2}/\log N) \cdot w(P(Q))$.
Observe that $\tilde{w} \ge w(P(Q))\epsilon^{2+1/\epsilon}/\log N$ since
otherwise we could have picked a larger value for $k$ in the definition of
$\tilde{w}$.
Note that due to the shifting step, if $w_{i}<\tilde{w}$ for an interval
$C_{i}$ then $w_{i}<\epsilon\tilde{w}$. We say that an interval
$C_{i}\in\OPT^*(Q,S)$ is \emph{important} if $w_{i}\ge\tilde{w}$. 
Note that important intervals can be intervals in $\C'(Q)$ assigned to $Q$ and
also intervals in $\C(Q) \setminus \C'(Q)$ assigned to the children cells of $Q$.

(2) We prove the following claim: Each important interval $C_{i}$ overlaps at
least with one point in $\Zfine$. Indeed, if $C_{i}\in\C'(Q)$ then $C_i$ has size
$|C_i| \geq \varepsilon N / 2^{\ell(Q)-1}$ (by definition of $\C'(Q)$) and thus
the claim follows since (by definition of $Z(Q_r)$) we have that $C_i$ overlaps
one of the points $k'\cdot\epsilon N/2^{\ell(Q_{r})-1}\in Z(Q_{r})$ for
$r\in\{1,2\}$ and for each $k'\in\mathbb{N}$ such that $k'\cdot\epsilon
N/2^{\ell(Q_{r})-1}\in Q_{r}$.  If $C_{i}\notin\C'(Q)$ then observe that $w_i \geq \tilde{w}$
(since $C_i$ is important) and that the auxiliary grid satisfies
$w(P(Q_{r})\cap(z_{k},z_{k+1}))\le w(P(Q_{r}))\epsilon^{3+1/\varepsilon}/\log
N\le w(P(Q))\epsilon^{3+1/\varepsilon}/\log N \leq \varepsilon\tilde{w}$ 
for any two consecutive points $z_{k},z_{k+1}\in Z(Q_{r})$ for $r\in\{1,2\}$.
This implies that $C_i$ must overlap with one of the points of the auxiliary
grid.

(3.1): Now we change $\OPT^*(Q,S)$ as follows. Whenever an important interval $C_{i}$
partially overlaps a grid slice $(z_{k},z_{k+1})$ then we delete all intervals
from $\OPT^*(Q,S)$ that are not important and that overlap a point in
$(z_{k},z_{k+1})$; we do \emph{not} do this if $(z_k,z_{k+1})$ is completely
contained in $C_i$, i.e., if $(z_k,z_{k+1}) \subseteq C_i$.  Now we show that
the total weight of deleted non-important intervals is bounded by
$(2+2\kappa)\epsilon\tilde{w}\le(2+2\kappa)\epsilon w_{i}$, 
where $\kappa$ is as in Lemma~\ref{lem:estimate-weight-OPT}: Consider a
slice $(z_k,z_{k+1})$ with which $C_i$ partially overlaps. First, note that at
most two non-important intervals can overlap with the endpoints $z_k$ and
$z_{k+1}$; deleting these intervals costs at most $2 \varepsilon \tilde{w}$.
Second, we delete the intervals $C^* \in \OPT^*(Q,S)$ which are completely
contained in $(z_k,z_{k+1})$, i.e., those $C^* \in \OPT^*(Q,S)$ such that
$C^* \subseteq (z_k,z_{k+1})$. Deleting these intervals costs at most
$w(\OPT^*(Q,(z_k,z_{k+1})))$.  By definition of $\Zfine$ and
Lemma~\ref{lem:estimate-weight-OPT}, we have that
$w(\OPT^*(Q,(z_k,z_{k+1}))) \leq \kappa w(P(Q) \cap (z_k,z_{k+1}))
		\leq \kappa \cdot \epsilon^{3+1/\varepsilon} w(P(Q))/\log N
		\leq \kappa \cdot \varepsilon \tilde{w}$. 
We conclude that the previous two steps incur costs of at most
$(1+\varepsilon)\kappa \tilde{w}$ for the slice $(z_k,z_{k+1})$. 
Since $C_i$ can partially overlap with at most two slices $(z_k,z_{k+1})$ and
$(z_{k'},z_{k'+1})$, we can charge their total cost $(2+2\kappa)\tilde{w}$ to
the weight $w_i \geq \tilde{w}$ of $C_i$.  This proves that the total of weight
of all deleted non-important intervals is at most $(2+2\kappa)\varepsilon w_i$.

(3.2): Let $C_{i_{1}},\dots,C_{i_{K}}$ denote the important intervals in
$\OPT^*(Q,S)$.  We next show how to choose a suitable offset
$a''\in\{0,\dots,1/\epsilon-1\}$ such that deleting each important interval
$C_{i_{k}}$ with $k\equiv a''\bmod1/\epsilon$ does not decrease the weight of
$\OPT^*(Q,S)$ too much.  As there are $1/\varepsilon$ such offsets and for each
offset the sets of deleted intervals are disjoint, there is choice for $a''$
such the deleted intervals have a total weight of at most
$\epsilon\cdot\sum_{k=1}^{K}w_{i_{k}}$.  Let $\OPT''(Q,S)$ denote the remaining
important intervals from $\OPT^*(Q,S)$ and observe that the total weight of all
deleted important intervals from $\OPT^*(Q,S)$ is $O(\varepsilon w(\OPT^*(Q,S))$;
we charge this cost to the remaining important intervals in $\OPT''(Q,S)$
proportionally to their weights.  Thus, $w(\OPT''(Q,S)) \geq
(1-O(\varepsilon))w(\OPT^*(Q,S))$.

(4): Next, we define the subsegments $\T$. For each $r\in\{1,2\}$, $\T$ consists of
one subsegment for each interval of the form $(z,z')$ with $z,z'\in \Zfine$ such
that (a) $(z,z')\subseteq Q_{r}$, (b) no important interval
intersects with $(z,z')$ and (c) among all pairs $(z,z')$ with properties (a)
and (b), $z' - z$ is maximal; these are the subsegments $\T_{\dense}$ mentioned
in the second point of Definition~\ref{def:structured-solution}. Note that for
an interval $(z,z')$ as just defined, we indeed have that for each $C_i \in
\OPT''(Q,S)$ with $C_i \subseteq (z,z')$ it holds that $C_i \in \C(Q_r)$: Since
$C_i$ is non-important we have that $w_i \leq (\varepsilon^2 \lg N) w(P(Q))$,
and, thus, if $C_i$ were from $\C'(Q)$ then it would be light and it would have
been deleted from $\OPT'(Q)$ previously. Hence, $C_i$ must be from $\C(Q_r)$.
Next, we introduce one subsegment for each interval of the form $(z,z')$ with
$z,z'\in \Zfine$ such that (a') no non-important interval from $\OPT''(Q,S)$
intersects $(z,z')$, (b') for each important interval $C_{i}\in\OPT''(Q,S)$ it
holds that $C_{i}\subseteq(z,z')$ or $C_{i}\cap(z,z')=\emptyset$ and (c') among
all pairs $(z,z')$ with properties (a') and (b'), $z' - z$ is minimal; these are
the subsegments $\T_{\sparse}$ mentioned in the first point of
Definition~\ref{def:structured-solution}. To see that these subsegments can only
contain $1/\varepsilon$ intervals from $\OPT''(Q,S)$, recall that each important
interval overlaps with a point in $\Zfine$ and in the previous paragraph we
removed every $1/\varepsilon$'th important interval. Thus, since we picked the
intervals $(z,z')$ with minimal size, each of them can contain at most
$1/\varepsilon$ intervals.

(5): To see the correctness of the inequality claimed in the lemma, note that each
interval from $\OPT''(Q,S)$ is contained in one of the previously constructed
subsegments in $\T$. Furthermore, observe that throughout this proof
we have only charged important intervals $C_i$ and their respective charge was
at most $\kappa' \varepsilon w_i$ for some absolute constant $\kappa' > 1$; we
did not charge any non-important intervals. Also notice that all important
intervals are contained in the subsegments in $\T_{\sparse}$ and all
non-important intervals are contained in the subsegments in $\T_{\dense}$.
This implies that
\begin{align*}
	&\sum_{T_{j}\in\T_{\sparse}}w(\OPT_{T_{j}}''(Q,S))+\sum_{T_{j}\in\T_{\dense}}w(\OPT_{T_{j}}''(Q,S)) \\
	\ge& w(\OPT^*(Q,S)) - \kappa'\cdot \epsilon \sum_{T_{j}\in\T_{\sparse}}w(\OPT_{T_{j}}''(Q,S)).
\end{align*}
Rearranging terms yields the inequality which we claimed in the lemma.

\section{Dynamic Independent Set of Rectangles and Hyperrectangles}
\label{sec:rectangles}
We study the setting where each input element $C_{i}\in\C$ is an axis-parallel $d$-dimensional
hyperrectangle, rather than a hypercube. Like for hypercubes, we assume
that each input hyperrectangle $C_{i}$ is defined by coordinates
$x_{i}^{(d')},y_{i}^{(d')}\in[0,N]$ where $|y_{i}^{(d')}-x_{i}^{(d')}|\ge1$
for each dimension $d'\in[d]$. We give an $O(\log^{d-1}N)$-approximation
algorithm with worst-case update time $d (1/\varepsilon)^{O(1/\varepsilon)} \log(n+N)^{O(1)}$.
Our strategy is to reduce
this case to the one-dimensional setting by paying a factor $\log^{d-1}N$,
similarly as in~\cite{agarwal98label}.

We classify the hyperrectangles into $\log N$ different classes.
We say that a hyperrectangle $C_{i}$ is of class 1 if $N/2\in[x_{i}^{(1)},y_{i}^{(1)})$.
Recursively, a hyperrectangle $C_{i}$ is of class $c$ if there a
$k\in\N$ such that $k\cdot N/2^{c}\in[x_{i}^{(1)},y_{i}^{(1)})$
but $C_{i}$ is not of class $c-1$. For each $c$ denote by $\C^{(c)}$
the rectangles of class $c$. 
\begin{lem}
For each hyperrectangle $C_{i}$ there is a class $c\in\{1,\dots,\log N\}$
such that $C_{i}\in\C^{(c)}$. 
\end{lem}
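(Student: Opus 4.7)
The plan is to argue that the class definition is exhaustive by showing the finest level $c = \log N$ always captures any hyperrectangle not already assigned to a smaller class. The key observation is that the definition classifies $C_i$ according to the coarsest dyadic grid (in the first coordinate) whose gridpoints are hit by $[x_i^{(1)}, y_i^{(1)})$: class $c$ corresponds to a grid with spacing $N/2^c$.

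First, I would rephrase the definition uniformly: $C_i \in \C^{(c)}$ if and only if $c$ is the smallest value in $\{1, \ldots, \log N\}$ for which there exists $k \in \N$ with $k \cdot N/2^c \in [x_i^{(1)}, y_i^{(1)})$. (Class $1$ is exactly the case $k=1$, since $0$ and $N$ are the only other multiples of $N/2$ in $[0,N]$ and we may observe they play no role under the assumption $y_i^{(1)} - x_i^{(1)} \geq 1$ together with $[x_i^{(1)}, y_i^{(1)}) \subseteq [0, N]$; the edge case at $x_i^{(1)} = 0$ is then absorbed by class $1$ via $k=0$ or by handling it explicitly.)

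Next, I would verify that such a smallest $c$ always exists within $\{1, \ldots, \log N\}$. Setting $c = \log N$ gives $N/2^c = 1$, so the multiples $k \cdot N/2^c$ are exactly the non-negative integers. Since the assumption from the problem definition guarantees $y_i^{(1)} - x_i^{(1)} \geq 1$, the half-open interval $[x_i^{(1)}, y_i^{(1)})$ has length at least $1$ and therefore must contain the integer $\lceil x_i^{(1)} \rceil$. Hence the set of valid $c$ is non-empty, and its minimum $c^* \leq \log N$ is the class of $C_i$.

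The argument has no real obstacle; the only subtlety is making sure the recursive definition is interpreted correctly so that ``$C_i$ is not of class $c - 1$'' is equivalent to ``$c$ is the smallest index at which a gridpoint hits the interval,'' which I would justify by a one-line induction on $c$. Once that equivalence is in place, the lower bound on the interval length immediately forces termination by level $\log N$, completing the proof.
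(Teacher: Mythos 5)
Your proof is correct and follows essentially the same approach as the paper: both observe that at the finest level $c=\log N$ the grid spacing $N/2^{\log N}$ equals $1$, so the relevant multiples are the integers, and the length-at-least-one assumption forces $[x_i^{(1)},y_i^{(1)})$ to contain one. You merely spell out more explicitly (via the ceiling function and the ``smallest valid $c$'' reformulation) what the paper's one-sentence proof leaves implicit.
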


\begin{proof}
This holds since for each hyperrectangle $C_{i}$ and each dimension
$d'\in[d]$ and there is a $k\in\N$ such that $k=k\cdot N/2^{\log N}\in[x_{i}^{(1)},y_{i}^{(1)})$,
since we assumed that $|y_{i}^{(d')}-x_{i}^{(d')}|\ge1$ for each
hyperrectangle $C_{i}$. 
\end{proof}
For each class $\C^{(c)}$ with $c\in\{1,\dots,\log N\}$ we maintain
an independent set $A^{(c)}\subseteq\C^{(c)}$ and then output the
solution with maximum weight among the solutions $A^{(c)}$. We will
show that this reduction to the problem for one single class $\C^{(c)}$
loses only a factor of $\log N$. The key observation is that for
each class $c$ we can partition the hyperrectangles in $\C^{(c)}$
into at most $N$ different groups $\C^{(c,k)}$ such that no two
hyperrectangles in different groups overlap and the problem for each
group is equivalent to an instance of $(d-1)$-dimensional hyperrectangles.
Let $c$ be a class. For each odd integer $k$ let $\C^{(c,k)}:=\{C_{i}\in\C^{(c)}|k\cdot N/2^{c}\in[x_{i}^{(1)},y_{i}^{(1)})\}$. 
\begin{lem}
Let $c$ be a class. For each $C_{i}\in\C^{(c)}$ there is an odd
$k\in[N]$ such that $C_{i}\in\C^{(c,k)}$. 
\end{lem}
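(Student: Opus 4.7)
The plan is to apply the definition of class $c$ directly: since $C_i \in \C^{(c)}$, the definition provides some $k' \in \mathbb{N}$ with $k' \cdot N/2^c \in [x_i^{(1)}, y_i^{(1)})$, and the only thing to verify is that $k'$ can be taken to be odd and lies in $[N]$.

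First I would argue that $k'$ must be odd. Suppose for contradiction that $k'$ is even, say $k' = 2m$ for some $m \in \mathbb{N}$. Then
\[
  k' \cdot \frac{N}{2^c} = 2m \cdot \frac{N}{2^c} = m \cdot \frac{N}{2^{c-1}} \in [x_i^{(1)}, y_i^{(1)}),
\]
which by the recursive definition of class witnesses that $C_i$ is of class at most $c-1$. This contradicts the assumption $C_i \in \C^{(c)}$, since membership in $\C^{(c)}$ requires $C_i$ not to be of any class $c' < c$. Hence $k'$ is odd.

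Next I would check that $k' \in [N]$. Since $x_i^{(1)}, y_i^{(1)} \in [0,N]$, we have $0 \le k' \cdot N/2^c \le N$, which gives $0 \le k' \le 2^c$. Because $c \in \{1,\dots,\log N\}$, we obtain $k' \le 2^c \le N$, so $k' \in [N]$. Setting $k := k'$ yields $C_i \in \C^{(c,k)}$ by definition of $\C^{(c,k)}$, which finishes the proof.

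The only conceptual step here is the contradiction argument showing odd $k$, and it is straightforward given the nested definition of class; no additional tools beyond the definitions are needed.
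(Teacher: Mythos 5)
Your proof is correct and follows essentially the same approach as the paper: you invoke the defining witness $k'$ for class $c$, show by contradiction that an even $k'$ would place $C_i$ in a strictly lower class, and conclude. Your extra check that $k' \in [N]$ (using $k' \le 2^c \le N$, with $k'\ge 1$ following from oddness) is a small but welcome bit of rigor that the paper's proof leaves implicit.
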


\begin{proof}
By definition of the classes for each hyperrectangle $C_{i}\in\C^{(c)}$
there is a $k\in\N$ such that $k\cdot N/2^{c}\in[x_{i}^{(1)},y_{i}^{(1)})$.
If $k$ was even, i.e., $k=2k'$ for some $k'\in\N$ then $k'\cdot N/2^{c-1}\in[x_{i}^{(1)},y_{i}^{(1)})$
and hence $C_{i}\in\C^{(c-1)}$ and therefore $C_{i}\notin\C^{(c)}$. 
\end{proof}
For each class $c$, we show that the sets $\C^{(c,k)}$ form independent
subinstances, i.e., there is no hyperrectangle in $\C^{(c,k)}$ that
overlaps with a hyperrectangle in $\C^{(c,k')}$ if $k\ne k'$. 
\begin{lem}
Let $c$ be a class. Let $C_{i}\in\C^{(c,k)}$ and $C_{i'}\in\C^{(c,k')}$
such that $k\ne k'$. Then $C_{i}\cap C_{i'}=\emptyset$. 
\end{lem}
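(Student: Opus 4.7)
The plan is to argue by contradiction and exploit the definition of classes: the fact that $C_i \in \C^{(c)}$ means that $C_i$ contains the multiple $k \cdot N/2^c$ but crucially no multiple of $N/2^{c'}$ for any $c' < c$, and symmetrically for $C_{i'}$. The key structural observation is that, since both $k$ and $k'$ are odd with $k \neq k'$, we may assume WLOG $k < k'$, and then $k' \geq k+2$; so $k+1$ is even and $(k+1) \cdot N/2^c$ is a multiple of $N/2^{c-1}$ lying strictly between $k \cdot N/2^c$ and $k' \cdot N/2^c$. Hence this ``bad'' point must be avoided by both $C_i$ and $C_{i'}$, otherwise they would be of class at most $c-1$.

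Now I would suppose for contradiction that $C_i \cap C_{i'} \neq \emptyset$. Then in particular their projections to the first coordinate intersect, so there exists $z \in [x_i^{(1)}, y_i^{(1)}) \cap [x_{i'}^{(1)}, y_{i'}^{(1)})$. I split into two cases according to the position of $z$ relative to $(k+1) \cdot N/2^c$. If $z \geq (k+1) \cdot N/2^c$, then since $k \cdot N/2^c \in [x_i^{(1)}, y_i^{(1)})$, convexity of intervals gives $[k \cdot N/2^c, z] \subseteq [x_i^{(1)}, y_i^{(1)})$, so $(k+1) \cdot N/2^c \in [x_i^{(1)}, y_i^{(1)})$; but this is a multiple of $N/2^{c-1}$, contradicting $C_i \in \C^{(c)}$. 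If instead $z < (k+1) \cdot N/2^c$, then since $k' \cdot N/2^c > (k+1) \cdot N/2^c$ and $k' \cdot N/2^c \in [x_{i'}^{(1)}, y_{i'}^{(1)})$, convexity gives $[z, k' \cdot N/2^c] \subseteq [x_{i'}^{(1)}, y_{i'}^{(1)})$, so again $(k+1) \cdot N/2^c \in [x_{i'}^{(1)}, y_{i'}^{(1)})$, contradicting $C_{i'} \in \C^{(c)}$.

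There is essentially no hard step here: the entire proof is a direct case analysis driven by the combinatorial observation that between two distinct odd multiples of $N/2^c$ there is always an even multiple, i.e., a multiple of $N/2^{c-1}$. The only subtlety to be careful with is that the intervals are half-open of the form $[x_i^{(1)}, y_i^{(1)})$, so I need to verify that $(k+1) \cdot N/2^c$ is strictly less than the relevant right endpoint in each case; this follows because $(k+1) \cdot N/2^c < k' \cdot N/2^c < y_{i'}^{(1)}$ in the second case and $(k+1) \cdot N/2^c \leq z < y_i^{(1)}$ in the first.
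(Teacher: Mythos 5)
Your proof is correct and follows the same core idea as the paper's: between two distinct odd multiples of $N/2^c$ there is an even multiple, i.e.\ a multiple of $N/2^{c-1}$, and if the two rectangles intersect, this multiple must land inside one of the two first-coordinate intervals, contradicting class $c$. The paper states the dichotomy ``$k''\cdot N/2^c$ is in one of the two intervals'' without further justification; you make this explicit by picking $k''=k+1$ and splitting on the position of the common point $z$ relative to $(k+1)\cdot N/2^c$, then using convexity of intervals. This is the same argument, just spelled out more carefully at the boundary, and it is sound.
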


\begin{proof}
Assume w.l.o.g.~that $k<k'$ and suppose by contradiction that $C_{i}\cap C_{i'}\ne\emptyset$.
Then it must be that $(x_{i}^{(1)},y_{i}^{(1)})\cap(x_{i'}^{(1)},y_{i'}^{(1)})\ne\emptyset$.
Let $k''$ be an even number with $k<k''<k'$. Then $k''\cdot N/2^{c}\in(x_{i}^{(1)},y_{i}^{(1)})$
or $k''\cdot N/2^{c}\in(x_{i'}^{(1)},y_{i'}^{(1)})$. Assume that
$k''\cdot N/2^{c}\in(x_{i}^{(1)},y_{i}^{(1)})$. Then $k''\cdot N/2^{c}=k''/2\cdot N/2^{c-1}\in(x_{i}^{(1)},y_{i}^{(1)})$
and hence $C_{i}\in\C^{(c-1)}$ which is a contradiction. 
\end{proof}
Let $c$ be a class. We maintain a solution $A^{(c)}\subseteq\C^{(c)}$.
Also, we maintain a search tree $S^{(c)}$ of depth $\log N$ for
the sets $\C^{(c,k)}$ with $\C^{(c,k)}\ne\emptyset$. For each
such set $\C^{(c,k)}$ there is a corresponding node $v_{c,k}$ in $S^{(c)}$.
In $v_{c,k}$ we store a data structures that maintain $\C^{(c,k)}$,
and a data structure that maintains an independent set of $(d-1)$-dimensional
hyperrectangles from an input set $\bar{\C}^{(c,k)}$ (that correspond
to $\C^{(c,k)}$). When a rectangle $C_{i}$ is inserted or removed
then we compute $c$ and $k$ such that $C_{i}\in\C^{(c,k)}$. If
$S^{(c)}$ does not contain a node $v_{c,k}$ for $\C^{(c,k)}$ then
we add such a node $v_{c,k}$ in time $O(\log N)$. Then, we add/remove
$C_{i}$ to/from $\C^{(c,k)}$, and we add/remove in $\bar{\C}^{(c,k)}$
the $(d-1)$-dimensional hyperrectangle $\tilde{C}_{i}:=(x_{i}^{(2)},y_{i}^{(2)})\times\dots\times(x_{i}^{(d)},y_{i}^{(d)})$
with weight $w_{i}$. Let $A^{(c,k)}\subseteq\bar{\C}^{(c,k)}$ denote
the new independent set for the input set $\bar{\C}^{(c,k)}$. We
update the solution $A^{(c)}$ to be $A^{(c)}:=\bigcup_{k:\C^{(c,k)}\ne\emptyset}\{C_{i}|\tilde{C}_{i}\in A^{(c,k)}\}$.

Let $c^{*}$ be the class $c$ with the solution $A^{(c)}$ of maximum
weight, i.e., such that $w(A^{(c^{*})})\ge w(A^{(c)})$ for each class
$c$. We output $A^{(c^{*})}$. The construction above yields the
following lemma. 
\begin{lem}
\label{lem:inductive-step}Given a data structure that maintains an
$\alpha$-approximate independent set of $(d-1)$-dimensional hyperrectangles
with update time $T$. Then there is a data structure that maintains
a $(\alpha\log N)$-approximate independent set of $d$-dimensional
hyperrectangles with update time $T+O(\log N)$. 
\end{lem}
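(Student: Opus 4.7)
\textbf{Proof plan for Lemma~\ref{lem:inductive-step}.}
The plan is to bound the approximation ratio of the data structure described just before the lemma via a two-step argument: first show that each per-class solution $A^{(c)}$ is an $\alpha$-approximation of the best independent set consisting only of class-$c$ rectangles, and then use a pigeonhole over the $\lg N$ classes to relate the best per-class solution to the global optimum.

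First I would verify the key structural fact that underpins the reduction. Fix a class $c$ and an odd $k$ with $\C^{(c,k)}\neq\emptyset$. By definition, every $C_i\in\C^{(c,k)}$ contains the hyperplane $\{x^{(1)}=k\cdot N/2^c\}$ in the interior of its first coordinate, so for any two $C_i,C_{i'}\in\C^{(c,k)}$ their projections onto the first dimension automatically intersect. Hence $C_i\cap C_{i'}\neq\emptyset$ if and only if $\tilde{C}_i\cap \tilde{C}_{i'}\neq\emptyset$, so independent sets in $\C^{(c,k)}$ are in weight-preserving bijection with independent sets in $\bar\C^{(c,k)}$. Combining this with the preceding lemma (which shows that distinct $\C^{(c,k)}$, $\C^{(c,k')}$ do not interact) implies that the optimum independent set $\OPT^{(c)}$ restricted to class $c$ decomposes as $\OPT^{(c)}=\bigcup_k \OPT(\C^{(c,k)})$, and by the assumed guarantee of the $(d-1)$-dimensional data structure, $w(A^{(c,k)})\ge\frac{1}{\alpha}\,w(\OPT(\bar\C^{(c,k)}))=\frac{1}{\alpha}\,w(\OPT(\C^{(c,k)}))$. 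Summing over $k$ yields $w(A^{(c)})\ge \frac{1}{\alpha}\,w(\OPT^{(c)})$.

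Next I would apply the class decomposition to the global optimum. Since every input hyperrectangle belongs to exactly one class $c\in\{1,\dots,\lg N\}$, we have $\OPT=\bigsqcup_{c=1}^{\lg N}(\OPT\cap\C^{(c)})$, so by averaging there exists a class $c_0$ with $w(\OPT\cap\C^{(c_0)})\ge w(\OPT)/\lg N$. Because $\OPT\cap\C^{(c_0)}$ is a feasible independent set within $\C^{(c_0)}$, we have $w(\OPT^{(c_0)})\ge w(\OPT\cap\C^{(c_0)})$. Chaining inequalities and using the choice of $c^*$ gives
\begin{align*}
w(A^{(c^*)})\;\ge\;w(A^{(c_0)})\;\ge\;\frac{1}{\alpha}w(\OPT^{(c_0)})\;\ge\;\frac{1}{\alpha\lg N}w(\OPT),
\end{align*}
so the returned solution is $(\alpha\lg N)$-approximate.

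Finally, I would bound the update time. On an insertion or deletion of $C_i$, determining the unique class $c$ and the unique odd $k$ with $C_i\in\C^{(c,k)}$ from the coordinates $x_i^{(1)},y_i^{(1)}$ takes $O(\lg N)$ time (locating the largest $c$ for which some multiple of $N/2^c$ lies in $[x_i^{(1)},y_i^{(1)})$ is a simple binary search on bits). Inserting or locating the node $v_{c,k}$ in the search tree $S^{(c)}$ costs $O(\lg N)$, after which exactly one call to the assumed $(d-1)$-dimensional data structure (on $\bar\C^{(c,k)}$) is made, incurring cost $T$. The maximum-weight class $c^*$ is maintained in $O(1)$ amortized/worst-case time by keeping the values $\{w(A^{(c)})\}_{c=1}^{\lg N}$ in a balanced tree or heap keyed by weight, with $O(\lg N)$ updates per operation; these $O(\lg N)$ auxiliary costs are absorbed in the claimed $T+O(\lg N)$ bound. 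The only subtle point to double-check is that a single update changes exactly one $A^{(c,k)}$ and hence exactly one $A^{(c)}$, so exactly one call to the $(d-1)$-dimensional subroutine suffices; the independence of the subinstances proved in the preceding lemma makes this immediate.
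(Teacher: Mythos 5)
Your proof is correct and is exactly the argument the paper intends: the preceding lemmas establish that classes partition the input, that groups $\C^{(c,k)}$ within a class are mutually non-interacting, and that the first coordinate can be dropped inside a group, and your averaging-over-$\lg N$-classes plus $\alpha$-approximation-per-group chain is the natural way to combine these facts with the max-weight output rule. The paper itself omits the proof (stating only that ``the construction above yields the following lemma''), so your write-up is a faithful filling-in of that gap, including the observation that each update touches exactly one $A^{(c,k)}$ so only one call to the $(d-1)$-dimensional structure is needed.
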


A $(1+\varepsilon)$-approximate data structure for intervals (i.e., for $d=1$) is given by
Theorem~\ref{thm:weighted-intervals-1}. By induction we can prove the following
theorem, where the inductive step is given by Lemma~\ref{lem:inductive-step}. 
\begin{thm}
	There is a fully dynamic algorithm for the weighted maximum independent set of
	$d$-dimensional hyperrectangles problem that maintains a
	$(1+\epsilon)\log^{d-1}N$-approximate
	solution in worst-case update time~$d \updateweightedintervals$.
\end{thm}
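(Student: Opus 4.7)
The plan is to prove the theorem by induction on the dimension $d$, using Theorem~\ref{thm:weighted-intervals-1} as the base case and Lemma~\ref{lem:inductive-step} as the inductive step. Let $\alpha_d$ denote the approximation ratio and $T_d$ the worst-case update time of the algorithm we obtain for $d$-dimensional hyperrectangles.

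\textbf{Base case ($d=1$).} For $d=1$ the input consists of weighted intervals, and Theorem~\ref{thm:weighted-intervals-1} provides a fully dynamic algorithm with approximation ratio $\alpha_1 = 1+\varepsilon$ and worst-case update time $T_1 = \updateweightedintervals$.

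\textbf{Inductive step.} Assume we have a fully dynamic algorithm for $(d-1)$-dimensional hyperrectangles with approximation ratio $\alpha_{d-1}$ and worst-case update time $T_{d-1}$. Plugging this algorithm into Lemma~\ref{lem:inductive-step} (which classifies hyperrectangles into $\log N$ classes based on the first coordinate, splits each class into independent subinstances that reduce to the $(d-1)$-dimensional problem, and outputs the best solution among classes) yields an algorithm for $d$-dimensional hyperrectangles with $\alpha_d = \alpha_{d-1}\log N$ and $T_d = T_{d-1} + O(\log N)$.

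\textbf{Unrolling the recursion.} Iterating the inductive step $d-1$ times starting from the base case gives $\alpha_d = (1+\varepsilon)\log^{d-1}N$ and $T_d = T_1 + O(d\log N) = \updateweightedintervals + O(d\log N)$, which is bounded by $d\cdot\updateweightedintervals$, matching the stated bounds. There is no real obstacle here since both the approximation ratio and the update time compose cleanly through Lemma~\ref{lem:inductive-step}; the only thing to check is that the additive $O(\log N)$ in each inductive step is dominated by the polylogarithmic factors already present in the base-case update time, so summing over the $d-1$ induction steps contributes only an extra $O(d\log N)$ term that is absorbed into the final bound.
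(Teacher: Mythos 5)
Your proposal is correct and follows essentially the same approach as the paper: the paper's proof is a one-line remark invoking Theorem~\ref{thm:weighted-intervals-1} as the base case and Lemma~\ref{lem:inductive-step} as the inductive step, and you have simply spelled out that induction, correctly unrolling both the approximation ratio $\alpha_d = (1+\varepsilon)\log^{d-1}N$ and the update time $T_d = T_1 + O(d\log N) \le d\cdot\updateweightedintervals$. Nothing substantive differs from the paper's argument.
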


\section{Lower Bounds}
\label{sec:lower-bounds}

Our algorithms above have worst-case update times of $\log^{O(d)}(N+n)$
and achieve approximation ratios of $2^{d}(1+\epsilon)$ and $O(2^{d})$,
respectively. Next, we show that one cannot improve these approximation ratios to
$1+\epsilon$ for any $d\ge2$, even if the hypercubes are unweighted,
if we allow amortized update time of $n^{O((1/\epsilon)^{1-\delta})}$
for some $\delta>0$, and if we restrict ourselves to the insertion-only
model, rather than the fully dynamic model. The reason is that then
we could use this algorithm to compute a $(1+\epsilon)$-approximate
solution for unweighted squares offline in time $n^{O((1/\epsilon)^{1-\delta})}$.
This is impossible, unless the Exponential Time Hypothesis (ETH) fails~\cite{marx2007optimality}. 
\begin{thm}
\label{thm:lower-bound}For any $d\ge2$ there is no dynamic algorithm
in the insertion-only model for unweighted independent set of $d$-dimensional
hypercubes with an approximation ratio of $1+\epsilon$ and an amortized
update time of $n^{O((1/\epsilon)^{1-\delta})}$ for any $\delta>0$,
unless the ETH fails.
\end{thm}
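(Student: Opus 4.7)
The plan is to derive the lower bound via a contrapositive reduction: assume for contradiction that for some fixed $\delta>0$ there is a dynamic $(1+\varepsilon)$-approximation algorithm $\alg$ for unweighted independent set of $d$-dimensional hypercubes in the insertion-only model with amortized update time $T(n,\varepsilon) = n^{O((1/\varepsilon)^{1-\delta})}$, and then derive an offline $(1+\varepsilon)$-approximation algorithm whose running time contradicts the ETH-based hardness of Marx~\cite{marx2007optimality} for maximum independent set of unit squares.

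First I will handle the case $d=2$. Given an offline instance of $n$ unweighted squares, I simulate $\alg$ by inserting the input squares one by one; this is legal since $\alg$ works in the insertion-only model. After all $n$ insertions, I query $\alg$ for its maintained approximate solution. By the amortized guarantee, the total running time is bounded by $n \cdot T(n,\varepsilon) = n^{1+O((1/\varepsilon)^{1-\delta})}$. For any fixed $\varepsilon$ small enough we have $(1/\varepsilon)^{1-\delta}\ge 1$, so this bound simplifies to $n^{O((1/\varepsilon)^{1-\delta})}$, and the extra $O_\varepsilon(|\SOL|\cdot\poly(\log n,\log N))$ time to output the solution is absorbed into the same bound. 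Thus $\alg$ yields an offline $(1+\varepsilon)$-approximation for independent set of (unit) squares with running time $n^{O((1/\varepsilon)^{1-\delta})}$. Marx~\cite{marx2007optimality} showed that, under ETH, no $(1+\varepsilon)$-approximation for maximum independent set of unit squares can run in time $f(1/\varepsilon)\cdot n^{o(1/\varepsilon)}$ for any computable $f$. Since $(1/\varepsilon)^{1-\delta}=o(1/\varepsilon)$ for any fixed $\delta>0$ (because $(1/\varepsilon)^{1-\delta}/(1/\varepsilon) = \varepsilon^{\delta}\to 0$ as $\varepsilon\to 0$), the offline algorithm falls inside the forbidden regime, giving the contradiction.

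To extend to $d>2$ I will use a straightforward lifting: embed any instance $\mathcal{C}$ of unit squares in $\R^2$ into $\R^d$ by appending the coordinates $[0,1]$ in each of the dimensions $3,\dots,d$ to every input square, producing $d$-dimensional unit hypercubes. Two objects in the lifted instance intersect iff the original squares intersect, so the intersection graph and hence the optimal independent set value are preserved; the input size $n$ and the bounding box size $N$ grow by at most a constant factor. Consequently a dynamic algorithm for $d$-dimensional hypercubes with the claimed update time solves the $d=2$ case with the same update time up to constants, and the contradiction of the previous paragraph applies verbatim.

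The only real obstacle is the arithmetic check that the derived offline running time actually lies in Marx's hard regime; this amounts to confirming that $1+O((1/\varepsilon)^{1-\delta}) = O((1/\varepsilon)^{1-\delta})$ (for small enough $\varepsilon$) and that $(1/\varepsilon)^{1-\delta} = o(1/\varepsilon)$ (which is the key role of the slack $\delta>0$ in the statement). Both are immediate, so the reduction is essentially mechanical once the correct offline hardness result is invoked.
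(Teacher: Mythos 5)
Your proposal is correct and follows essentially the same strategy as the paper: for $d=2$, simulate the dynamic algorithm by inserting the $n$ input squares one by one to obtain an offline $(1+\epsilon)$-approximation with running time $n^{O((1/\epsilon)^{1-\delta})}$, contradicting Marx's ETH-hardness for independent set of unit squares; for $d>2$, lift each square to a $d$-dimensional box by padding dimensions $3,\dots,d$ with the fixed interval $[0,1]$, preserving the intersection graph, and invoke the $d=2$ argument. Your spelled-out running-time bookkeeping and the explicit statement of Marx's $f(1/\epsilon)\cdot n^{o(1/\epsilon)}$ lower bound only add detail to what the paper states more tersely; there is no substantive difference in the argument.
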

\begin{proof}
If we had such an algorithm for $d=2$ we could use it in order to
solve maximum weight independent set of (unweighted) squares by simply
inserting all $n$ squares in the input one by one which would need
$n^{O((1/\epsilon)^{1-\delta})}$ time in total. However, there is
no PTAS for independent set of squares with running time $n^{O((1/\epsilon)^{1-\delta})}$,
unless the ETH fails~\cite{marx2007optimality}. If we had such an
algorithm for any $d>2$ then, given an instance of maximum independent
set of squares, we could construct an equivalent instance of maximum
independent set of $d$-dimensional hypercubes, where for each dimension
$d'\in\{3,\dots,d\}$ we define that $x_{i}^{(d')}=0$ and $y_{i}^{(d')}=1$
for each hypercube $C_{i}$. Then by the same reasoning we would get
a PTAS for independent set of unweighted squares with a running time
of $n^{O((1/\epsilon)^{1-\delta})}$, contradicting~\cite{marx2007optimality}.
\end{proof}

We also provide a cell-probe lower bound showing that any algorithm solving the
(exact) dynamic maximum weighted independent set must have an amortized update
and query time $\Omega(\lg N/\lg \lg N$). Note that the theorem holds for the
version of the problem where a query returns the weight of independent set (and
not the explicit independent set).
\begin{thm}
	Any algorithm for the (exact) dynamic weighted maximum weighted independent set of
	intervals problem, where the intervals are contained in the space $[0,N]$ and
	have size at least 1, the amortized update and query time is
   	$\Omega(\log N/\log \log N)$.
\end{thm}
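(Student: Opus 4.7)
The plan is to reduce the dynamic prefix sums problem to dynamic weighted maximum independent set of intervals and then invoke the cell-probe lower bound of P\v{a}tra\c{s}cu and Demaine. Recall that in dynamic prefix sums one maintains an array $A[1..n]$ of $\Theta(\log n)$-bit nonnegative integers under updates that reassign some $A[i]$ and queries that return $\sum_{j\le k}A[j]$ for a given index $k$; any data structure for this problem requires $\Omega(\log n/\log\log n)$ amortized time per operation in the cell-probe model with $\Theta(\log n)$-bit words.

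The reduction. Given a prefix sums instance of size $n$, set $N=2n+2$ and maintain, for each $i\in[n]$, a single unit-length interval $I_i=(2i-1,2i)$ whose weight equals $A[i]+1$. The intervals $I_1,\dots,I_n$ are pairwise disjoint, so every feasible solution can include all of them and in particular $w(\mathrm{OPT})=n+\sum_{i=1}^n A[i]$. A prefix-sums update $A[i]\leftarrow v$ is simulated by deleting the current $I_i$ and inserting a new $I_i$ of weight $v+1$, i.e., by $O(1)$ dynamic IS updates. To answer a query $\sum_{j\le k}A[j]$, insert a blocker interval $B_k=(2k,2n+1)$ whose weight $W$ is strictly larger than $\sum_i (A[i]+1)$; since $W$ dominates the combined weight of all $I_j$, the new optimum must contain $B_k$, and among the $I_j$ it can include only those with $j\le k$ (the others intersect $B_k$). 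Hence the returned MIS weight equals $W+k+\sum_{j\le k}A[j]$, from which the desired prefix sum is recovered by subtracting the a priori known value $W+k$; finally, delete $B_k$ to restore the instance. Each prefix-sums operation is therefore implemented by $O(1)$ dynamic IS updates and one weight query.

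Putting this together, if the dynamic weighted MIS of intervals problem admitted amortized update and query time $T(N)$, then prefix sums on $n$ elements would admit amortized time $O(T(2n+2))$ per operation. Combining with the $\Omega(\log n/\log\log n)$ lower bound of P\v{a}tra\c{s}cu--Demaine and using $N=\Theta(n)$ (so $\log N=\Theta(\log n)$ and $\log\log N=\Theta(\log\log n)$) yields $T(N)=\Omega(\log N/\log\log N)$, as claimed.

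The only subtle point is to verify that the reduction really only incurs $O(1)$ overhead and that a weight query, rather than an enumeration of the MIS, is sufficient; this is exactly the caveat already highlighted in the theorem statement. It is also worth noting that all intervals constructed have length at least $1$ and lie in $[0,N]$, and the maximum weight $W$ is polynomial in $n$ (hence in $N$), so the instance produced by the reduction respects the problem's model. I expect no further obstacles: the combinatorial core of the argument is the blocker gadget, whose correctness follows directly from the choice $W>\sum_i(A[i]+1)$.
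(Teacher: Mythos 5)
Your proof is correct and follows essentially the same strategy as the paper: reduce from dynamic prefix sums by placing one small interval per array index and, at query time, inserting a temporary ``blocker'' interval of dominating weight that forces the optimum to consist of the blocker plus exactly the small intervals with indices up to the query point, then reading off the prefix sum from the reported optimal weight. The difference is cosmetic: the paper works with the Fredman--Saks parity prefix-sum problem (entries in $\{0,1\}$), encoding $A[i]$ by the presence or absence of a unit-weight interval $[i-1,i)$, while you work with $\Theta(\log n)$-bit entries (citing P\v{a}tra\c{s}cu--Demaine, which would in fact give you the stronger $\Omega(\log n)$, not just $\Omega(\log n/\log\log n)$) and encode $A[i]$ in the weight of a fixed interval $I_i$. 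Both gadgets are valid; the paper's is slightly more economical since it only invokes a parity query, whereas yours extracts the full prefix sum. One small presentational nit: you could just as well cite the weaker Fredman--Saks bound, since that is all the theorem needs, and then the mismatch between the bound you state and the reference you give disappears.
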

\begin{proof}
	We show the theorem by a reduction from the dynamic prefix sum problem.  In
	the dynamic prefix sum problem we are given an array $A$ of size $N$, which
	is initialized by zeros, and there are two operations:
	(1) Update($i,\Delta$): Set $A[i]=\Delta$, where $\Delta \in \{0,1\}$.
	(2) Query($i$): Return $(\sum_{1 \le j \le i} A[j]) \mod 2$.

	Fredman and Saks~\cite{FredmanS89} showed that there exists a sequence of
	$O(N)$ updates and $O(N)$ queries that takes time $\Omega(N \log N/\log \log N)$
	to process in the cell probe model with word size $\omega = \Omega(\lg N)$.
	We now show how to implement the dynamic partial sum problem using a dynamic
	(exact) weighted independent set data structure such that each operation in
	the dynamic partial sum problem requires a constant number of operations in
	the independent set data structure. This proves the theorem.

	Initially there is no interval. To implement an Update($i,\Delta$)
	operation we do the following. If $\Delta = 0$, remove any interval
	$[i-1,i)$ if there exists one; if $\Delta = 1$, insert the interval
	$[i-1,i)$. To implement a Query($i$) operation, insert an interval $[i,N]$
	of weight $N$ and ask an query in the resulting set of intervals. As none of
	the intervals of size 1 overlap, the answer $k$ will be of value $N+x$,
	where $x$ equals $\sum_{1 \le j \le i} A[j]$. Returning $x \mod 2$ provides
	a correct answer to the partial sum query. Before finishing the query
	procedure, we delete the previously inserted interval $[i,N]$ with
	weight~$N$.
\end{proof}

\section{Omitted Proofs}
\label{sec:omitted}

\subsection{Proof of Lemma~\ref{lem:offset}}

To construct $\OPT'$, the intuition is that we first delete some
of the hypercubes from $\OPT$ while losing only a total weight of
$\epsilon\cdot w(\OPT)$. Then we group the remaining hypercubes into
groups according to their sizes such that the sizes of two hypercubes
in different groups differ by a factor $1/\epsilon^{2}$ and within
each group the sizes differ by a factor $1/\epsilon^{O(1/\epsilon)}$.
Then we define the set $\off(\epsilon)$ of size $O((d/\epsilon)^{1/\epsilon})$
and prove that if we draw the grid offset $a$ from $\off(\epsilon)$
uniformly at random that only a small fraction of the remaining hypercubes
need to be deleted.

Let $K:=\log2d/\epsilon$. Let $a'\in\{0,\dots,1/\epsilon-1\}$ be an
offset to be defined later. We delete each hypercube $C_{i}\in\OPT$
such that $\ell(C_{i})\in\{a'K+k\cdot K/\epsilon+b|k\in\Z,b\in\{0,\dots,K-1\}\}$.
Let $\OPT''$ denote the remaining hypercubes. Note that for each
hypercube $C_{i}\in\OPT$ there exists only one offset $a'\in\{0,\dots,1/\epsilon-1\}$
such that $C_{i}$ is deleted. Hence, there is an offset $a'\in\{0,\dots,1/\epsilon-1\}$
such that $w(\OPT'')\ge(1-\epsilon)w(\OPT)$.

We group the hypercubes in $\OPT''$ into groups $\OPT''_{j}$ where
for each $j\in\Z$ we define $\OPT''_{j}$ to be all hypercubes with
levels in the range $a'K+j\cdot K/\epsilon,\dots,a'K+(j+1)\cdot K/\epsilon-K+1$,
i.e., $\OPT''_{j}:=\{C_{i}|\ell(C_{i})\in\{a'K+j\cdot K/\epsilon+b|b\in\{0,\dots,K/\epsilon-K+1\}\}\}$.
Observe that if $C_{i}\in\OPT''_{j}$ and $C_{i'}\in\OPT''_{j'}$
for $j<j'$ then $\ell(C_{i})\le\ell(C_{i'})-K+1$.

Imagine first that we had only one non-empty group $\OPT''_{j}$.
Note that for each hypercube $C_{i}\in\OPT''_{j}$ we have that $s_{i}\ge\epsilon N/(d2^{\ell(C_{i})-1})\ge\epsilon N/(d2^{a'K+(j+1)\cdot K/\epsilon-K+1})=:\delta_{j}$.
Also, if there is a grid cell $Q$ with $C_{i}\in\C'(Q)$ such that
$\ell(Q)=\ell(C_{i})$ then $Q$ has size $N/2^{\ell(C_{i})-1}=N/2^{\ell(Q)-1}\le N/2^{a'K+j\cdot K/\epsilon}=:\Delta_{j}$
in each dimension. Note that $\Delta_{j}/\delta_{j}=d2^{K/\epsilon-K+1}/\epsilon$.
Suppose that we draw a value $r$ uniformly at random from $\{0,\dots,d2^{K/\epsilon-K+1}/\epsilon-1\}$
and then use the offset $a'_{r}:=r\cdot\delta_{j}$. We say that a
hypercube $C_{i}\in\OPT''_{j}$ \emph{survives} if $C_{i}\in\C'(Q)$
for some cell $Q$. We claim that the probability that $C_{i}$ survives
is at least $1-O(\epsilon)$. To see this, note that $C_{i}$ survives
if and only if for no dimension $d'$ there is a value $k\in\Z$ such
that $a'_{r}+k\cdot N/2^{\ell(C_{i})-1}\in(x_{i}^{(d')},y_{i}^{(d')})$.
Note that $N/2^{\ell(C_{i})-1}$ is a multiple of $\delta_{j}$ and
$s_{i}<2\epsilon N/(d2^{\ell(C_{i})-1})$. Hence, for one dimension
$d'\in[d]$ there are at most $O(\epsilon)\cdot2^{K/\epsilon-K+1}/\epsilon$
values for $r\in\{0,\dots,d2^{K/\epsilon-K+1}/\epsilon-1\}$ such
that $a'_{r}+k\cdot N/2^{\ell(C_{i})-1}\in(x_{i}^{(d')},y_{i}^{(d')})$.
Therefore, there are at most $O(\epsilon d)\cdot2^{K/\epsilon-K+1}/\epsilon$
values for $r\in\{0,\dots,d2^{K/\epsilon-K+1}/\epsilon-1\}$ such
that $C_{i}$ does not survive.

Suppose that we draw $r$ uniformly at random from $\{0,\dots,d2^{K/\epsilon-K+1}/\epsilon-1\}$
and define the offset $a_{r}:=r\cdot\sum_{\hat{j}=-1}^{\log N}\delta_{\hat{j}}$.
We claim that still each $C_{i}\in\OPT'$ survives with probability
at least $1-O(\epsilon)$. Let $C_{i}\in\OPT''_{j}$ for some $j$.
Note that if $j'<j$ then $\delta_{j'}$ is a multiple of $N/2^{\ell(C_{i})-1}$
since

\[
\begin{aligned}\frac{\delta_{j'}}{N/2^{\ell(C_{i})-1}} & =\frac{\epsilon N/(d2^{a'K+(j'+1)\cdot K/\epsilon-K+1})}{N/2^{\ell(C_{i})-1}}\\
 & =2\epsilon\frac{2^{\ell(C_{i})-1}}{d2^{a'K+(j'+1)\cdot K/\epsilon-K+1}}\\
 & \ge2\epsilon\frac{2^{a'K+j\cdot K/\epsilon-1}}{d2^{a'K+(j'+1)\cdot K/\epsilon-K+1}}\\
 & \ge2\epsilon2^{K-2}/d\\
 & =1
\end{aligned}
\]
where we use that $1/\epsilon$ is a power of 2, $\ell(C_{i})\ge a'K+j\cdot K/\epsilon$,
and $2^{K}=2d/\epsilon$. Hence, $C_{i}$ survives for the offset
$a_{r}$ if and only if $C_{i}$ survives for the offset $r\cdot\sum_{\hat{j}=j}^{\log N}\delta_{\hat{j}}$.
On the other hand, note that

\[
\begin{alignedat}{1}\sum_{\hat{j}=j+1}^{\log N}\delta_{\hat{j}} & \le\delta_{j}\cdot\sum_{\hat{j}=j+1}^{\log N}\frac{1}{2^{(\hat{j}-j)\cdot K/\epsilon}}\\
 & \le\delta_{j}\cdot\sum_{k=1}^{\infty}\frac{1}{2^{k\cdot K/\epsilon}}\\
 & \le\delta_{j}\cdot\sum_{k=1}^{\infty}\left(\frac{\epsilon}{2d}\right)^{k/\epsilon}\\
 & =O(\epsilon/d)\delta_{j}\\
 & \le O(\epsilon/d)s_{i}.
\end{alignedat}
\]
Thus, for each dimension $d'\in[d]$ there are again at most most
$O(\epsilon)\cdot2^{K/\epsilon-K+1}/\epsilon$ values for $r\in\{0,\dots,d2^{K/\epsilon-K+1}/\epsilon-1\}$
such that $a'_{r}+k\cdot N/2^{\ell(C_{i})-1}\in(x_{i}^{(d')},y_{i}^{(d')})$.
Therefore, we define $\off(\epsilon):=\{a_{r}|r\in\{0,\dots,d2^{K/\epsilon-K+1}/\epsilon-1\}\}$
and observe that if we draw $r$ uniformly at random from $\off(\epsilon)$
then $C_{i}$ survives with probability at least $1-O(\epsilon)$.
Denote by $\OPT'$ the resulting solution.%

\subsection{Proof of Lemma~\ref{lem:properties-grid}}
	The first claim follows immediately from the definition of the $Q_{\ell,k}$.
	The second claim follows from the fact that the volume of a hypercube
	$C$ from level $\ell(Q)$ is $(\varepsilon N/2^{\ell(Q)-1})^{d}$;
	dividing the volume of $Q$ by the minimum volume of hypercubes in
	$\C'(Q)$ proves the claim. The third claim follows from the fact
	that for each level $\ell$, the intersection of any two cells of
	$\G_{\ell}$ has zero volume. The fourth claim holds since the hierarchy has depth $\log N$ and
	the grid cells at each level are disjoint. 

\subsection{Proof of Lemma~\ref{lem:data-structure}}
	We use a data structure for range counting/reporting as a subroutine.
	For range counting/reporting one seeks to construct a data structure
	that obtains as input a set of points $x_{1},\dots,x_{n}\in\R^{d'}$
	with weights $w_{1},\dots,w_{n}$ and the data structure and supports
	the following operations: 
	\begin{itemize}
		\item inserting/deleting points 
		\item answering \emph{range reporting queries}: Given a hyperrectangle $A=[a_{1},b_{1}]\times\dots\times[a_{d},b_{d}]$,
		report all points $x_{i}$ with $x_{i}\in A$ 
		\item answering \emph{range counting queries}: Given a hyperrectangle $A$, return
		the sum of the weights of all points in $A$, i.e., return $\sum_{i:x_{i}\in A}w_{i}$ 
	\end{itemize}
	As pointed out in Lee and Preparata~\cite{lee84computational}, one
	can combine the algorithms of Willard and Lueker~\cite{willard85adding}
	and Edelsbrunner~\cite{edelsbrunner1981note} to obtain a data structure
	for the range counting/reporting problem with the following guarantees.
	Points can be inserted and deleted in worst-case update time $O(\lg^{d'}n)$.
	Range counting queries can be answered in worst-case query time $O(\lg^{d'-1}n)$
	and range reporting queries can be answered in worst-case query time
	$O(\lg^{d'-1}n+F)$, where $F$ is the number of points which shall
	be reported. Also, the data structure can be initialized with an empty
	set of points in time $O(1)$.

	To construct the data structure that is claimed by the lemma, we use two range
	counting/reporting data structures, one with $d':=2d$ and
	another one with $d' = 2d+1$. When our data structure is
	initialized, the set of hypercubes $\C'$ is empty and we initialize both
	the range reporting data structures with an empty set of points in time
	$O(1)$. Also, we initialize a counter for $|\C'|$ that we update
	in worst-case update time $O(1)$ whenever a hypercube is added or
	deleted. In this way, we can report whether $\C'=\emptyset$ in time
	$O(1)$.

	Now suppose a hypercube
	$C=(x_{C}^{(1)},y_{C}^{(1)})\times\dots\times(x_{C}^{(d)},y_{C}^{(d)})$ of size
	$s$ inserted or deleted from the data structure. Then we insert/delete
	the point $C'=(x_{C}^{(1)},y_{C}^{(1)},\dots,x_{C}^{(d)},y_{C}^{(d)})$
	into the range reporting data structure with $d' = 2d$ and we insert/delete the point
	$C''=(x_{C}^{(1)},y_{C}^{(1)},\dots,x_{C}^{(d)},y_{C}^{(d)},s)$ into the data
	structure with $d'=2d+1$ (recall that $s = y_C^{(1)} - x_C^{(1)}$); both
	points are inserted with weight $w(C)$.
	Both of these updates can be done in worst-case update time $O(\lg^{2d+1}|\C'|)$.

	Now suppose the data structure obtains a hyperrectangle $B=(x_{B}^{(1)},y_{B}^{(1)})\times\dots\times(x_{B}^{(d)},y_{B}^{(d)})$
	and the data structure needs to report a hypercube $C_i \in \C'$ with
	$C_i \subseteq B$.
	Then we query the range counting data structure with $d' = 2d$ for the range 
	\begin{align}
	\label{eq:range}
		B'=\prod_{j=1}^{d}(x_{B}^{(j)},y_{B}^{(j)})\times(x_{B}^{(j)},y_{B}^{(j)}).
	\end{align}
	Given a hypercube $C_{i}=(x_{i}^{(1)},y_{i}^{(1)})\times\dots\times(x_{i}^{(d)},y_{i}^{(d)})$
	then $C_{i}\subseteq B$ \emph{if and only if} its corresponding point
	$p_{i}=(x_{i}^{(1)},y_{i}^{(1)},\dots,x_{i}^{(d)},y_{i}^{(d)})$ satisfies
	$p_{i}\in B'$. This follows from the fact that $(x_{C}^{(j)},y_{C}^{(j)})\subseteq(x_{B}^{(j)},y_{B}^{(j)})$
	if and only if $x_{B}^{(j)}\leq x_{C}^{(j)}\leq y_{B}^{(j)}$ and
	$x_{B}^{(j)}\leq y_{C}^{(j)}\leq y_{B}^{(j)}$ for each $j\in[d]$.
	Since $p_{i}\in B'$ asserts that the above property holds for all
	$j\in[d]$, this implies that $p_{i}\in B'$ iff $C_{i}\subseteq B$.
	This implies the correctness of the query operation which can be performed
	with worst-case query time $O(\lg^{2d-1}|\C'|)$. 

	Now suppose the data structure obtains a hyperrectangle $B$ as above and is
	asked to return the smallest hypercube $C_i \in \C'$ with $C_i \subseteq B$.
	For this purpose, the data structure also maintains a list $L$ of all
	hypercubes in $\C'$ ordered by their sizes; note that $L$ can be maintained
	with worst-case update time $O(\lg |\C'|)$ whenever a hypercube is inserted into
	or removed from $\C'$ and that this time is subsumed by the update time for
	the range reporting data structures. To answer the query, the algorithm uses
	binary search on $L'$ as follows. It initializes $L' = L$. It sets $k$ to
	the median index of $L'$ and sets $s_k$ to the size of the hypercube
	$L'(k)$, where $L'(k)$ is the $k$-th hypercube in $L'$.
	Now the algorithm uses the range reporting data structure for $d' =
	2d+1$ to check if there exists a hypercube $C_i$ which is contained in the
	range $B' \times [0,s_k]$, where $B'$ is as defined in
	Equation~\eqref{eq:range}. This is done exactly as defined in the previous
	paragraph but with the minor modification that now we have one additional
	dimension which ensures that the size of any returned hypercube is in the
	range $[0,s_k]$.  If such a hypercube exists, the binary search recurses on
	the sublist $L' = L' \cap \{L(0),\dots,L(k)\}$ and, otherwise, it
	recurses on the sublist $L' = L' \cap \{L(k),\dots,L(|L|)\}$.  After $O(\lg n)$
	iterations the algorithm has found a hypercube with the desired properties
	(if one exists).  Since the binary search makes $O(\lg n)$ queries and each
	of these queries requires time $O(\lg^{2d} n)$, finding the desired
	hypercube takes worst-case time $O(\lg^{2d+1}n)$. Also, if no such hypercube
	exists, the algorithm will assert this.

	Now we prove the fourth point of the lemma. We use almost the same procedure
	as in the previous paragraph. We assume that the data structure also
	maintains an ordered list $\L$ of all hypercubes
	$C_i=(x_i^{(1)},y_i^{(1)})\in\C'$ ordered by their $y^{(1)}$-coordinates;
	as argued before, maintaining $\L$ does not increase the update time of the
	data structure.  To answer the query, the algorithm finds the
	smallest $y^{(1)}$-coordinate in $\L$ with $y_i^{(1)}\geq t$; suppose the
	corresponding hypercube has index $j$ in $\L$.  Now the algorithm uses
	binary search on the sublist $\L'=\{\L(j),\dots,\L(|\L|)\}$ as follows. Let
	$k$ be the median index of $\L'$ and let $y_k^{(1)}$ be the
	$y^{(1)}$-coordinate of $\L'(k)$. Now the algorithm queries the range
	reporting data structure for $d' = 2$ if there exists a hypercube in $\C'$
	in the range $[t,y_k^{(1)})$.  If this is the case, the binary search
	recurses on $\L' = \L' \cap \{\L(j),\dots,\L(k)\}$ and, otherwise, it
	recurses on the sublist $\L' = \L' \cap \{\L(k),\dots,\L(|\L|)\}$.  Clearly,
	after $O(\lg n)$ iterations the algorithm has found a hypercube with the
	desired properties.  Since the binary search makes $O(\lg n)$ queries and
	each of these queries requires time $O(\lg n)$ by the third point of the
	lemma, finding the desired hypercube takes time $O(\lg^{2}n)$. Also, if no
	such hypercube exists, the algorithm will assert this. 

\subsection{Proof of Lemma~\ref{lem:range-counting-data-structure}}
	The claims of the first three points follow immediately from the results
	reported in~\cite{lee84computational,willard85adding,edelsbrunner1981note}.
	The fourth claim be achieved in as follows. For each dimension $d'\in[d]$,
	we maintain a separate binary search tree of points $P_{d'}$ in which the
	points ordered by their $d'$-th coordinate. Whenever a point is deleted or
	inserted into $P$, we update all of theses search trees $P_{d'}$
	accordingly. When we obtain a query as presented in the fourth point of the
	lemma, then we use binary search to find the median of the $d'$-th
	coordinates in the range $[x,z]$. Note that this value $y$ satisfies the
	claim because the interval $[x,z]$ is closed and the intervals $(x,y)$ and
	$(y,z)$ are open.

\subsection{Proof of Lemma~\ref{lem:construct-aux-grid}}
	For each dimension $d'\in[d]$ we do the following. First, we define
	$z_{1}^{(d')}:=x_{Q}^{(d')}$.
	Then for $j=1,2,\dots$, we perform binary
	search in order to find the largest value $z\le y_{Q}^{(d')}$ such
	that the total weight of the points in
	$\left(\R^{d'-1}\times(z_{j}^{(d')},z)\times\R^{d-d'}\right)\cap P(Q)$
	is at most $\epsilon^{d+2}\tilde{W}/(d^{d+1} \log N)$. This is done as follows.
	Using Property~4 of Lemma~\ref{lem:range-counting-data-structure} on the
	interval $(z_j^{(d')},y_Q^{(d')})$, we find a candidate
	value $y$ for $z$. Then we use Property~3 of Lemma~\ref{lem:range-counting-data-structure}
	to query the weight of the points in
	$\left(\R^{d'-1}\times(z_{j}^{(d')},y)\times\R^{d-d'}\right)\cap P(Q)$.
	Depending on whether this is less than or more than
	$\epsilon^{d+2}\tilde{W}/(d^{d+1}\log N)$, we recurse on the intervals 
	$(y,y_Q^{(d')})$ or $(z_j^{(d')},y)$, respectively.
	When the recursion stops, we set $z_{j+1}^{(d')}:=z$.
	Note that this requires $O(\lg |P(Q)|)$ recursion steps and each recursion step
	takes time $O(\lg^{d-1} |P(Q)| + \lg |P(Q)|) = O(\lg^{d-1} |P(Q)|)$.

	Note that the third property of the lemma holds since for each $j=2,3,\dots$, we
	picked the maximum $z$ such that $(z_j^{(d')},z)$ has weight at most 
	$\epsilon^{d+2}\tilde{W}/(d^{d+1} \log N)$. Hence, there can only be
	$d^{d+1} \lg N / (\epsilon^{d+2}\tilde{W})$ iterations for $j$.

	Doing this for all dimensions $d'\in[d]$ takes
	total time $O(d\cdot\log |P(Q)|\cdot\log^{d-1}|P(Q)|\cdot d^{d+1}\log N/\epsilon^{d+2})$.

\subsection{Proof of Lemma~\ref{lem:iterations}}
	The first claim holds if we can show that for each of the $\lg W$ weight classes
	$\C_{k}$, we can add at most $(d/\epsilon)^{d}$ hypercubes to $\bar{\C}(Q)$ in all
	iterations of the algorithm. Indeed, this is similar to the statement of
	Lemma~\ref{lem:properties-grid}. Note, however, that we cannot apply the lemma
	directly because the algorithm does not check if two of the added hypercubes from the same
	weight class intersect or not.
	However, when we add a hypercube $C_{i} \in \C_k'$ to $\bar{\C}(Q)$ then we added its
	vertices to $P(Q)$ with weight $w_{i}$ and we did not remove these points in case
	that we removed $C_{i}$ from $\bar{\C}(Q)$ later on.  Now consider the case when we select
	another hypercube $C_{i'} \in \C_k'$ and suppose that $C_i$ and $C_{i'}$ intersect.
	Then for the respective set $A$ for which the query returned $C_{i'}$, it cannot
	hold that $(1+\epsilon)^{k}\ge2w(P(Q)\cap A)$ (since one vertex of $C_i$ must be
	in $A$ and because $C_i$ and $C_{i'}$ are from $\C_k(Q)$ and thus their weights
	can only differ by a
	$(1+\varepsilon)$-factor). Thus, the query cannot have returned $C_{i'}$.
	Thus for each $k$, all hypercubes which are picked from $\C_k$ do
	not intersect. Altogether, by Lemma~\ref{lem:properties-grid}
	there are at most $\left(\frac{d}{\epsilon}\right)^{d} \log W$ iterations in which we select
	an additional hypercube. 

	Initializing the auxiliary grid due to Lemma~\ref{lem:construct-aux-grid}
	takes time
	$O\left(\left(\frac{d}{\varepsilon}\right)^{d+2} \cdot\log^{d}n\cdot(\log N)\right)$.
	Hence, we need total time 
	$$O\left(
		\left(\frac{d}{\varepsilon}\right)^{d+2}\cdot\log^{d}n\cdot(\log N)
		+((\log_{1+\epsilon}W)/\epsilon^{d}+1)\left(\left(\frac{d}{\epsilon}\right)^{d+2}(\log N) +1\right)^{2d}\cdot\log^{2d-1}n
	\right)$$
	to compute $\bar{\C}(Q)$.%

\end{document}